\title{Static Analysis for Logic-Based Dynamic Programs\footnote{The first and third author acknowledge the financial support by DFG grant SCHW 678/6-1.}}
\author{Thomas Schwentick}
\author{Nils Vortmeier}
\author{Thomas Zeume}
\affil{TU Dortmund University\\
  Germany\\
  \texttt{\{thomas.schwentick, nils.vortmeier, thomas.zeume\}@tu-dortmund.de}}
  \authorrunning{T. Schwentick, N. Vortmeier, T. Zeume} %
\subjclass{F.4.1. Mathematical Logic}%
\keywords{Dynamic descriptive complexity, algorithmic problems, emptiness, history independence, consistency}%
\newif\ifcomments
\newif\ifchanges
\newcolumntype{C}[1]{>{\centering\let\newline\\\arraybackslash\hspace{0pt}}m{#1}}
\begin{document}
\makeatletter{}%
\makeatletter{}%
\newcommand  {\myclass} [1]  {\ensuremath{\textsc{#1}}}

\newcommand{\StaClass}[1]{\myclass{#1}\xspace}

\newcommand{\DynClass}[1]{\myclass{Dyn#1}\xspace}
\newcommand{\dDynClass}[1]{\myclass{$\Delta$-Dyn#1}\xspace}

\newcommand  {\myproblem} [1] {\textsc{#1}}

\newcommand{\problemIndent}{\hspace{5mm}}
\newcommand  {\problemdescr} [3] {
    \vspace{3mm}
    \def\Name{#1}
    \def\Input{#2}
    \def\Question{#3}
     \problemIndent\begin{tabular}{r p{\columnWidth}r}%
      \textit{Problem:} & \myproblem{\Name} \\
      \textit{Input:} & \Input \\
      \textit{Question:} & \Question
     \end{tabular}
    \vspace{3mm}
    }

\newcommand  {\querydescr} [3] {
\vspace{3mm}
\def\Name{#1}
\def\Input{#2}
\def\Question{#3}
  \problemIndent\begin{tabular}{r p{\columnWidth}r}%
  \textit{Query:} & \myproblem{\Name} \\
  \textit{Input:} & \Input \\
  \textit{Question:} & \Question
  \end{tabular}
\vspace{3mm}
}

\newcommand  {\dynproblemdescr} [4] {
    \vspace{3mm}
    \def\Name{#1}
    \def\Input{#2}
    \def\Updates{#3}  
    \def\Question{#4}
    \problemIndent\begin{tabular}{r p{\columnWidth}r}%
      \textit{Query:} & \myproblem{\Name} \\
      \textit{Input:} & \Input \\
      \textit{Question:} & \Question
    \end{tabular}
    \vspace{3mm}
}
\newcommand{\dynProbDescr}[4]{\dynproblemdescr{#1}{#2}{#3}{#4}}

\newcommand  {\problem}[1] {\myproblem{#1}}

\newcommand{\dynProb}[1] {\myproblem{Dyn(#1)}}
\newcommand{\class}{\calC}

\newcommand  {\TIME}    {\myclass{TIME}}
\newcommand  {\DTIME}   {\myclass{DTIME}}
\newcommand  {\NTIME}   {\myclass{NTIME}}
\newcommand  {\ATIME}   {\myclass{ATIME}}
\newcommand  {\SPACE}   {\myclass{SPACE}}
\newcommand  {\DSPACE}   {\myclass{DSPACE}}
\newcommand  {\NSPACE}  {\myclass{NSPACE}}
\newcommand  {\coNSPACE}        {\myclass{coNSPACE}}

\newcommand     {\LOGCFL}     {\myclass{LOGCFL}}
\newcommand     {\LOGDCFL}     {\myclass{LOGDCFL}}
\newcommand     {\LOGSPACE}     {\myclass{LOGSPACE}}
\newcommand     {\NLOGSPACE}     {\myclass{NLOGSPACE}}
\newcommand     {\classL}   {\myclass{L}}
\newcommand     {\NL}   {\myclass{NL}}
\newcommand     {\coNL}   {\myclass{coNL}}
\renewcommand   {\P}    {\myclass{P}}
\newcommand     {\myP}    {\myclass{P}}
\newcommand     {\PTIME}    {\myclass{PTIME}}
\newcommand     {\NP}   {\myclass{NP}}
\newcommand     {\NPC}   {\myclass{NPC}}
\newcommand     {\PH}   {\myclass{PH}}
\newcommand     {\coNP} {\myclass{coNP}}
\newcommand     {\NPSPACE}      {\myclass{NPSPACE}}
\newcommand     {\PSPACE}       {\myclass{PSPACE}}
\newcommand     {\IP}   {\myclass{IP}}
\newcommand     {\POLYLOGSPACE} {\myclass{POLYLOGSPACE}}
\newcommand     {\DET}  {\myclass{DET}}
\newcommand     {\EXP}  {\myclass{EXP}}
\newcommand     {\NEXP}  {\myclass{NEXP}}
\newcommand     {\EXPTIME}  {\myclass{EXPTIME}}
\newcommand     {\TWOEXPTIME}  {\myclass{2-EXPTIME}}
\newcommand     {\TWOEXP}  {\myclass{2-EXP}}
\newcommand     {\NEXPTIME}  {\myclass{NEXPTIME}}
\newcommand     {\coNEXPTIME}  {\myclass{coNEXPTIME}}
\newcommand     {\EXPSPACE}  {\myclass{EXPSPACE}}
\newcommand     {\RP}   {\myclass{RP}}
\newcommand     {\RL}   {\myclass{RL}}
\newcommand     {\coRP} {\myclass{coRP}}
\newcommand     {\ZPP}  {\myclass{ZPP}}
\newcommand     {\BPP}  {\myclass{BPP}}
\newcommand     {\PP}   {\myclass{PP}}
\newcommand     {\NC}   {\myclass{NC}}
\newcommand     {\SAC}   {\myclass{SAC}}
\newcommand     {\ACC}   {\myclass{ACC}}
\newcommand     {\tc}   {\myclass{TC}}   %
\newcommand     {\PPoly}{\myclass{\mbox{P}/\mbox{Poly}}} %

\newcommand     {\FOarb}   {\myclass{FO(arb)}}

\newcommand     {\NLIN}   {\myclass{NLIN}}
\newcommand     {\DLIN}   {\myclass{DLIN}}

\newcommand  {\APTIME}   {\myclass{APTIME}}
\newcommand  {\ALOGSPACE}   {\myclass{ALOGSPACE}}

\newcommand{\FO}{\StaClass{FO}}
\newcommand{\MSO}[1][\quant]{\StaClass{MSO}}
\newcommand{\EMSO}{\StaClass{$\exists$MSO}}
\newcommand{\QFO}[1][\quant]{\StaClass{\ensuremath{#1}FO}}
\newcommand{\cQFO}[1][\quant]{\StaClass{\ensuremath{\overline{#1}}FO}}
\newcommand{\EFO}{\QFO[\exists^*]}
\newcommand{\AFO}{\QFO[\forall^*]}
\newcommand{\AEFO}{\StaClass{$\forall/\exists$FO}}
\newcommand{\CQ}[1][]{\StaClass{CQ}}
\newcommand{\UCQ}[1][]{\StaClass{UCQ}}
\newcommand{\CQneg}[1][]{\StaClass{CQ\ensuremath{^{\mneg}}}}
\newcommand{\UCQneg}[1][]{\StaClass{UCQ\ensuremath{^{\mneg}}}}
\newcommand{\Prop}{\StaClass{Prop}}
\newcommand{\QF}{\StaClass{QF}}
\newcommand{\PropCQ}{\StaClass{PropCQ}}
\newcommand{\PropUCQ}{\StaClass{PropUCQ}}
\newcommand{\PropCQneg}{\StaClass{PropCQ{\ensuremath{^{\mneg}}}}}
\newcommand{\PropUCQneg}{\StaClass{PropUCQ{\ensuremath{^{\mneg}}}}}

\newcommand{\mneg}{\neg} %

\newcommand{\DynTC}{\DynClass{TC}}

\newcommand{\DynProp}{\DynClass{Prop}}
\newcommand{\DynPropIA}[2]{\DynClass{Prop}(#1\text{-in},#2\text{-aux})}
\newcommand{\DynPropA}[1]{\DynClass{Prop}(#1\text{-aux})}
\newcommand{\DynPropI}[1]{\DynClass{Prop}(#1\text{-in})}
\newcommand{\DynProj}{\DynClass{Projections}}
\newcommand{\DynQF}{\DynClass{QF}}
\newcommand{\DynFO}{\DynClass{FO}}
\newcommand{\DIDynFO}{\myclass{DI-DynFO}\xspace}
\newcommand{\DynFOIA}[2]{\DynClass{FO}(#1\text{-in},#2\text{-aux})}
\newcommand{\DynFOA}[1]{\DynClass{FO}(#1\text{-aux})}
\newcommand{\DynFOI}[1]{\DynClass{FO}(#1\text{-in})}
\newcommand{\DynFOpos}{\DynClass{FO$^{\wedge, \vee}$}}
\newcommand{\DynFOand}{\DynClass{FO$^{\wedge}$}}

\newcommand{\DynC}{\DynClass{$\class$}}
\newcommand{\DynUCQ}{\DynClass{UCQ}}
\newcommand{\DynCQ}{\DynClass{CQ}}
\newcommand{\DynUCQneg}{\DynClass{UCQ$^\mneg$}}
\newcommand{\DynCQneg}{\DynClass{CQ$^\mneg$}}
\newcommand{\DynCQPM}{\DynCQneg}
\newcommand{\DyncQFO}{\DynClass{$\cquant$FO}}

\newcommand{\DynQFO}[1][\quant]{\DynClass{\QFO[#1]}}
\newcommand{\DynEFO}{\DynQFO[\exists^*]}
\newcommand{\DynAFO}{\DynQFO[\forall^*]}

\newcommand{\DynAEFO}{\DynClass{$\forall/\exists$FO}}
\newcommand{\DynAND}{\DynClass{PropCQ}}
\newcommand{\DynAnd}{\DynAND}
\newcommand{\DynPropCQ}{\DynAND}

\newcommand{\DynPropPos}{\DynClass{PropUCQ}}
\newcommand{\DynPropAO}{\DynPropPos}
\newcommand{\DynPropUCQ}{\DynPropPos}

\newcommand{\DynAndNeg}{\DynClass{PropCQ{\ensuremath{^{\mneg}}}}}
\newcommand{\DynPropCQneg}{\DynAndNeg}
\newcommand{\DynPropUCQneg}{\DynClass{PropUCQ{\ensuremath{^{\mneg}}}}}

\newcommand{\DynOrNeg}{\DynClass{Or{\ensuremath{^{\mneg}}}}}

\newcommand{\dDynProp}{\dDynClass{Prop}}
\newcommand{\dDynPropPos}{\dDynClass{PropUCQ}}
\newcommand{\dDynAndOr}{\dDynPropPos}
\newcommand{\dDynQF}{\dDynClass{QF}}
\newcommand{\dDynFO}{\dDynClass{FO}}
\newcommand{\dDynFOpos}{\dDynClass{FO$^{\wedge, \vee}$}}
\newcommand{\dDynFOand}{\dDynClass{FO$^{\wedge}$}}
\newcommand{\dDynC}{\dDynClass{$\class$}}
\newcommand{\dDynUCQ}{\dDynClass{UCQ}}
\newcommand{\dDynCQ}{\dDynClass{CQ}}
\newcommand{\dDynUCQneg}{\dDynClass{UCQ$^\mneg$}}
\newcommand{\dDynCQneg}{\dDynClass{CQ$^\mneg$}}
\newcommand{\dDynCQPM}{\dDynCQneg}

\newcommand{\dDynQFO}[1][\quant]{\dDynClass{\QFO[#1]}}
\newcommand{\dDynEFO}{\dDynQFO[\exists^*]}
\newcommand{\dDynAFO}{\dDynQFO[\forall^*]}

\newcommand{\dDynAEFO}{\dDynClass{$\forall/\exists$FO}}
\newcommand{\dDynAND}{\dDynPropCQ}
\newcommand{\dDynAnd}{\dDynAND}
\newcommand{\dDynConj}{\dDynClass{Conj}}
\newcommand{\dDynPropAO}{\dDynClass{Prop$^{\wedge, \vee}$}}
\newcommand{\dDyncQFO}{\dDynClass{$\cquant$FO}}
\newcommand{\dDynPropUCQneg}{\dDynClass{PropUCQ{\ensuremath{^{\mneg}}}}}
\newcommand{\dDynPropUCQ}{\dDynClass{PropUCQ}}
\newcommand{\dDynPropCQneg}{\dDynClass{PropCQ{\ensuremath{^{\mneg}}}}}
\newcommand{\dDynPropCQ}{\dDynClass{PropCQ}}

\newcommand{\equalcardinality}{\textsc{EqualCardinality}\xspace}%
\newcommand{\reach}{\textsc{Reach}\xspace}%
\newcommand{\altreach}{\textsc{Alt-Reach}\xspace}%

\newcommand{\stgraph}{$s$-$t$-graph\xspace}
\newcommand{\stgraphs}{$s$-$t$-graphs\xspace}
\newcommand{\reachQ}{\textsc{Reach}\xspace}
\newcommand{\streachQ}{\textsc{$s$-$t$-Reach}\xspace}
\newcommand{\streachabilityquery}{$s$-$t$-reachability query\xspace}
\newcommand{\stTwoPath}{\problem{$s$-$t$-Two\-Path}\xspace}
\newcommand{\sTwoPath}{\problem{$s$-Two\-Path}\xspace}
\newcommand{\clique}[1]{\problem{$#1$-Clique}\xspace}
\newcommand{\colorability}[1]{\problem{$#1$-Col}\xspace}
\newcommand{\streach}{$s$-$t$-Reach}
\newcommand{\streachp}{\problem{\streach}\xspace}
\newcommand{\layeredstreach}[1]{#1-Layered-$s$-$t$-Reach}
\newcommand{\layeredstreachp}[1]{\problem{\layeredstreach{#1}}\xspace}

\newcommand{\Emptiness}[1][]{\problem{Emptiness}\ifthenelse{\equal{#1}{}}{}{(#1)}\xspace}
\newcommand{\Consistency}[1][]{\problem{Consistency}\ifthenelse{\equal{#1}{}}{}{(#1)}\xspace}
\newcommand{\HI}[1][]{\problem{HistoryIndependence}\ifthenelse{\equal{#1}{}}{}{(#1)}\xspace}

\newcommand{\dynClique}[1]{\dynProb{$#1$-Clique}\xspace}
\newcommand{\dynColorability}[1]{\dynProb{$#1$-Col}\xspace}

\newcommand{\probEqualCardinalityText}{EqualCardinality}
\newcommand{\EqualCardinality}{\problem{\probEqualCardinalityText}\xspace}
\newcommand{\EqualCardinalityDescr}{\problemdescr{\probEqualCardinalityText}{Unary relations $A$ and $B$}{Do $A$ and $B$ have the same cardinality?\xspace}}

\newcommand{\dynEqualCardinality}{\dynProb{\probEqualCardinalityText}\xspace}
\newcommand{\dynEqualCardinalityDescr}{\dynProbDescr{\probEqualCardinalityText}{Unary relations $A$ and $B$}{Element insertions and deletions}{Do $A$ and $B$ have the same cardinality?\xspace}}

\newcommand{\dynReachQ}{\dynProb{\textsc{Reach}}\xspace}
\newcommand{\dynstReachQ}{\dynProb{\textsc{$s$-$t$-Reach}}\xspace}

\newcommand{\dynstTwoPath}{\dynProb{\stTwoPath}\xspace}
\newcommand{\dynsTwoPath}{\dynProb{\sTwoPath}\xspace}

\newcommand{\dynlayeredstreach}[1]{Dyn-#1-Layered-$s$-$t$-Reach}
\newcommand{\dynlayeredstreachp}[1]{\problem{\dynlayeredstreach{#1}}\xspace}

\makeatletter{}%
\newcommand{\mtext}[1]{\textsc{#1}}

\providecommand {\calA}      {{\mathcal A}\xspace}
\providecommand {\calB}      {{\mathcal B}\xspace}
\providecommand {\calC}      {{\mathcal C}\xspace}
\providecommand {\calD}      {{\mathcal D}\xspace}
\providecommand {\calE}      {{\mathcal E}\xspace}
\providecommand {\calF}      {{\mathcal F}\xspace}
\providecommand {\calG}      {{\mathcal G}\xspace}
\providecommand {\calH}      {{\mathcal H}\xspace}
\providecommand {\calK}      {{\mathcal K}\xspace}
\providecommand {\calI}      {{\mathcal I}\xspace}
\providecommand {\calL}      {{\mathcal L}\xspace}
\providecommand {\calM}      {{\mathcal M}\xspace}
\providecommand {\calN}      {{\mathcal N}\xspace}
\providecommand {\calO}      {{\mathcal O}\xspace}
\providecommand {\calP}      {{\mathcal P}\xspace}
\providecommand {\calQ}      {{\mathcal Q}\xspace}
\providecommand {\calR}      {{\mathcal R}\xspace}
\providecommand {\calS}      {{\mathcal S}\xspace}
\providecommand {\calT}      {{\mathcal T}\xspace}
\providecommand {\calU}      {{\mathcal U}\xspace}
\providecommand {\calV}      {{\mathcal V}\xspace}
\providecommand {\calX}      {{\mathcal X}\xspace}
\providecommand {\calZ}      {{\mathcal Z}\xspace}

\newcommand{\mhat}[1]{\widehat{#1}}

\newcommand{\Ah}{\mhat{A}}
\newcommand{\Bh}{\mhat{B}}
\newcommand{\Ch}{\mhat{C}}
\newcommand{\Dh}{\mhat{D}}
\newcommand{\Eh}{\mhat{E}}
\newcommand{\Fh}{\mhat{F}}
\newcommand{\Gh}{\mhat{G}}
\newcommand{\Hh}{\mhat{H}}
\newcommand{\Ih}{\mhat{I}}
\newcommand{\Jh}{\mhat{J}}
\newcommand{\Kh}{\mhat{K}}
\newcommand{\Lh}{\mhat{L}}
\newcommand{\Mh}{\mhat{M}}
\newcommand{\Nh}{\mhat{N}}
\newcommand{\Oh}{\mhat{O}}
\newcommand{\Ph}{\mhat{P}}
\newcommand{\Qh}{\mhat{Q}}
\newcommand{\Rh}{\mhat{R}}
\newcommand{\Sh}{\mhat{S}}
\newcommand{\Th}{\mhat{T}}
\newcommand{\Uh}{\mhat{U}}
\newcommand{\Vh}{\mhat{V}}
\newcommand{\Wh}{\mhat{W}}
\newcommand{\Xh}{\mhat{X}}
\newcommand{\Yh}{\mhat{Y}}
\newcommand{\Zh}{\mhat{Z}}

\newcommand{\Psih}{\mhat{\Psi}}
\newcommand{\psih}{\mhat{\psi}}
\newcommand{\Phih}{\mhat{\Phi}}
\newcommand{\phih}{\mhat{\phi}}
\newcommand{\varphih}{\mhat{\varphi}}

\newcommand{\N}{\ensuremath{\mathbb{N}}}

\newcommand{\Q}{\ensuremath{\mathbb{Q}}}

\newcommand{\R}{\ensuremath{\mathbb{R}}}

\newcommand{\perm}{\ensuremath{\pi}}

\newcommand{\allsubsets}[2]{[#1]^{#2}}

\newcommand{\pvec}[1]{\vec{#1}\mkern2mu\vphantom{#1}}

\newcommand{\kexp}[2]{\ensuremath{\exp^{#1}\hspace{-0.5mm}(#2)}}

\newcommand{\tower}[2]{\ensuremath{\text{tow}_{#1}\hspace{-0.5mm}(#2)}}

\newcommand{\klog}[2]{\ensuremath{\log^{#1}{\hspace{-0.5mm}(#2)}}}

\newcommand{\disjointunion}{\uplus}

\providecommand{\power}[1]{\ensuremath{\calP(#1)}\xspace}

\newcommand{\restrict}[2]{#1\mspace{-3mu}\upharpoonright \mspace{-3mu}#2}

\newcommand{\isomorph}{\simeq}
\newcommand{\isomorphVia}[1]{\isomorph_{#1}}
\newcommand{\swap}[2]{id{[#1, #2]}}

\newcommand{\df}{\ensuremath{\mathrel{\smash{\stackrel{\scriptscriptstyle{
    \text{def}}}{=}}}} \;}

\newcommand{\refeq}[1]{\ensuremath{{\stackrel{\scriptstyle{
    \text{#1}}}{=}}}}

\newcommand{\longlongeq}{=\joinrel=\joinrel=\joinrel=}
\newcommand{\longeq}{=\joinrel=\joinrel=}
\newcommand{\reflongeq}[1]{\ensuremath{{\stackrel{\scriptstyle{
    \text{#1}}}{\longeq}}}}

\newcommand{\ramseyw}[1]{\ensuremath{R_{#1}}}

\makeatletter %
\newcommand{\auxramsey}[4]{
  \@ifmtarg{#1}{
    \@ifmtarg{#4}{
      \ensuremath{R(#2; #3)}
    }{
      \ensuremath{R^#4(#2; #3)}
    }
   }{
    \@ifmtarg{#4}{
      \ensuremath{R_{#1}(#2; #3)}
    }{
      \ensuremath{R^#4_{#1}(#2; #3)}
    }
  }
}

\newcommand{\ramsey}[3]{\auxramsey{#1}{#2}{#3}{}}
\newcommand{\homramsey}[2]{\auxramsey{}{#2}{#1}{\text{hom}}}
\newcommand{\mfoldramsey}[3]{\auxramsey{}{#2}{#1}{#3}}

\newcommand{\norder}{\prec}

\newcommand{\col}{col}

\newcommand{\property}{($\ast$)}

\newcommand{\subseq}{\sqsubseteq}

\newcommand{\derive}{\Rightarrow}
\newcommand{\rmapsto}{\rightarrow}

\newcommand{\lpath}[1][]{\ensuremath{\mathrel{\smash{\stackrel{\scriptscriptstyle{
    #1}}{\rightsquigarrow}}}}}

\makeatletter{}%

   \theoremstyle{plain}

   \newtheorem{proposition}[theorem]{Proposition}

     \newtheorem{goal}{Goal}
    \theoremstyle{definition}

    \newtheorem {openquestion}{Open question}
    \newtheorem {question}{Question}
    \newtheorem {mainquestion}{Main question}

    \newenvironment{proofsketch}{\begin{proof}[Proof sketch.]}{\end{proof}}
    \newenvironment{proofidea}{\begin{proof}[Proof idea.]}{\end{proof}}

    \newenvironment{proofof}[1]{\begin{proof}[Proof (of #1).]}{\end{proof}}
   \newenvironment{proofsketchof}[1]{\begin{proof}[Proof sketch (of #1).]}{\end{proof}}
      \newenvironment{proofideaof}[1]{\begin{proof}[Proof idea (of #1).]}{\end{proof}}

\newenvironment{proofenum}{\begin{enumerate}[label=(\alph*),wide=0pt, listparindent=15pt]}{\end{enumerate}}

\makeatletter{}%
\newcommand{\inc}{\ensuremath{\text{inc}}}
\newcommand{\dec}{\ensuremath{\text{dec}}}
\newcommand{\transfer}{\ensuremath{\text{transfer}}}
\newcommand{\ifzero}{\ensuremath{\text{ifzero}}}
\newcommand{\ifzdec}{\ensuremath{\text{ifzdec}}}
\newcommand{\ifz}{\ifzero}

\newcommand{\eval}[3]{#1(#2/#3)}

\newcommand{\assignment}{\theta}

\newcommand{\arity}{\ensuremath{\text{Ar}}}
\newcommand{\arityFun}{\ensuremath{Ar_{\text{fun}}}}

\newcommand{\schema}{\ensuremath{\tau}\xspace}
\newcommand{\schemah}{\hat{\schema}}
\newcommand{\relSchema}{\schema_{\text{rel}}}
\newcommand{\relSchemah}{\schemah_{\text{rel}}}
\newcommand{\conSchema}{\schema_{\text{const}}}
\newcommand{\conSchemah}{\schemah_{\text{const}}}
\newcommand{\funSchema}{\schema_{\text{fun}}}
\newcommand{\funSchemah}{\schemah_{\text{fun}}}
\newcommand{\Terms}[2]{\textsc{Terms}^{#2}_{#1}} 

\newcommand{\struc}{\calS}
\newcommand{\struca}{\struc}
\newcommand{\strucb}{\calT}

\newcommand{\unaryTypes}[1]{\mathcal{UN}_{#1}}
\newcommand{\binaryTypes}[1]{\mathcal{BIN}_{#1}}
\newcommand{\naryTypes}[2]{\mathfrak{T}_{#1,#2}}

\newcommand{\nb}[3]{\calN_{#2}^{#3}(#1)}
\newcommand{\nbv}[3]{\vec \calN_{#2}^{#3}(#1)}

\newcommand{\inpTypes}{\ensuremath{\Gamma_{\text{in}}}\xspace}
\newcommand{\auxTypes}{\ensuremath{\Gamma_{\text{aux}}}\xspace}

\newcommand{\qfrank}[1]{\ensuremath{\text{rank-}#1}}

\newcommand{\mthen}{\rightarrow}
\newcommand{\mand}{\wedge}
\newcommand{\mor}{\vee}
\newcommand{\munion}{\cup}
\newcommand{\mintersect}{\cap}
\newcommand{\mdisjunion}{\biguplus}
\newcommand{\sem}[2]{\ensuremath{\llbracket #1\rrbracket_{#2}}} %

\newcommand{\arb}{\ensuremath{\star}}%
\newcommand{\generic}{\textsc{generic}}
\newcommand{\quant}{\mathbb{Q}}
\newcommand{\cquant}{\overline{\mathbb{Q}}}

\newcommand{\nd}{d}
\newcommand{\formulas}{\calC}
\newcommand{\symneg}[1]{\widehat{#1}}

\newcommand{\type}[2]{\ensuremath{\langle #1, #2 \rangle}}
\newcommand{\stype}[3]{\ensuremath{\langle #1, #2 \rangle_{#3}}}

\newcommand{\behaveEqual}[1]{\approx_{#1}}

\newcommand{\types}[2]{types_{#1}(#2)}
\newcommand{\numTypes}[2]{|\types{{#1}}{#2}|}

\newcommand{\eqtype}{\epsilon}

\newcommand{\db}{\ensuremath{\calD}\xspace}
\newcommand{\inp}{\ensuremath{\calI}\xspace}
\newcommand{\aux}{\ensuremath{\calA}\xspace}
\newcommand{\builtin}{\ensuremath{\calB}\xspace}
\newcommand{\domain}{\ensuremath{ D}\xspace}
\newcommand{\actDomain}{\ensuremath{ D_\text{act}}\xspace}
\newcommand{\emptyDB}{\ensuremath{\db_\emptyset}\xspace}

\newcommand{\query}{\ensuremath{\calQ}}
\newcommand{\cq}{\calC}

\newcommand{\querys}{\ensuremath{R_\query}\xspace}
\newcommand{\bitquerys}{\ensuremath{\Acc}\xspace}

\newcommand{\ans}[2]{\mtext{ans}(#1, #2)}

\newcommand{\updates}{\ensuremath{\Delta}}
\newcommand{\abstrDel}{\ensuremath{\updates_{Del}}}
\newcommand{\abstrIns}{\ensuremath{\updates_{Ins}}}
\newcommand{\abstrUpd}{\ensuremath{\updates}}

\newcommand{\init}{\mtext{Init}\xspace}

\newcommand{\ins}{\mtext{ins}\xspace}
\newcommand{\del}{\mtext{del}\xspace}

\newcommand{\insertdescr}[2]{\textbf{Insertion of \ensuremath{#2} into \ensuremath{#1}.}}
\newcommand{\deletedescr}[2]{\textbf{Deletion of \ensuremath{#2} from \ensuremath{#1}.}}

\newcommand{\state}{\ensuremath{\struc}\xspace}

\newcommand{\inpSchema}{\ensuremath{\schema_{\text{in}}}\xspace}
\newcommand{\auxSchema}{\ensuremath{\schema_{\text{aux}}}\xspace}
\newcommand{\eqSchema}{\ensuremath{\schema_{=}}\xspace}
\newcommand{\builtinSchema}{\ensuremath{\schema_{\text{bi}}}\xspace}

\newcommand{\auxInit}{\init_{\text{aux}}}
\newcommand{\builtinInit}{\init_{\text{bi}}}

\newcommand{\upProg}{\ensuremath{P}\xspace}
\newcommand{\prog}{\ensuremath{\calP}\xspace}
\newcommand{\progb}{\ensuremath{Q}\xspace}

\newcommand{\updateDB}[2]{\ensuremath{#1(#2)}}
\newcommand{\updateState}[3][\prog]{\ensuremath{#1_{#2}(#3)}}
\newcommand{\updateStateI}[3][\prog]{\ensuremath{#1_{#2}(#3)}}
\newcommand{\updateRelation}[4]{\restrict{\ensuremath{{#1}_{#2}(#3)}}{#4}}

\newcommand{\transition}[3]{\ensuremath{{#1} \xrightarrow{#2}{#3}}}

\makeatletter %
\newcommand{\uf}[4]{
  \@ifmtarg{#4}{
    \ensuremath{\phi^{#1}_{#2}(#3)}
   }{
    \ensuremath{\phi^{#1}_{#2}(#3; #4)}
  }
}
\newcommand{\huf}[4]{
  \@ifmtarg{#4}{
    \ensuremath{\widehat{\phi}^{#1}_{#2}(#3)}
   }{
    \ensuremath{\widehat{\phi}^{#1}_{#2}(#3; #4)}
  }
}

\newcommand{\ufb}[4]{
  \@ifmtarg{#4}{
    \ensuremath{\psi^{#1}_{#2}(#3)}
   }{
    \ensuremath{\psi^{#1}_{#2}(#3; #4)}
  }
}

\newcommand{\ufbwa}[2]{
  \ensuremath{\psi^{#1}_{#2}}
}

\newcommand{\ufwa}[2]{
  \ensuremath{\phi^{#1}_{#2}}
}

  \makeatletter %
  \newcommand{\ufsubstitute}[5]{
    \@ifmtarg{#5}{
      \ensuremath{\phi^{#2}_{#3}[#1](#4)}
    }{
      \ensuremath{\phi^{#2}_{#3}[#1](#4; #5)}
    }
  }

  \makeatletter %
  \newcommand{\ufsubstitutewa}[3]{
      \ensuremath{\phi^{#2}_{#3}[#1]}
  }
  \makeatletter %
  \newcommand{\substitutewa}[2]{
      \ensuremath{#1[#2]}
  }

\newcommand{\ut}[4]{
  \@ifmtarg{#4}{
    \ensuremath{t^{#1}_{#2}(#3)}
   }{
    \ensuremath{t^{#1}_{#2}(#3; #4)}
  }
}

\newcommand{\utwa}[2]{\ensuremath{t^{#1}_{#2}}}
\newcommand{\ite}[3]{
  \@ifmtarg{#1}{
    \ensuremath{\mtext{ITE}}
   }{
    \mtext{ITE}(#1,#2,#3)  
  }
}

\makeatletter{}%

\providecommand{\nc}{\newcommand}
\providecommand{\rnc}{\renewcommand}
\providecommand{\pc}{\providecommand}

\renewcommand{\labelenumi}{(\alph{enumi})}

\newcommand{\Erdos}{Erd\H{o}s}

\ifcomments
\nc{\commentbox}[1]{\noindent\framebox{\parbox{\linewidth}{#1}}}
\nc{\todo}[1]{\ \\ {\color{red} \fbox{\parbox{\linewidth}{{\sc
          ToDo}:\\  #1}}}}

\setlength{\marginparwidth}{2.5cm}
\setlength{\marginparsep}{3pt}

\newcounter{CommentCounter}
\newcommand{\acomment}[2]{\ \\ \fbox{\parbox{\linewidth}{{\sc #1}: #2}}}
\newcommand{\mcomment}[2]{{\color{blue}(#1)}\footnote{#1: #2}} %
\else
\nc{\commentbox}[1]{}
\newcommand{\mcomment}[2]{}
\newcommand{\acomment}[2]{}
\fi

\ifchanges

\newcommand{\loldnew}[3]{\commentbox{{\textcolor{blue}{\setlength{\fboxsep}{1pt}\fbox{\small
          #1}}} \textcolor{red}{\footnotesize #2}}
  \textcolor{blue}{#3}}
\setul{}{0.2mm}
\setstcolor{red}
\newcommand{\oldnew}[3]{{\textcolor{blue}{\setlength{\fboxsep}{1pt}\fbox{\small
        #1}}} \st{\footnotesize #2} \textcolor{blue}{#3}}

\else
\newcommand{\loldnew}[3]{#3}
\newcommand{\oldnew}[3]{#3}
\fi

 \nc{\tzm}[1]{\mcomment{TZ}{#1}}
 \nc{\tsm}[1]{\mcomment{TS}{#1}}
 \nc{\nilsm}[1]{\mcomment{NV}{#1}}
 
 \nc{\tz}[1]{\acomment{TZ}{#1}}
 \nc{\thz}[1]{\acomment{TZ}{#1}}
 \nc{\ts}[1]{\acomment{TS}{#1}}
 \nc{\nils}[1]{\acomment{NV}{#1}}

\nc{\tzon}[2][]{\oldnew{TZ}{#1}{#2}} 
\nc{\tson}[2][]{\oldnew{TS}{#1}{#2}}
\nc{\nilson}[2][]{\oldnew{NV}{#1}{#2}}

\nc{\tzlon}[2][]{\loldnew{TZ}{#1}{#2}} 
\nc{\tslon}[2][]{\loldnew{TS}{#1}{#2}}
\nc{\nilslon}[2][]{\loldnew{NV}{#1}{#2}}
\makeatletter{}%
\newcommand{\apptheoremtitlefont}[1]{\textbf{#1}}
\newcommand{\apptheoremcontentfont}{\itshape}

\newcommand{\apponlystartmarker}{ $\blacktriangleright\blacktriangleright\blacktriangleright$ }
\newcommand{\apponlyendmarker}{ $\blacktriangleleft\blacktriangleleft\blacktriangleleft$ }
\newcommand{\apprepetitionstartmarker}{}%
\newcommand{\apprepetitionendmarker}{}%

\newcommand{\initialAppendix}{
  \section*{Appendix}

  In the appendix we give the proofs that have been omitted in the main text. For proofs that are partially present in the main article, we repeat the full proof and its context. Parts that are only repeated are marked by \apprepetitionstartmarker and \apprepetitionendmarker. For the convenience of the reader we repeat the full Section \ref{section:hi}.  
  
}
  
\newcommand{\writeAppendix}{
  
  \initialAppendix
}
\newcommand{\toAppendix}[1]{
  \makeatletter
   \g@addto@macro\writeAppendix{#1}
  \makeatother
}

\newcommand{\toMainAndAppendix}[1]{
  \longVersion{#1}
  \shortVersion{
    #1%
    \toAppendix{%
	\apprepetition{#1} \par
    }
  }
}

\newcommand{\toLongAndAppendix}[1]{
  \longVersion{#1}%
  \shortVersion{    
    \toAppendix{%
	#1 \par
    }
  }
}

\newcommand{\atheorem}[2]{
  \begin{theorem}\label{#1}%
    #2%
  \end{theorem}%
  \toAppendix{%
    \begin{apptheorem}{\ref{#1}}{}
      #2%
    \end{apptheorem}%
  }
}

\newcommand{\alemma}[2]{
  \begin{lemma}\label{#1}%
    #2%
  \end{lemma}%
  \toAppendix{%
    \begin{applemma}{\ref{#1}}{}%
      #2%
    \end{applemma}%
  }
}

\newcommand{\aproposition}[2]{
  \begin{proposition}\label{#1}%
    #2%
  \end{proposition}%
  \toAppendix{%
    \begin{appproposition}{\ref{#1}}{}%
      #2%
    \end{appproposition}%
  }
}

\newcommand{\aproof}[3]{%
  \longVersion{
    \begin{proof}%
      #1%
      #3%
    \end{proof}%
  }
  \shortVersion{
    \@ifmtarg{#2}{}{%
      \begin{proof}%
	#1%
	#2%
      \end{proof}%
    }
    \toAppendix{%
      \begin{proof}%
	\@ifmtarg{#1}{}{\apprepetition{#1}} \par
	#3
      \end{proof}%
    }
  }
}

\newcommand{\aproofsketch}[3]{%
  \longVersion{
    \begin{proof}%
      #1%
      #3%
    \end{proof}%
  }

  \shortVersion{
    \@ifmtarg{#2}{}{%
      \begin{proofsketch}%
	#1%
	#2%
      \end{proofsketch}%
    }
    \toAppendix{%
      \begin{proof}%
	\@ifmtarg{#1}{}{\apprepetition{#1}} \par
	#3
      \end{proof}%
    }
  }
}

\newcommand{\aproofidea}[3]{%
  \longVersion{
    \begin{proof}%
      #1%
      #3%
    \end{proof}%
  }

  \shortVersion{
    \@ifmtarg{#2}{}{%
      \begin{proofidea}%
	#1%
	#2%
      \end{proofidea}%
    }
    \toAppendix{%
      \begin{proof}%
	\@ifmtarg{#1}{}{\apprepetition{#1}} \par
	#3
      \end{proof}%
    }
  }
}

\newcommand{\shortOrLong}[2]{%
  \shortVersion{#1}%
  \longVersion{#2}%
}

\makeatletter
\newcommand{\theoremcont}[3]{
   \def\Type{#1}
   \def\Number{#2}
   \def\Label{#3}
  \@ifmtarg{#3}{
     \apptheoremtitlefont{\Type\ \Number.} \apptheoremcontentfont
   }{
    \apptheoremtitlefont{\Type\ \Number}\ \apptheoremcontentfont(\Label).
  }
}

\newenvironment{applemma}[2]{\vspace{2mm}\par\theoremcont{Lemma}{#1}{#2}}{\vspace{0mm}\par}
\newenvironment{apptheorem}[2]{\vspace{2mm}\par\theoremcont{Theorem}{#1}{#2}}{\vspace{2mm} \par }
\newenvironment{appcorollary}[2]{\theoremcont{Corollary}{#1}{#2}}{\vspace{2mm}}
\newenvironment{appproposition}[2]{\theoremcont{Proposition}{#1}{#2}}{\vspace{2mm}}
\newenvironment{appdefinition}[2]{\theoremcont{Definition}{#1}{#2}}{\vspace{2mm}}
\newenvironment{appexample}[1]{\vspace{2mm}\textit{Example #1.}}{\vspace{2mm}}

\newcommand{\apponlystart}{
  \apponlystartmarker
}
\newcommand{\apponlyend}{
  \apponlyendmarker
}

\newcommand{\apprepetition}[1]{
  \apprepetitionstartmarker #1 \apprepetitionendmarker
}

\newcommand{\powsym}{^\wedge}

\newcommand{\progToGraph}[1]{\ensuremath{{\langle #1 \rangle}}}
\newcommand{\progToString}[1]{\ensuremath{{\langle #1 \rangle}}}
\newcommand{\toString}[1]{\ensuremath{{\langle #1 \rangle}}}
\newcommand{\toStructure}[1]{\ensuremath{{\langle #1 \rangle}}}
\newcommand{\progToGraphInv}[1]{\ensuremath{{\langle #1 \rangle^{-1}}}}

\newcommand{\progToGraphPadded}[1]{\ensuremath{{\langle\langle #1 \rangle\rangle}}}
\newcommand{\progToGraphPaddedInv}[1]{\ensuremath{{\langle\langle #1 \rangle\rangle^{-1}}}}

\newcommand{\LineIf}[2]{\State \algorithmicif\ {#1}\ \algorithmicthen\ {#2}}

\algnewcommand\algorithmicinput{\textbf{Input:}}
\algnewcommand\INPUT{\item[\algorithmicinput]}

\algnewcommand\algorithmicoutput{\textbf{Output:}}
\algnewcommand\OUTPUT{\item[\algorithmicoutput]}

\newcommand{\columnWidth}{9cm}
\renewcommand{\problemIndent}{\hspace{0mm}}

\newcommand{\substruclemma}{Substructure Lemma\xspace}
\newcommand{\First}{\mtext{First}}
\newcommand{\List}{\mtext{List}}
\newcommand{\Last}{\mtext{Last}}
\newcommand{\In}{\mtext{In}}
\newcommand{\Out}{\mtext{Out}}
\newcommand{\Empty}{\mtext{Empty}}
\newcommand{\RelName}[1]{\mtext{#1}}
\newcommand{\Acc}{\mtext{Acc}}

\newcommand{\Odd}{\mtext{Odd}}
\newcommand{\odd}{\text{odd}}
\newcommand{\even}{\text{even}}

\newcommand{\Counter}{\mtext{Counter}}
\newcommand{\isEmpty}{\mtext{Empty}}
\newcommand{\Zero}{\mtext{Zero}}

\newcommand{\congruent}[2]{\sim_{#1, #2}}

\newcommand{\Appendix}{\centerline{------------ Material for Appendix ---------}}

\newcommand{\lhi}{locally history independent\xspace}

\newcommand{\phibad}{\ensuremath{\varphi_{\text{bad}}}}
   \newcommand{\shortVersion}[1]{} \newcommand{\longVersion}[1]{#1}

  \maketitle
  \begin{abstract}
    A dynamic program, as introduced by Patnaik and Immerman (1994), maintains the result of a fixed query for an
    input database which is subject to tuple insertions and
    deletions. It can use an auxiliary database whose relations are updated via first-order formulas upon modifications of the input database. %
    
This paper studies static analysis problems for dynamic programs and
investigates, more specifically, the decidability of the following
three questions. Is the
answer relation of a given dynamic program always empty? Does a
program actually maintain a query? 
Is the content of auxiliary relations independent of the modification sequence that lead to an input database?
   In general, all these problems can easily be seen to be undecidable
   for full first-order programs. Therefore the paper aims at
   pinpointing the exact decidability borderline for programs
   with restricted arity (of the input and/or auxiliary database) and
   restricted quantification. 
  \end{abstract}

  \section{Introduction}\label{section:introduction}

\makeatletter{}%
In modern database scenarios data is subject to frequent changes. In order to avoid costly re-computation of queries from scratch after each small modification of the data, one can try to use previously computed auxiliary data. This auxiliary data then needs to be updated dynamically whenever the database changes.

The descriptive dynamic complexity framework (short: dynamic complexity) by Patnaik and Immerman \cite{PatnaikI94} models this setting from a declarative perspective. It was mainly inspired by updates in relational databases. Within this framework, for a relational database subject to change, a \emph{dynamic program} maintains auxiliary relations with the intention to help answering a query \query. When a modification to the database, that is an insertion or deletion of a tuple, occurs, every auxiliary relation is updated through a first-order update formula (or, equivalently, through a core SQL query) that can refer to the database as well as to the auxiliary relations. The result of $\query$ is, at every time, represented by some distinguished auxiliary relation. The class of all  queries maintainable by  dynamic programs with first-order update formulas is called \DynFO and we refer to such programs as \DynFO-programs.
We note that shortly before the work of Patnaik and Immerman, the declarative approach was independently formalized in a similar way by Dong, Su and Topor \cite{DongST95}.

The main question studied in Dynamic Complexity has been which queries that are not statically expressible in first-order logic (and therefore not in Core SQL), can be maintained by \DynFO-programs. Recently, it has been shown that the Reachability query, a very natural such query, can be maintained by \DynFO programs~\cite{DattaKMSZ15}.
Altogether, research in Dynamic Complexity succeeded in proving that many non-FO queries are maintainable in \DynFO. These results and their underlying techniques yield many interesting insights into the the nature of Dynamic Complexity. 

However, to complete the understanding of Dynamic Complexity,  it would be desirable to complement these techniques by  methods for proving that certain queries are \emph{not} maintainable by \DynFO programs. But the state of the art with respect to inexpressibility results is much less favorable: at this point, no general techniques for showing that a query is not expressible in \DynFO  are available. In order to get a better overall picture of Dynamic Complexity in general and to develop methods for inexpressibility proofs in particular, various restrictions of \DynFO have been studied, based on, e.g., arity restrictions for the auxiliary relations \cite{DongLW95, DongS98, DongLW03}, fragments of first-order logic \cite{Hesse03, GeladeMS12, ZeumeS15, Zeume14}, or by other means \cite{DongS97, GraedelS12}. 

At the heart of our difficulties to prove inexpressibility results in Dynamic Complexity is our limited understanding of what dynamic programs with or without restrictions ``can do'' in general, and our limited ability to analyze what a particular dynamic program at hand ``does''. In this paper, we initiate a systematic study of the ``analyzability'' of dynamic programs.   Static analysis of queries has a long tradition in Database Theory and we follow this tradition by first studying the emptiness problem for dynamic programs, that is the question, whether there exists an initial database and a modification sequence that is accepted by a given dynamic program.\footnote{The exact framework will be defined in Section \ref{section:setting}, but we already mention that we will consider the setting in which databases are initially empty and the auxiliary relations are defined by first-order formulas.} Given the well-known undecidability of the finite satisfiability problem for first-order logic \cite{Trahtenbrot63}, it is not surprising that emptiness of \DynFO programs is undecidable in general. However, we try to pinpoint the borderline of undecidability for fragments of \DynFO based on restrictions of the arity of input relations, the arity of auxiliary relations and for the class \DynProp of programs with quantifier-free update formulas.

In the fragments where undecidability of emptiness does not directly follow from undecidability of satisfiability in the corresponding fragment of first-order logic, our undecidability proofs make use of dynamic programs whose query answer might not only depend on the database yielded by a certain modification sequence, but also on the sequence itself, that is, on the order in which tuples are inserted or (even) deleted. From a useful dynamic program one would, of course, expect that it is \emph{consistent} in the sense that its query answer always only depends on the current database, but not on the specific modification sequence by which it has been obtained. 
It turns out that the emptiness problem for consistent programs is easier than the general emptiness problem for dynamic programs. More precisely, there are fragments of \DynFO, for which an algorithm can decide emptiness for dynamic programs that come with a ``consistency  guarantee'', but for which  the emptiness problem is undecidable, in general. However, it turns out that the combination of a consistency test with an emptiness test for consistent programs does not gain any advantage over ``direct'' emptiness tests, since the consistency problem turns out to be as difficult as the general emptiness problem. 

Finally, we study a property that many dynamic programs in the literature share: they are \emph{history independent} in the sense that all  auxiliary relations always only depend on the current (input) database. History independence can be seen as a strong form of consistency in that it not only requires the query relation, but \emph{all} auxiliary relations to be determined by the input database. 
 History independent dynamic programs (also called \emph{memoryless} \cite{PatnaikI94} or \emph{deterministic} \cite{DongS97}) are still expressive enough to maintain interesting queries like undirected reachability \cite{GraedelS12}. But also some inexpressibility proofs have been found for such programs \cite{DongS97,GraedelS12, ZeumeS15}.
We study the \emph{history independence problem}, that is, whether a given dynamic program is history independent. In a nutshell, the history independence problem is the ``easiest'' of the static analysis problems considered in this paper. 

Our results, summarized in Table \ref{tab:results},  shed light on the borderline between decidable and undecidable fragments of \DynFO with respect to emptiness (and consistency), emptiness for consistent programs and history independence. While the picture is quite complete for the emptiness problem for general dynamic programs, for some fragments of~\DynProp there remain open questions regarding the emptiness problem for consistent dynamic programs and the history-independence problem.
Some of the results shown in this paper have been already presented in the master thesis of Nils Vortmeier \cite{Vortmeier13}. 

\begin{table}[t!]
  \centering
\begin{tabular}{l|C{3.2cm}|C{3.2cm}|C{3.2cm}}
  & Emptiness \newline Consistency &  Emptiness for consistent programs  & History\newline Independence\\
\hline
 Undecidable &  $\DynFOIA{1}{0}$\newline$\DynPropIA{2}{0}$\newline$\DynPropIA{1}{2}$
                                   & $\DynFOIA{1}{2}$\newline $\DynFOIA{2}{0}$ & $\DynFOIA{2}{0}$\\
\hline
 Decidable&$\DynPropIA{1}{1}$ & $\DynFOIA{1}{1}$\newline$\DynPropI{1}$\newline$\DynPropA{1}$ & $\DynFOI{1}$\newline$\DynPropA{1}$\\
\hline
 Open & &  $\DynPropIA{2}{2}$ and beyond &  $\DynPropIA{2}{2}$ and beyond\\
\end{tabular}
\caption{Summary of the results of this paper. $\DynFOIA{\ell}{m}$ stands for \DynFO-programs with (at most) $\ell$-ary input relations and $m$-ary auxiliary relations. $\DynFOA{m}$ and $\DynFOI{\ell}$ represent programs with $m$-ary auxiliary relations (and arbitrary input relations) and programs with $\ell$-ary input relations, respectively. Likewise for $\DynProp$.
}
\label{tab:results}
\end{table}

\subparagraph*{Outline}
We recall some basic definitions in Section \ref{section:preliminaries} and introduce the formal setting in Section \ref{section:setting}. 
The emptiness problem is defined and studied in Section \ref{section:emptiness}, where we first consider general dynamic programs (Subsection \ref{section:emptinessgeneral}) and then consistent dynamic programs (Subsection \ref{section:emptinessconsistent}).
In Subsection \ref{section:emptinessbuiltin} we briefly discuss the impact of built-in orders to the results.
The Consistency and History Independence problems are studied in  Sections \ref{section:consistency} and \ref{section:hi}, respectively. We conclude in Section \ref{section:conclusion}. \shortOrLong{Due to the space limit we only give proof sketches or even proof ideas in the body of this paper. Complete proofs can be found in the long version \cite{}.}{}

  \section{Preliminaries}\label{section:preliminaries}
\makeatletter{}%
We presume that the reader is familiar with basic notions from Finite Model Theory and refer to  \cite{EbbinghausFlum95, Libkin04} for a detailed introduction into this field. We review some basic definitions in order to fix notations.

In this paper, a \textit{domain} is a non-empty finite set. For tuples $\vec a = (a_1, \ldots, a_k)$ and $\vec b = (b_1, \ldots, b_\ell)$ over some domain~$\domain$, the $(k + \ell)$-tuple obtained by concatenating  $\vec a$ and $\vec b$ is denoted by $(\vec a, \vec b)$. 

A (relational) \emph{schema} is a collection $\schema$ of relation symbols\footnote{For simplicity we do not allow constants in this work but note that our results hold for relational schemas with constants as well.}
together with an arity function $\arity: \schema \rightarrow \N$. A \emph{database} $\db$ with schema $\schema$ and domain $\domain$ is a mapping that assigns to every relation symbol $R \in \schema$ a relation of arity $\arity(R)$ over $\domain$. The \emph{size of a database}, usually denoted by $n$, is the size of its domain. We call a database \emph{empty}, if all its relations are empty. We emphasize that empty databases have  non-empty domains.
 A  $\schema$-\emph{structure} $\struc$ is a pair $(\domain, \db)$ where $\db$ is a database with schema $\schema$ and domain $\domain$. Often we omit the schema when it is clear from the context. %

We write $\struc\models\varphi(\vec a)$ if the first-order formula $\varphi(\vec x)$ holds in $\struc$ under the variable assignment that maps $\vec x$ to $\vec a$.
The \emph{quantifier depth} of a first-order formula is the maximal nesting depth of quantifiers. 
 The \emph{rank-$q$ type} of a tuple $(a_1, \ldots, a_m)$ with respect to a $\schema$-structure $\struc$ is the set of all first-order formulas $\varphi(x_1, \ldots, x_m)$ (with equality) of quantifier depth at most $q$, for which $\struc\models\varphi(\vec a)$ holds.  By $\struc \equiv_q \struc'$ we denote that two structures $\struc$ and $\struc'$ have the same rank-$q$ type (of length 0 tuples).

For a subschema $\tau'\subseteq \tau$, the rank-$q$ $\tau'$-type of a tuple $\vec a$ in a $\tau$-structure \state is its  rank-$q$ type in the $\tau'$-reduct of \state.

We refer to the rank-0 type of a tuple also as its \emph{atomic type} and, since we mostly deal with rank-0 types, simply as its \emph{type}.  %
The \emph{equality type} of a tuple is the atomic type with respect to the empty schema.

The \emph{$k$-ary type} of a tuple $\vec a$ in a structure $\struc$ is its $\tau_{\le k}$-type, where $\tau_{\le k}$ consists of all relation symbols of $\tau$ with arity at most $k$.
The \emph{$\tau'$-color} of an element $a$ in $\struc$, for a subschema $\tau'$ of the schema of $\struc$, is its $\tau'_{1}$-type, where $\tau'_{1}$ consists of all unary relation symbols of $\tau'$. We often enumerate the possible $\tau'$-colors as $c_0,\ldots,c_L$, for some $L$ with $c_0$ being the color of elements that are in neither of the unary relations. We call these elements \emph{$\tau'$-uncolored}. If $\tau'$ is clear from the context we simply speak of colors and uncolored elements.

  \section{The dynamic complexity setting}\label{section:setting}
\makeatletter{}%

For a database \db over schema \schema, a \emph{modification} $\delta=(o,\vec a)$ consists of an operation $o\in \{\ins_S, \del_S\mid S\in\tau\}$ and a tuple $\vec a$ of elements from the domain of \db. By $\updateDB{\delta}{\db}$ we denote the result of applying $\delta$ to $\db$ with the obvious semantics of inserting or deleting the tuple $\vec a$ to or from relation $S^{\db}$. For a sequence $\alpha = \delta_1 \cdots \delta_N$  of modifications to a database $\db$ we let $\updateDB{\alpha}{\db} \df \updateDB{\delta_N}{\cdots (\updateDB{\delta_1}{\db})\cdots}$.

A \emph{dynamic instance}\footnote{The following introduction to dynamic descriptive complexity is similar to previous \mbox{work \cite{ZeumeS15, ZeumeS14icdt}}. } of a query $\query$ is a pair $(\db, \alpha)$, where $\db$ is a database over a domain $\domain$ and $\alpha$ is a sequence of modifications to $\db$. The dynamic query $\dynProb{$\query$}$ yields the result of  evaluating the query $\query$ on $\alpha(\db)$.

Dynamic programs, to be defined next, consist of an initialization mechanism and an update program.  The former  yields, for every (initial) database $\db$,  an initial state with initial auxiliary  data. The latter defines the new state of the dynamic program for each possible modification $\delta$.

A \emph{dynamic schema} is a pair $(\inpSchema,\auxSchema)$,  where $\inpSchema$ and $\auxSchema$ are the schemas of the input database and the auxiliary database, respectively. %
We call relations over $\inpSchema$ \emph{input relations} and relations over $\auxSchema$ \emph{auxiliary relations}. If the relations are $0$-ary, we also speak of input or auxiliary \emph{bits}.
We always let $\schema\df\inpSchema\cup\auxSchema$. 

\begin{definition}(Update program)\label{def:updateprog}
  An \emph{update program} \upProg over a dynamic schema \mbox{$(\inpSchema, \auxSchema)$} 
  is a set of first-order formulas (called \emph{update formulas} in the following) that contains,  for every $R \in \auxSchema$ and every
  $o\in \{\ins_S, \del_S\mid S\in\inpSchema\}$, an update formula  $\uf{R}{o}{\vec x}{\vec y}$ over the schema $\schema$  where $\vec x$ and $\vec y$ have the same arity as $S$ and $R$, respectively.
\end{definition}

A \emph{program state} $\state$ over dynamic schema \mbox{$(\inpSchema, \auxSchema)$} is a structure $(\domain, \inp,  \aux)$ where\footnote{We prefer the notation $(\domain, \inp,  \aux)$ over $(\domain, \inp \cup \aux)$ to emphasize the two components of the overall database.} $\domain$ is a finite domain, $\inp$ is a database over the input schema (the \emph{input database}) and $\aux$ is a database over the auxiliary schema (the \emph{auxiliary database}). %

The \emph{semantics of update programs} is as follows. For a modification $\delta=(o,\vec a)$, where $\vec a$ is a tuple over $\domain$,  and program state $\state=(\domain, \inp,\aux)$ we denote by $\updateState[\upProg]{\delta}{\state}$ the state $(\domain, \updateDB{\delta}{\inp}, \aux')$, where $\aux'$ consists of relations \mbox{$R^{\aux'}\df\{\vec b \mid \state \models \uf{R}{o}{\vec a}{\vec b}\}$}. The effect $\updateState[\upProg]{\alpha}{\state}$ of a modification sequence $\alpha = \delta_1 \ldots \delta_N$ to a state $\state$ is the state $\updateState[\upProg]{\delta_N}{\ldots (\updateState[\upProg]{\delta_1}{\state})\ldots}$. 

\begin{definition}(Dynamic program) \label{definition:dynprog}
  A \emph{dynamic program} is a triple $(\upProg,\init,\querys)$, where
  \begin{compactitem}
   \item  $\upProg$ is an update program over some dynamic schema
  \mbox{$(\inpSchema, \auxSchema)$}, 
    \item \init is a mapping that maps $\inpSchema$-databases to $\auxSchema$-databases, and 
    \item $\querys\in\auxSchema$ is a designated \emph{query symbol}.
  \end{compactitem}
\end{definition}

A dynamic program $\prog=(\upProg,\init,\querys)$ \emph{maintains}  a dynamic query  $\dynProb{$\query$}$ if, for every dynamic instance $(\db,\alpha)$, the query result $\query(\updateDB{\alpha}{\db})$ coincides with the query relation $\querys^\state$ in the state \mbox{$\state=\updateState[\upProg]{\alpha}{\state_\init(\db)}$}, where \mbox{$\state_\init(\db) \df (\domain, \db,  \init(\db))$} is the initial state for $\db$. If the query relation $\querys$ is $0$-ary, we often denote this relation as \emph{query bit} $\Acc$ and say that $\prog$ \emph{accepts} $\alpha$ over $\domain$ if $\Acc$ is true in $\updateState[\upProg]{\alpha}{\state_\init(\db)}$.

In the following, we write $\updateStateI{\alpha}{\db}$ instead of $\updateState[\upProg]{\alpha}{\state_\init(\db)}$ and $\updateStateI{\alpha}{\state}$ instead\footnote{The notational difference is tiny here: we refer to the dynamic program instead of the update program.} of $\updateStateI[\upProg]{\alpha}{\state}$ for a given dynamic program $\prog = (\upProg,\init,\querys)$, a modification sequence $\alpha$, an initial database $\db$ and a state $\state$.

\begin{definition}(\DynFO and \DynProp) \label{definition:dync}
  \DynFO is the class of all dynamic queries that can be maintained by dynamic programs with first-order update formulas  and first-order definable initialization mapping when starting from an initially empty input database. $\DynProp$ is the subclass of $\DynFO$, where update formulas are quantifier-free\footnote{We still allow the use of quantifiers for the initialization.}.
\end{definition}

A $\DynFO$-program is a dynamic program with first-order update formulas, likewise a $\DynProp$-program is a dynamic program with quantifier-free update formulas. 
 A $\DynFOIA{\ell}{m}$-program is a $\DynFO$-program over (at most) $\ell$-ary input databases that uses auxiliary relations of arity at most $m$; likewise for $\DynPropIA{\ell}{m}$-programs.\footnote{We do not consider the case $\ell = 0$ where databases are pure sets with a fixed number of bits.}

Due to the undecidability of finite satisfiability of first-order logic, the emptiness problem---the problem we study first---is undecidable even for \DynFO-programs with only a single auxiliary relation (more precisely, with query bit only). Therefore, we restrict our investigations to fragments of \DynFO. 
Also allowing arbitrary initialization mappings immediately yields an undecidable emptiness problem. This is already the case for first-order definable initialization mappings for arbitrary initial databases.
In the literature classes with various restricted and unrestricted initialization  mappings have been studied, \mbox{see \cite{ZeumeS14icdt}} for a discussion.
In this work, in line with \cite{PatnaikI94}, we allow initialization mappings defined by arbitrary first-order formulas, but require that the initial database is empty.
Of course, we could have studied further restrictions on the power of the initialization formulas, but this would have yielded a setting with an additional parameter.

The following example illustrates a technique to maintain lists with quantifier-free dynamic programs, introduced in \cite[Proposition 4.5]{GeladeMS12}, which is used in some of our proofs. The example itself is from \cite{ZeumeS15}.
\begin{example}\label{example:emptylist}
    We provide a $\DynProp$-program $\prog$ for the dynamic variant of the Boolean query \problem{NonEmptySet}, where, for a unary relation $U$ subject to insertions and deletions of elements, one asks whether $U$ is empty. Of course, this query is trivially expressible in first-order logic, but not without quantifiers.   

   The program $\prog$ is over auxiliary schema  $\auxSchema = \{\querys, \First, \Last, \List\}$, where $\querys$ is the query bit (i.e.\ a $0$-ary relation symbol), $\First$ and $\Last$ are unary relation symbols, and $\List$ is a binary relation symbol. The idea of \prog is to maintain a list of all elements currently in $U$. The list structure is stored in the binary relation $\List^\state$.
The first and last element of the list are stored in $\First^\state$ and $\Last^\state$, respectively. We note that the order in which the elements of $U$ are stored in the list depends on the order in which they are inserted into $U$.
\shortOrLong{}{

For a given instance of \problem{NonEmptySet} the initialization mapping initializes the auxiliary relations accordingly.}%
\shortOrLong{We only describe the (more complicated) case of deletions from $U$.

}{

\insertdescr{U}{a}{
    A newly inserted element is attached to the end of the list\footnote{For simplicity we assume that only elements that are not already in $U$ are inserted, the formulas given can be extended easily to the general case. Similar assumptions are made whenever necessary.}. Therefore the $\First$-relation does not change except when the first element is inserted into an empty set $U$. Furthermore, the inserted element is the new last element of the list and has a connection to the former last element. Finally, after inserting an element into $U$, the query result is 'true': 
    \begin{align*}
      \uf{\First}{\ins}{a}{x} &\df (\neg \querys \mand a = x) \mor (\querys \mand \First(x)) \\
      \uf{\Last}{\ins}{a}{x} &\df a = x \\
      \uf{\List}{\ins}{a}{x,y} &\df \List(x,y) \mor (\Last(x) \mand a = y)  \\
      \uf{\querys}{\ins}{a}{} &\df \top. 
    \end{align*}%
  }}%
  \deletedescr{U}{a}{
    How a deleted element $a$ is removed from the list, depends on whether $a$ is the first element of the list, the last element of the list or some other element of the list. The query bit remains 'true', if $a$ was not the first \emph{and} last element of the list.\shortOrLong{\footnote{We omit the (obvious) parts of formulas that deal with spurious deletions.}}{}
    \begin{align*}
      \uf{\First}{\del_U}{a}{x} &\df (\First(x) \mand x \neq a) \mor (\First(a) \mand \List(a,x)) \\
      \uf{\Last}{\del_U}{a}{x} &\df (\Last(x) \mand x \neq a)  \mor (\Last(a) \mand \List(x,a)) \\
      \uf{\List}{\del_U}{a}{x,y} &\df x \neq a \mand y \neq a \mand \big(\List(x,y) \mor (\List(x, a) \mand \List(a, y))\big)\\      
      \uf{\querys}{\del_U}{a}{} &\df \neg(\First(a) \wedge \Last(a))  
    \end{align*}
\vspace{-10mm}\\
 \qed
  }
\end{example}

In some parts of the paper we will use specific forms of modification sequences. An \emph{insertion sequence} is a modification sequence $\alpha = \delta_1\cdots\delta_m$ whose modifications are pairwise distinct insertions.
An insertion sequence $\alpha$ over a unary input schema $\inpSchema$ is in \emph{normal form} if it fulfills the following two conditions.
\begin{enumerate}[label=(N\arabic*)]
\item For each element $a$, the insertions affecting $a$ form a contiguous subsequence $\alpha_a$ of $\alpha$. We say that $\alpha_a$ \emph{colors} $a$.
\item For all elements $a,b$ that get assigned the same $\inpSchema$-color by $\alpha$, the projections of  the subsequences $\alpha_a$ and $\alpha_b$ to their operations (i.e., their first parameters) are identical.
\end{enumerate}

  \section{The Emptiness Problem}\label{section:emptiness}
    \toAppendix{\section{Proofs for Section \ref{section:emptiness}}}  
\makeatletter{}%
In this section we define and study the decidability of the  emptiness problem for dynamic programs in general and for restricted classes of dynamic programs. The emptiness problem asks, whether the query relation $\querys$ of a given dynamic program $\prog$ is always empty, more precisely, whether $\querys^\state=\emptyset$ for every (empty) initial database $\db$ and every modification sequence $\alpha$ with $\state = \updateStateI[\prog]{\alpha}{\db}$.

To enable a fine-grained analysis, we parameterize the emptiness problem by a class $\calC$ of dynamic programs.

\problemdescr{\Emptiness[$\calC$]}{A dynamic program $\prog\in\calC$ with $\FO$ initialization}{
Is $\querys^\state=\emptyset$, for every initially empty database $\db$ and every modification sequence $\alpha$, where $\state \df \updateStateI[\prog]{\alpha}{\db}$?}

As mentioned before, undecidability of the emptiness problem for unrestricted dynamic programs follows immediately from the undecidability of finite satisfiability of first-order logic. %

\atheorem{theorem:generalundecidability}{
$\Emptiness$ is undecidable for $\DynFOIA{2}{0}$-programs.
}
\toLongAndAppendix{
\begin{proof}
  This follows easily from the undecidability of the finite satisfiability problem for first-order logic over schemas with at least one binary relation symbol \cite{Trahtenbrot63}.
 For a given first-order formula $\varphi$ over schema $\{E\}$ we construct a $\DynFO$-program $\prog$ with a single binary input relation $E$ and a single $0$-ary auxiliary relation $\Acc$ as follows. The bit $\Acc$ is set to true whenever the modified database is a model of $\varphi$, and set to false otherwise.
  
  For correctness, we observe that if $\varphi$ is not satisfiable then $\Acc$ is always false and therefore $\prog$ is empty. On the other hand, if $\varphi$ is satisfiable, then there is a modification sequence $\alpha$ that is accepted by $\prog$, so $\prog$ is non-empty. 
\end{proof}
}

In the remainder of this section, we will shed some light on the border line between decidable and undecidable fragments of \DynFO. In Subsection \ref{section:emptinessgeneral} we study fragments of \DynFO obtained by disallowing quantification and/or restricting the arity of input and auxiliary relations. In Subsection \ref{section:emptinessconsistent}, we consider dynamic programs that come with a certain consistency guarantee.

    \subsection{Emptiness of general dynamic programs}\label{section:emptinessgeneral}
    \toAppendix{\subsection{Proofs for Section \ref{section:emptinessgeneral}}}  
\makeatletter{}%
In this subsection we study the emptiness problem for various restricted classes of dynamic programs. 
We will see that the problem is basically only decidable if all relations are at most unary and no quantification in update formulas is allowed. 
Figure \ref{figure:emptiness:general} summarizes the results.

\begin{figure}[t!]
\centering
\begin{tikzpicture}
\tikzset{
    every node/.style={
        font=\scriptsize
    },
    decision/.style={
        shape=rectangle,
        minimum height=1cm,
        text width=2cm,
        text centered,
        rounded corners=1ex,
        draw
    },
    outcome/.style={
        shape=ellipse,
        fill=gray!15,
        draw,
        text width=2cm,
        text centered
    },
    decision tree/.style={
        edge from parent path={[-latex] (\tikzparentnode) -| (\tikzchildnode)},
        sibling distance=5cm,
        level distance=1.125cm
    },
    level 3/.style={sibling distance=5cm},
    level 4/.style={sibling distance=5cm},
}

\node [decision,
        label={[yshift=0.125cm]left:allowed},
        label={[yshift=0.125cm]right:not allowed}] { Quantification in update formulas }
    [decision tree]
    child { node [outcome] {undecidable (Thm. \ref{theorem:emptiness:undecidables} (a))} }%
    child { node [decision,
        label={[yshift=0.125cm]left:unary},
        label={[yshift=0.125cm]right:binary or more}] { Arity of input relations} 
        child { node [decision,
        label={[yshift=0.125cm]left:at most unary},
        label={[yshift=0.125cm]right:binary or more}] 
        { Arity of auxiliary relations } 
            child { node [outcome] { decidable (Thm. \ref{theorem:emptiness:unarydynprop}) } }
            child { node [outcome] { undecidable (Thm. \ref{theorem:emptiness:undecidables} (b))} }%
        }
        child { node [outcome] { undecidable (Thm. \ref{theorem:emptiness:undecidables} (c))} }%
    };

\end{tikzpicture}
\caption{Decidability of \Emptiness for various classes of dynamic programs.}\label{figure:emptiness:general}
\end{figure}
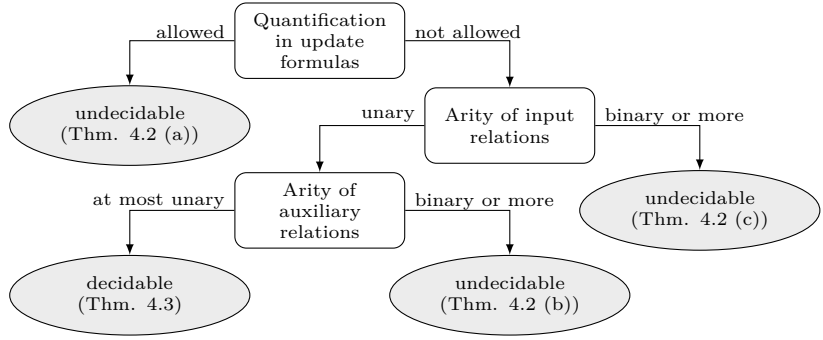

At first we strengthen the general result from Theorem \ref{theorem:generalundecidability}. We show that undecidability of the emptiness problem for  \DynFO-programs
 holds even for unary input relations and auxiliary bits. Furthermore, quantification is not needed to yield undecidability: for \DynProp-programs, emptiness is undecidable for binary input or auxiliary relations.

\atheorem{theorem:emptiness:undecidables}{
  The emptiness problem is undecidable for
  \begin{enumerate}[ref={\thetheorem\ (\alph*)}]
   \item $\DynFOIA{1}{0}$-programs, %
   \item $\DynPropIA{1}{2}$-programs, %
   \item $\DynPropIA{2}{0}$-programs, %
  \end{enumerate}
}
\aproofsketch{}
{
In all three cases, the proof is by a  reduction from the emptiness problem for semi-deterministic 2-counter automata.

In a nutshell, a counter automaton (short: CA) is a finite automaton that is equipped with counters that range over the non-negative integer numbers. A counter $c$ can be incremented ($\inc(c)$), decremented ($\dec(c)$) and tested for zero ($\ifz(c)$). A CA does not read any input (i.e., its transitions can be considered to be $\epsilon$-transitions) and in each step it can manipulate or test one counter and transit from one state to another state. More formally, a CA is tuple $(Q, C,\Delta, q_i , F )$, where $Q$ is a set of states, $q_i \in Q$ is the initial state, $F \subseteq Q$ is the set of accepting states, and $C$ is a finite set (the \emph{counters}). The transition relation $\Delta$ is a subset of 
$Q \times \{\inc(c), \dec(c), \ifz(c) \mid c \in C\} \times Q$. 
A \emph{configuration} of a CA is a pair $(p, \vec n)$ where $p$ is a state and $\vec n \in \N^C$ gives a value $n_c$ for each counter $c$ in $C$. A transition $(p,\inc(c),q)$ can be applied in state $p$, transits to state $q$ and increments $n_c$ by one. A transition $(p,\dec(c),q)$ can be applied in state $p$ if $n_c>0$, transits to state $q$ and decrements $n_c$ by one. A transition $(p,\ifz(c),q)$ can be applied in state $p$,  if $n_c=0$ and transits to state $q$.

A CA is \emph{semi-deterministic} if from every state there is either at most one transition or there are two transitions, one decrementing and one testing the same counter for zero. The emptiness problem for (semi-deterministic) 2-counter automata (2CA) asks whether a given counter automaton with two counters has an accepting run and is undecidable \cite[Theorem 14.1-1]{Minsky1967}.

In all three reductions, the dynamic program \prog is constructed such that for every run~$\rho$ of a semi-deterministic 2CA $\calM$ there is a modification sequence $\alpha=\alpha(\rho)$ that lets \prog simulate~$\rho$,  and an empty  database \db, such that \prog accepts $\alpha$ over \db if and only if $\rho$ is accepting. More precisely, the states of \prog encode the states of $\calM$ by auxiliary bits and the counters of $\calM$ in some way that differs in the three cases.  However, in all cases it holds that not every modification sequence for \prog corresponds to a run of $\calM$. However, \prog can detect if $\alpha$ does \emph{not} correspond to a run and assume a rejecting sink state as soon as this happens.     

For (a), the two counters are simply represented by two unary relations, such that the number of elements in a relation is the current value of the counter. The test whether a counter has value zero thus boils down to testing emptiness of a set and can easily be expressed by a formula with quantifiers. 

The lack of quantifiers makes the reductions for (b) and (c) a bit more complicated. In both cases, the counters are represented by linked lists, similar to Example~\ref{example:emptylist}, and the number of elements in the list corresponds to the counter value (in (c): plus 1). With such a list a counter value zero can be detected without quantification. Due to the allowed relation types, the lists are built with auxiliary relations in (b) and with input relations in (c). 
}
{
\apprepetition{In all three cases, the proof is by a  reduction from the emptiness problem for semi-deterministic 2-counter automata.

In a nutshell, a counter automaton (short: CA) is a finite automaton that is equipped with counters that range over the non-negative integer numbers. A counter $c$ can be incremented ($\inc(c)$), decremented ($\dec(c)$) and tested for zero ($\ifz(c)$). A CA does not read any input (i.e., its transitions can be considered to be $\epsilon$-transitions) and in each step it can manipulate or test one counter and transit from one state to another state.

More formally, a CA is tuple $(Q, C,\Delta, q_i , F )$, where $Q$ is a set of states, $q_i \in Q$ is the initial state, $F \subseteq Q$ is the set of accepting states, and $C$ is a finite set (the \emph{counters}). The transition relation $\Delta$ is a subset of 
$Q \times \{\inc(c), \dec(c), \ifz(c) \mid c \in C\} \times Q$.

A \emph{configuration} of a CA is a pair $(p, \vec n)$ where $p$ is a state and $\vec n \in \N^C$ gives a value $n_c$ for each counter $c$ in $C$. A transition $(p,\inc(c),q)$ can be applied in state $p$, transits to state $q$ and increments $n_c$ by one. A transition $(p,\dec(c),q)$ can be applied in state $p$ if $n_c>0$, transits to state $q$ and decrements $n_c$ by one. A transition $(p,\ifz(c),q)$ can be applied in state $p$,  if $n_c=0$ and transits to state $q$.

A \emph{run} is a sequence of configurations consistent with $\Delta$, starting from the \emph{initial configuration}  $(q_i, \vec 0)$. A run is \emph{accepting}, if it ends in some configuration $(q_f, \vec n)$ with $q_f \in F$.
A CA is \shortOrLong{}{\emph{deterministic} if $\Delta$ contains for every $p \in Q$ at most one transition $(p, \theta, q)$. It is }\emph{semi-deterministic} if for every $p \in Q$ there is at most one transition $(p, \theta, q)$ in $\Delta$ or there are two transitions $(p, \dec(c), q)$ and $(p, \ifz(c), q')$.

The emptiness problem for counter automata asks whether a given counter automaton has an accepting run. 
It follows from \cite[Theorem 14.1-1]{Minsky1967} that the emptiness problem for semi-deterministic CA \emph{with two counters} (2CA) is undecidable.\footnote{The  instruction set from  \cite{Minsky1967} contains the increment instruction and a combined instruction that decrements a counter if it is non-zero and jumps to another instruction if it is zero. To simulate the latter instruction, we use two transitions $(p, \dec(c), q)$ and $(p, \ifz(c), q')$ of which exactly one can be applied.}

In all three reductions, the dynamic program \prog is constructed such that for every run $\rho$ of the 2CA $\calM$ there is a modification sequence $\alpha=\alpha(\rho)$ that lets \prog simulate~$\rho$, and such that \prog accepts on input $\alpha$ if and only if $\rho$ is accepting. More precisely, the state of \prog encodes the state of $\calM$ by auxiliary bits and the counters of $\calM$ in some way that differs in the three cases.  However, in all cases it holds that not every modification sequence for \prog corresponds to a run of $\calM$. However, \prog can detect if $\alpha$ does \emph{not} correspond to a run and assume a rejecting sink state as soon as this happens.     

For (a), the two counters are simply represented by two unary relations, such that the number of elements in a relation is the current value of the counter. The test whether a counter has value zero thus boils down to testing emptiness of a set and can easily be expressed by a formula with quantifiers. 

The lack of quantifiers makes the reductions for (b) and (c) a bit more complicated. In both cases, the counters are represented by linked lists, where the number of elements in the list corresponds to the counter value (in (c): plus 1). With such a list a counter value zero can be detected without quantification. Due to the allowed relation types, the lists are built with auxiliary relations in (b) and with input relations in (c).}

In the following, we describe more details of the reductions.
\begin{proofenum}
\item 
We construct, from a semi-deterministic 2CA
  $\calM = (Q, \{c_1, c_2\},\Delta, q_I , F )$ a Boolean
  $\DynFOIA{1}{0}$-program $\prog$ with unary input relations $C_1$
  and $C_2$ and input bits $Z_1$ and $Z_2$ such that $\calM$ accepts a
  sequence $\theta$ of operations if and only if $\prog$ accepts a
  corresponding sequence $\alpha$ of modifications.

With a run $\rho$  of $\calM$ we can associate an input sequence $\alpha(\rho)$ on a sufficiently large domain as follows: each transition of the form $(p,\inc(c_i),q)$ gives rise to an insertion $\ins_{C_i}(d)$, for some domain value $d$ currently not in $C_i$. Likewise, each operation  $(p,\dec(c_i),q)$ corresponds to a deletion $\del_{C_i}(d)$. Finally, operations  $(p,\ifz(c_i),q)$ correspond alternatingly to operations $\ins_{Z_i}()$ and $\del_{Z_i}()$. 

The semi-determinism of $\calM$ ensures that there is always at most one applicable transition and enables the program $\calP$ to keep track of the state of $\calM$. The program ensures that only applicable transitions are taken.

The program $\prog$ has one auxiliary bit $R_p$ for every state $p$ of $\calM$, an ``error  bit'' $R_e$ and the query bit $\Acc$.
During a ``simulation'' the current state $p$ of $\calM$ corresponds to a program state in which exactly the auxiliary bit $R_p$ is true (and $\Acc$ if $p\in F$). As soon as the input sequence contains an operation that does not correspond to an applicable transition of $\calM$ (either because no transition exists or because it can not be applied due to a counter value), the error bit $R_e$ is switched on and remains on forever.

  The update formulas of $\prog$ are as follows. 
  
  \begin{align*}
      \uf{R_q}{\ins\;C_i}{u}{} &\df \neg C_i(u) \land \neg R_e \land \bigvee_{(p, \inc(c_i) , q) \in \Delta} R_{p} \\ 
      \uf{R_e}{\ins\;C_i}{u}{} &\df R_e \lor C_i(u) \lor  \bigvee_{p\in X}  R_{p}\\ 
      \uf{\Acc}{\ins\;C_i}{u}{} &\df \neg C_i(u) \land \neg R_e \land \bigvee_{\substack{(p, \inc(c_i) , q) \in \Delta \\ \text{with  $q \in F$} }} R_{p}
  \end{align*}
Here, $X$ is the set of states $p$ from $\calM$ for which no transition $(p,\inc(c_i),q)$ exists in $\Delta$.

  Deletions are handled similarly:
  \begin{align*}
      \uf{R_q}{\del\;C_i}{u}{} &\df C_i(u) \land \neg R_e \land \bigvee_{(p, \dec(c_i) , q) \in \Delta} R_{p} \\ 
      \uf{R_e}{\del\;C_i}{u}{} &\df R_e \lor \neg C_i(u) \lor  \bigvee_{p\in Y}  R_{p}\\ 
      \uf{\Acc}{\del\;C_i}{u}{} &\df C_i(u) \land \neg R_e \land \bigvee_{\substack{(p, \dec(c_i) , q) \in \Delta \\ \text{with  $q \in F$} }} R_{p}
  \end{align*}
Here, $Y$ is the set of states $p$ from $\calM$ for which no transition $(p,\dec(c_i),q)$ exists in $\Delta$.
   Modifications to $Z_i$ are handled as follows:
  \begin{align*}
      \uf{R_q}{\ins\;Z_i}{}{} \df &\neg \exists x C_i(x) \land \neg R_e \land \bigvee_{(p, \ifz(c_i) , q) \in \Delta} R_{p} \\ 
      \uf{R_e}{\ins\;Z_i}{}{} \df & R_e \lor \exists x C_i(x) \lor \bigvee_{p\in Z}  R_{p}\\ 
      \uf{\Acc}{\ins\;Z_i}{}{} \df & \neg \exists x C_i(x) \land \neg R_e \land \bigvee_{\substack{(p, \ifz(c_i) , q) \in \Delta \\ \text{with  $q \in F$} }} R_{p}\\
  \end{align*}
Here, $Z$ is the set of states $p$ from $\calM$ for which no transition $(p,\ifz(c_i),q)$ exists in $\Delta$.
  Deletions of input bits are handled exactly like insertions.
  
  Now we prove that $\calM$ has an accepting run if and only if there is a modification sequence accepted by $\prog$. 

(only-if) Let $\rho$ be an accepting run of $\calM$ and let $m$ be the maximum value that a counter of $\calM$ assumes in $\rho$. It is not hard to prove by induction that there is a  modification sequence on every domain with at least $m$ elements that corresponds to $\rho$ in the sense described above.

(if) For the other direction assume that $\alpha = \delta_1 \cdots \delta_n$ is a modification sequence over domain $\domain$ that is accepted by $\prog$. Let $\state_0$ be the initial state of $\prog$ for $\domain$ and let $\state_i$ for $i \in \{1, \ldots, n\}$ be the state reached by $\prog$ after application of $\delta_1 \cdots \delta_i$. Then, by definition of the update formulas of $\prog$ and because $\state_n$ is accepting, the bit $R^{\calS_i}_e$ is not true for any $\state_i$ and no element is inserted into $C_i$ when it was already contained in $C_i$, likewise elements are not deleted from $C_i$ when they are not contained.
The corresponding accepting run of $\calM$ is defined by the sequence $(q_0, \theta_0, q_1) \ldots (q_{n-1}, \theta_{n-1}, q_n)$ of transitions where $q_i$ is the unique state $q$ for which $R^{\state_i}_{q}$ is true. Further the value for $\theta_i$ is $\inc(c_j)$ if $\delta_{i+1}$ inserts an element into $C_j$, $\dec(c_j)$ if $\delta_{i+1}$ deletes an element from $C_j$ and $\ifzero(c_j)$ if $\delta_{i+1}$ modifies~$Z_j$.

\item 
We note that in the proof of part (a) quantification is only needed for testing whether the input relations representing the counters are empty. 

A $\DynPropIA{1}{2}$-program can simulate this check with two lists as in Example~\ref{example:emptylist} for the relations $C_1$ and $C_2$. 
When an insertion $\ins_{C_i}(d)$ occurs, corresponding to an operation $(p, \inc(c_i), q)$ in $\calM$, the element $d$ is appended to the end of the list for $C_i$.
Analogously, for a deletion $\del_{C_i}(d)$ the element $d$ is removed from the list for $C_i$.
As shown in Example~\ref{example:emptylist} the dynamic program maintains auxiliary bits $B_1,B_2$ such that $B_i$ is true if and only if $C_i$ is not empty. These bits can then be used by the update formulas instead of the quantification.
The rest of the proof is then analogous to the proof of (a).

\item 
In this reduction the counters of the CA are represented by lists, as in (b), but the lists are encoded with (at most) binary \emph{input relations}. Consequently, transitions of $\calM$ correspond to (bounded length) \emph{sequences} of modifications for a dynamic program.

For each counter $C_i$ the program \prog use one binary input relation $\List_i$, one unary input relation $\In_i$ that contains all element used in the list, three unary input relations $\RelName{Min}_i$, $\Last_i$, $\RelName{NextLast}_i$ to mark special elements, several auxiliary bits to monitor if all these input relations are used as intended and a bit $\RelName{NonEmpty}_i$ which states whether $\List_i$ is currently empty.

We now describe how to construct a modification sequence $\alpha=\alpha(\rho)$ from a run $\rho$ of a given 2CA $\calM$, that is accepted by \prog if and only if $\rho$ is accepting.

Before the actual simulation of $\calM$ can start, $\alpha$ has to initialize the input relations apart from~$\List_i$. To this end, \prog expects as the first three modifications the insertion of one element into $\RelName{Min}_i, \Last_i$ and $\In_i$. This element will serve as the head of the list.

A transition of $\calM$ that increments counter $c_i$ is translated into a series of modifications that altogether insert a new element $a$ into $\In_i$  as follows. First, $a$ is inserted into $\RelName{NextLast}_i$ and thus marked as to be inserted to the end of the list. Next the tuple $(b,a)$ is inserted into $\List_i$, where $b$ is the unique element with $b \in \Last_i$. The list is surely not empty after the insertion of $a$, so $\RelName{NonEmpty}_i$ is set to true.
After that, $b$ is removed from $\Last_i$ and $a$ is inserted into $\In, \Last_i$ and removed from $\RelName{NextLast}_i$. If the modification sequence does not follow this protocol, \prog assumes a rejecting state forever. Because every relation from $\RelName{Min}_i$, $\Last_i$, $\RelName{NextLast}_i$ contains at most one element at every time, \prog can indeed check whether all these modifications occur in the right order and on the right elements. 

Similarly, a transition of $\calM$ decrementing $c_i$ is translated into a series of modifications that altogether remove the unique element  $a \in \Last_i$ from the corresponding list as follows. Let $(b,a)$ be the tuple in $\List_i$ that contains $a$. The first modification has to be the insertion of $b$ into $\RelName{NextLast}_i$, after that $(b,a)$ is deleted from $\List_i$. If $b \in \RelName{Min}_i$ then the list is now empty and $\RelName{NonEmpty}_i$ is set to false. $a$ has to be removed from $\In$ and $\Last$, $b$ has to be inserted into $\Last$ and removed from $\RelName{NextLast}$.

It is straightforward but cumbersome to give the update formulas, so they are omitted here.

Otherwise, that is, besides the actual translation of a single step of $\calM$, the proof is analogous to the proof of (a).
\end{proofenum}
}

The next result shows that emptiness of $\DynPropIA{1}{1}$-programs is decidable, yielding a clean boundary between decidable and undecidable fragments.

 \atheorem{theorem:emptiness:unarydynprop}{
     \Emptiness is decidable for $\DynPropIA{1}{1}$-programs.
   }
\aproof{
The proof uses the following two simple observations about $\DynPropIA{1}{1}$-programs  \prog.

   \begin{itemize}
      \item The initialization formulas of \prog assign the same $\auxSchema$-color to all elements. This color and the initial auxiliary bits only depend on the size of the domain. Furthermore there is a number $n(\prog)$, depending solely on the initialization formulas, such that the initial auxiliary bits and $\auxSchema$-colors are the same for all empty databases with at least $n(\prog)$ elements. This observation actually also holds for $\DynFOIA{1}{1}$-programs.
      \item When \prog reacts to a modification $\delta=(o,a)$, the new ($\schema$-)color of an element $b\not=a$ only depends on $o$, the old color of $b$, the old color of $a$, and the 0-ary relations. In particular, if two elements $b_1,b_2$ (different from $a$) have the same color before the update, they both have the same new color after the update.  Thus, the overall update basically consists of assigning new colors to each color (for all elements except~$a$), and the appropriate handling of the element~$a$ and the 0-ary relations.
   \end{itemize}

We will show below that the behavior of $\DynPropIA{1}{1}$-programs can be simulated by an automaton model with a decidable emptiness problem, which we introduce next.

A \emph{multicounter automaton} (short: MCA) is a counter automaton which is not allowed to test whether a counter is zero, i.e. the transition relation $\Delta$ is a subset of $Q \times \{\inc(c), \dec(c) \mid c \in C\} \times Q$. 
A \emph{transfer multicounter automaton} (short: TMCA) is a multicounter counter automaton which has, in addition to the increment and the decrement operation, an operation that simultaneously transfers the content of each counter to another counter. More precisely the transition relation $\Delta$ is a subset of $Q \times (\{\inc(c), \dec(c) \mid c \in C\} \cup \{t \mid t: C \rightarrow C\}) \times Q$. Applying a transition $(p, t, q)$ to a configuration $(p, \vec n)$ yields a configuration $(q, \vec n')$ with $n'_c \df \sum_{t(d) = c} n_d$ for every $c \in C$. A configuration  $(q, \vec n)$ of a TCMA is \emph{accepting}, if $q\in F$. The emptiness problem for TCMAs\footnote{We note that (the complement of) this emptiness problem is often called \emph{control-state reachability} problem.} is decidable by reduction to the coverability problem for transfer petri nets\footnote{The simulation of states by counters can be done as in \cite[Lemma 2.1]{HopcroftP79}} which is known to be decidable \cite{DufourdFS98}.

   Let $\prog$ be a $\DynProp$-program over unary schema $\schema = \inpSchema \cup \auxSchema$ with query symbol $\querys$ which may be $0$-ary or unary. Let $\Gamma_0$ be the set of all $0$-ary (atomic) types over $\schema$ and let $\Gamma_1$ be the set of $\schema$-colors. 
   We construct a transfer multicounter automaton $\calM$ with counter set $Z_1 = \{z_{\gamma} \mid \gamma \in \Gamma_1\}$.
   The state set $Q$ of $\calM$ contains $\Gamma_0$, the only accepting state $f$ and some further ``intermediate'' states to be specified below. 

  The intuition is that whenever $\prog$ can reach a state $\state$  then $\calM$ can reach a configuration $c = (p, \vec n)$ such that $p$ reflects the $0$-ary relations in $\state$ and, for every $\gamma\in\Gamma_1$, $n_{\gamma}$ is the number of elements of color $\gamma$  in $\state$. %

  The automaton $\calM$ works in two phases. First, $\calM$ guesses the size $n$ of the domain of the initial database. To this end,  it increments the counter $z_{\gamma}$ to $n$, where $\gamma$ is the color assigned to all elements by the initialization formula for domains of size $n$, and it assumes the state corresponding to the initial 0-ary relations for a database of size $n$. Here the first of the above observations is used. Then $\calM$ simulates an actual computation of $\prog$ from the initial database of size $n$ as follows. Every modification $\ins_S(a)$ (or $\del_S(a)$, respectively) in $\prog$ is simulated by a sequence of three transitions in $\calM$:
   \begin{itemize}
     \item First, the counter $z_{\gamma}$, where $\gamma$ is the color of $a$ before the modification, is decremented.
     \item Second, the counters for all colors are adapted according to the update formulas of $\prog$.
     \item Third, the counter $z_{\gamma'}$, where $\gamma'$ is the color of $a$ after the modification, is incremented.
   \end{itemize}
  If a modification changes an input bit, the first and third step are omitted. The state of $\calM$ is changed to reflect the changes of the 0-ary relations of \prog. For this second phase the second of the above observations is used.

To detect when the simulation of $\prog$ reaches a state with non-empty query relation $\querys$, states $p \in \Gamma_0$ may have a transition to the accepting state $f$.
}{
~ %
}{

Now we describe $\calM$ in detail. We begin with the simulation of the initialization step. If the quantifier depth of $\prog$ is $q$ then $\calM$ non-deterministically guesses whether the domain is of size $1, \ldots, q$ or at least $q+1$. To this end the automaton has $q+1$ additional states $p_{1},\ldots,p_{q+1}$, and non-deterministically chooses one such state $p_i$. Recall that the initial $\auxSchema$-colors as well as the auxiliary bits depend only on the size of the domain, and that they are the same for all domains of size $\geq q+1$. Let $\gamma_0$ be the $0$-ary type and $\gamma_1$ be the color assigned to domains of size $i$. Now, $\calM$ increments the counter $z_{\gamma_1}$ to $i$ (or to at least $i$ if $i = q+1$) using some further intermediate states. Afterwards $\calM$ assumes state $\gamma_0$.

Next we explain how a computation of $\prog$ is simulated. We first deal with modifications to unary input relations. 
As the effects of an update depend on the operation that is applied to an element, the color of that element and the $0$-ary relations, $\calM$ has one chain of transitions for every such combination. So, for every state $p\in \Gamma_0$, every color $\gamma \in \Gamma_1$ and every $o \in \{\ins_S, \del_S\}$ with $S \in \inpSchema$ and $\arity(S)=1$ there are states $q_{p,\gamma,o}^1$ and $q_{p,\gamma,o}^2$ which are in charge of the simulation of an update when the modification $\delta = (o,a)$ occurs in a situation with $0$-ary type $p$ to an element $a$ of color $\gamma$.
 A transition from $p$ to $q_{p,\gamma, o}^1$ decreases the counter $z_{\gamma}$, a transition from $q_{p,\gamma,o}^2$ increases the counter for the new color of the modified element and assumes the state $p'$ corresponding to the new $0$-ary type. These two transitions simulate the changes of the auxiliary relations regarding the modified element.
A transition from $q_{p,\gamma,o}^1$ to $q_{p,\gamma,o}^2$ handles the changes to the elements not (directly) affected by~$\delta$. As explained above, for given $p$, $o$ and $\gamma$, the new color of an element depends only on its old color.
From the update formulas of $\prog$ we extract a function $g_{p,\gamma,o} : \Gamma_1 \rightarrow \Gamma_1$ which describes these changes. From $g$ we build the function $t: Z_1 \rightarrow Z_1$ that describes the transfer as $t(z_{\gamma'}) = z{_{g_{p,\gamma,o} (\gamma')}}$.

Similarly, modifications to input bits are simulated. Let $o \in \{\ins_S, \del_S\}$ with $S \in \inpSchema$ and $\arity(S)=0$ be an operation to a $0$-ary input relation. For states $p, p' \in \Gamma_0$ there is a transition $(p, t, p')$ if $t(z_{\gamma'}) = z{_{g_{p,\gamma,o} (\gamma')}}$ with $g_{p,\gamma,o} : \Gamma_1 \rightarrow \Gamma_1$ as above and $p'$ corresponds to the $0$-ary type after the update.

At last, transitions from $p \in \Gamma_0$ to $f$ are introduced. The kind of these transitions depends on the arity of $\querys$.
If $\querys$ is $0$-ary and $\querys \in p$, then there is a transfer transition $(p, id, f)$ where $id$ is the identity.
If $\querys$ is unary there is a transition $(p,\dec(\gamma),f)$ for every color $\gamma \in \Gamma_1$ with $\querys \in \gamma$.

It is not hard to show that there is a modification sequence for $\prog$ that leads to a non-empty query relation, if and only if there is a run of $\calM$ that reaches~$f$.
}

  \subsection{Emptiness of consistent dynamic programs}\label{section:emptinessconsistent}
    \toAppendix{\subsection{Proofs for Section \ref{section:emptinessconsistent}}}  

\makeatletter{}%

Some readers of the proof of Theorem \ref{theorem:emptiness:undecidables} might have got the impression that we were cheating a bit, since the dynamic programs it constructs do not behave as one would expect: in all three cases each modification sequence $\alpha$ that yields a non-empty query relation $\querys$ can be changed, e.g., by switching two operations, into a sequence that does not correspond to a run of the CA and therefore does \emph{not} yield a non-empty query relation. That is, the program \prog is \emph{inconsistent} because it might yield different results when the same database is reached through two different modification sequences.

It seems, that this inconsistency made the proof of Theorem~\ref{theorem:emptiness:undecidables} much easier. Therefore, the question arises, whether the emptiness problem becomes easier if it can be taken for granted that the given dynamic program is actually consistent. We study this question in this subsection and will investigate the related decision problem whether a given dynamic program is consistent in the next section.

As Table \ref{tab:results} shows, the emptiness problem for consistent dynamic programs is indeed easier in the sense that it is decidable for a considerably larger class of dynamic programs. While emptiness for general \DynFO programs is already undecidable for the tiny fragment with unary input relations and $0$-ary auxiliary relations, it is decidable for consistent \DynFO programs with unary input and unary auxiliary relations. Likewise, for \DynProp there is a significant gap: for consistent programs it is decidable for arbitrary input arities (with unary auxiliary relations) or arbitrary auxiliary arities (with unary input relations), but for general programs emptiness becomes undecidable as soon as binary relations are available (in the input \emph{or} in the auxiliary database).

We call a dynamic program $\prog$ \emph{consistent}, if it maintains a query with respect to an empty initial database, that is, if, for all modification sequences $\alpha$ to an  empty initial database $\db_\emptyset$, the query relation in $\updateStateI{\alpha}{\db_\emptyset}$ depends only on the database $\updateDB{\alpha}{\db_\emptyset}$. 
In the remainder of this subsection we show the undecidability and decidability results stated in  Table \ref{tab:results}. 

\atheorem{theorem:emptiness:consistentfobinary}{
  The emptiness problem is undecidable for
  \begin{enumerate}
\item consistent $\DynFOIA{2}{0}$-programs, and
\item consistent $\DynFOIA{1}{2}$-programs.
  \end{enumerate}
}
\aproofidea{
Statement (a) is a corollary of the proof of Theorem \ref{theorem:generalundecidability}, as the reduction in that proof always yields a consistent program.

For (b), we present another reduction from the emptiness problem for semi-deterministic 2CAs (see also the proof of Theorem \ref{theorem:emptiness:undecidables}). 
From a semi-deterministic 2CA $\calM$ we will construct a consistent Boolean dynamic program $\prog$ with a single unary input relation $U$. The query maintained by \prog is ``$\calM$ halts after at most $|U|$ steps''. Clearly, such a program has a non-empty query result for some database and some modification sequence if and only if $\calM$ has an accepting run.

The general idea is that $\prog$ simulates one step of the run of $\calM$ whenever a new element is inserted to $U$. A slight complication arises from deletions from $U$, since it is not clear how one could simulate $\calM$ one step ``backwards''. Therefore, when an element is deleted from $U$, \prog freezes the simulation and stores the size $m$ of $|U|$ before the deletion. It continues the simulation as soon as the current size $\ell$ of $U$ grows larger than $m$, for the first time. 
}{
~%
}{

\newcommand{\Uc}{\ensuremath{U_{\text{current}}}\xspace}
\newcommand{\Ua}{\ensuremath{U_{\text{acc}}}\xspace}
\newcommand{\Um}{\ensuremath{U_{\text{max}}}\xspace}

To help storing $m$ and $\ell$ (and the values of the counters, for that matter), \prog uses an auxiliary binary relation $R_<$ which, at any time, is a linear order on the set of those elements, that have been inserted to $U$ at some point. Whenever an element is inserted to $U$ for the first time, it becomes the maximum element of the linear order in $R_<$. Deletions and reinsertions do not affect $R_<$. 

To actually store $\ell$ and $m$, \prog uses two unary relations $\Uc$ and $\Um$. At any time, $\Uc$ contains the $\ell$ smallest elements with respect to $R_<$, where $\ell$ is the size of $U$ at the time. Similarly, $\Um$ contains the  $m$ smallest elements, with $m$ as described above. In particular, $\Uc$ is empty if and only if $\ell=0$. 
In the same fashion, \prog uses two further unary auxiliary relations $C_1$ and $C_2$ representing the counters.

If $\calM$ reaches an accepting state, \prog stores the current size $k$ of $U$ at this moment, with the help of another unary relation $\Ua$, that is, it simply lets $\Ua$ become a copy of $\Uc$ after the current insertion. From that point on, that is, if $\Ua$ is non-empty, the query bit of \prog is true whenever $\ell\ge k$. Besides the one binary and five unary relations, \prog has one $0$-ary relation $Q_p$, for every state $p$ of $\calM$.

As an illustration we give two update formulas of \prog that maintain $C_1$ and and $Q_q$, for some state $q$, under insertions to $U$, respectively.

\allowdisplaybreaks
\begin{align*}
      \uf{C_1}{\ins\;U}{u}{x} &\df \big((U(u) \lor  (\Uc\neq\Um) \lor \bigvee_{\substack{(p, \inc(c_2), q) \in \Delta \\ (p, \dec(c_2), q) \in \Delta \\(p, \ifz(c_2), q) \in \Delta}} Q_p) \land C_1(x)\big) \lor \\
        & \Big(\neg U(u) \land (\Uc=\Um) \land \\
      	 & \quad \big(\bigvee_{(p, \inc(c_1), q) \in \Delta} \big( Q_{p} \land \forall y (C_1(y) \lor x \leq y) \big)\\
      	 & \quad \vee \bigvee_{(p, \dec(c_1), q) \in \Delta} \big( Q_p \land C_1(x) \land \exists y (C_1(y) \land x < y) \big)\big)\Big) \\
      \uf{Q_q}{\ins\;U}{u}{} &\df \big((U(u) \lor (\Uc\neq\Um)) \land Q_q \big) \lor \Big( \neg U(u) \land (\Uc=\Um) \land \\
      	 & \big(\bigvee_{\substack{ (p, \inc(c_j), q) \in \Delta\\j\in\{1,2\}}} Q_{p}\\
 &\vee \bigvee_{\substack{ (p, \dec(c_j), q) \in \Delta\\j\in\{1,2\}}} (Q_p \land \exists x C_j(x)) \\
      	 & \vee \bigvee_{\substack{ (p, \ifz(c_j), q) \in \Delta\\j\in\{1,2\}}} (Q_p \land \neg \exists x C_j(x)) \big)\Big)
  \end{align*}

Here,   $\Uc=\Um$ abbreviates the formula $\forall y\; (\Uc(y) \leftrightarrow \Um(y))$. %
We note that $\ufwa{C_1}{\ins\;U}$ does not test the applicability of transitions directly, but $\ufwa{Q_q}{\ins\;U}$ does. 

We recall that, thanks to semi-determinism of $\calM$, the next transition is always uniquely determined by the state of $\calM$ and the value of the affected counter.
If no transition can be applied, the simulation does not set any bit $Q_i$ to true and the simulation basically stops.
}

\shortOrLong{Contrary to the case of not necessarily consistent programs, the emptiness problem is decidable for consistent $\DynFOIA{1}{1}$-programs.
}{}%
\toLongAndAppendix{
Contrary to the case of not necessarily consistent programs, the emptiness problem is decidable for consistent $\DynFOIA{1}{1}$-programs.
We will use the fact that the truth of first-order formulas with quantifier depth $k$ in a state of a $\DynFOIA{1}{1}$-program only depends on the number of elements of every color up to $k$.

Intuitively the states of a consistent $\DynFOIA{1}{1}$-program can be approximated by a finite amount of information, namely the number of elements of every color up to some constant. This can be used to construct, from a consistent $\DynFOIA{1}{1}$-program $\prog$, a nondeterministic finite automaton $\calA$ that reads encoded modification sequences for $\prog$ in normal form and approximates the state of $\prog$ in its own state. In this way the emptiness problem for consistent $\DynFOIA{1}{1}$-programs reduces to the emptiness problem for nondeterministic finite automata.

To formalize this, for a $\DynFOIA{1}{1}$-program $\prog$ let $c_1, \ldots, c_M$ be the colors over the schema of $\prog$. 
The \emph{characteristic vector} $\vec n(\state) = (n_1, \ldots, n_M)$ for a state $\state$ over the schema of $\prog$ stores for every color $c_i$ the number $n_i \in \N$ of elements of color $c_i$ in $\state$. We also denote this number as $n_i(\state)$.
We write $n\simeq_k m$, for numbers $k,n,m$, if $n=m$ or both $n\ge k$ and $m\ge k$.
We write $(n_1,\ldots,n_M) \simeq_k (n'_1,\ldots,n'_M)$, if for every $i \leq M$, $n_i \simeq_k n_i'$, and $\state \simeq_k \state'$ for two states $\state$ and $\state'$ if $\vec n(\state) \simeq_k \vec n(\state')$ and the bits in $\state$ and $\state'$ are equally valuated.

\begin{lemma}\label{lem:PropCons11}
Let $\prog$ be a $\DynFOIA{1}{1}$-program with quantifier depth $q$ and let $\state$ and $\state'$ be two states for $\prog$.
  \begin{enumerate}[ref={\thetheorem\ (\alph*)}]
 \item  $\state \simeq_k \state'$ if and only if $\state \equiv_k \state'$ for any $k \in \N$.\label{lem:CharacVectorimpliesFOtype}
 \item Let $a$ and $a'$ be elements from $\state$ and $\state'$ with the same color $c_i$ and let $k = q+1$. 
If $\state \simeq_k \state'$ and $n_0(\state) \simeq_{k+1} n_0(\state')$ then $\updateState{\delta(a)}{\state} \simeq_k \updateState{\delta(a')}{\state'}$ for every modification $\delta$. \label{lem:UnaryStateProgression}
\end{enumerate}
\end{lemma}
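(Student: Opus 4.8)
The plan is to treat both program states as finite \emph{colored sets} (equipped with the valuation of the $0$-ary relations) and to argue entirely through Ehrenfeucht--Fra\"iss\'e games and a counting argument, exploiting that over a purely unary schema the rank-$k$ type of a structure is captured by its bit-valuation together with, for each color, the number of elements of that color counted up to threshold $k$.

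For part (a) I would prove the classical fact that the rank-$k$ type of a finite colored set with $0$-ary relations is determined exactly by the valuation of the $0$-ary relations and the vector of color counts, each capped at $k$. The direction $\state\not\simeq_k\state'\Rightarrow\state\not\equiv_k\state'$ is by exhibiting a distinguishing sentence: if the bits differ a $0$-ary atom already separates them, and if some color $c$ has $n_c(\state)\neq n_c(\state')$ with $m\df\min(n_c(\state),n_c(\state'))<k$, then the sentence ``there are at least $m+1$ elements of color $c$'', of quantifier rank $m+1\le k$, separates them. For $\state\simeq_k\state'\Rightarrow\state\equiv_k\state'$ I would give Duplicator's strategy in the $k$-round game: the $0$-ary relations agree by hypothesis, and Duplicator answers any Spoiler move of color $c$ by a fresh element of the same color on the opposite side, which is always possible since the reserves of each color agree up to threshold $k$ and at most $k$ moves are made; after $k$ rounds the induced map is a partial isomorphism.

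For part (b) the key structural observation is that every update formula $\uf{R}{o}{x}{y}$ has quantifier rank at most $q$, so for a modification $\delta=(o,a)$ the new membership of an element $x$ in a unary relation $R$ is the truth of $\uf{R}{o}{a}{x}$ and hence depends only on the rank-$q$ type of the pair $(a,x)$ in $\state$; likewise the new $0$-ary relations depend only on the rank-$q$ type of the single element $a$. Since any two same-colored elements other than $a$ are interchanged by an automorphism of $\state$ fixing $a$, the new color of such an element depends only on its old color, which defines a recoloring function $g_\state$ on colors (and a separate new color for $a$ itself). I would then reuse the game analysis of part (a) with one or two elements pre-placed --- recalling that the rank-$q$ type of a single element or of a pair in a colored set is fixed by the colors of the pebbles, the equality type, and the remaining color counts capped at $q$ --- to conclude that under the hypotheses $g_\state=g_{\state'}$, that $a$ and $a'$ receive the same new color, and that the updated $0$-ary relations coincide. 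Finally I would compute the new characteristic vectors: the new count of a color $e$ is $\sum_{c:\,g(c)=e}n_c$ with a $-1/{+}1$ correction from moving $a$ out of the image $g(c_i)$ into its own new color. As $\simeq_k$ is preserved under summing matched color counts, this yields $\updateState{\delta(a)}{\state}\simeq_k\updateState{\delta(a')}{\state'}$.

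The main obstacle is the threshold bookkeeping in the matching step. Comparing the types of $(a,b)$ and $(a',b')$ sees color counts only after subtracting the pebbled elements of that color. For a color $c\neq c_i$ exactly one element is removed, so $\simeq_{q+1}$ of $\state,\state'$ comfortably yields the needed $\simeq_q$; and the new color of $a$, involving only the single pebble $a$, is likewise controlled. The delicate case is an element $b$ sharing the color $c_i$ of the modified element $a$: then \emph{two} elements are removed from that class, so one needs its count to agree up to threshold $q+2=k+1$ rather than $k$, which is exactly what the extra hypothesis supplies for the class it names. I would verify carefully that this sharper threshold is required only for the critical class and that the situations in which the lemma is invoked (the modified element lying in that class) make $n_0(\state)\simeq_{k+1}n_0(\state')$ the right strengthening; this off-by-one is the source of the asymmetry between the two hypotheses and is precisely where a naive argument with a single uniform threshold $k$ breaks down.
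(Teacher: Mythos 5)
Your proposal is correct and takes essentially the same route as the paper's proof: part (a) via counting sentences in one direction and an Ehrenfeucht--Fra\"iss\'e game in the other, and part (b) by reducing the effect of an update to the rank-$q$ types of pebbled pairs, with $k=q+1$ absorbing the single-pebble decrements and the $\simeq_{k+1}$ hypothesis absorbing the double decrement in the modified element's color class. Your threshold bookkeeping is in fact more explicit than the paper's very terse argument for (b), and your reading of the asymmetric hypothesis (that it is the right strengthening exactly because the lemma is invoked with the modified element lying in the class it names) matches how the paper applies the lemma.
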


We recall that $\state\equiv_k \state'$ means that the two states satisfy exactly the same first-order formulas of quantifier depth (up to) $k$.

\begin{proof}
\begin{proofenum}
\item It is easy to express with a first-order formula of quantifier depth $k$ that the number of elements of a color $c$ is exactly $k'$ for $k' < k$ or at least $k$.
So the only if direction follows. 
If $\state \simeq_k \state'$ holds, then Duplicator has a straightforward winning strategy in the $k$-rounds Ehrenfeucht-Fra\"isse game, so $\state \equiv_k \state'$ follows.
\item With part (a), $(\state,a) \equiv_k (\state',a')$. Since $k = q+1$, if elements $b$ and $b'$ from $\state$ and $\state'$ have the same color and $b = a$ if and only if $b'=a'$, they also have the same color in $\updateState{\delta(a)}{\state}$ and $\updateState{\delta(a')}{\state'}$.
The claim of the lemma follows.
 \end{proofenum}
\end{proof}

With the help of the previous lemma, we can now show the following decidability result.}

\atheorem{theorem:emptiness:consistentunaryaux}{
  \Emptiness is decidable for consistent $\DynFOIA{1}{1}$-programs. 
}
\aproofidea{}{
The proof uses the fact that the truth of first-order formulas with quantifier depth $k$ in a state of a $\DynFOIA{1}{1}$-program only depends on the number of elements of every color up to $k$. The states of a consistent $\DynFOIA{1}{1}$-program can therefore be abstracted by a bounded amount of information, namely the number of elements of every color up to $k+1$. This can be used to construct, from a consistent $\DynFOIA{1}{1}$-program $\prog$, a nondeterministic finite automaton $\calA$ that reads encoded modification sequences for $\prog$ in normal form and represents the abstracted state of $\prog$ in its own state. In this way the emptiness problem for consistent $\DynFOIA{1}{1}$-programs reduces to the emptiness problem for nondeterministic finite automata.
}{
We reduce the emptiness problem for consistent $\DynFOIA{1}{1}$-programs to the emptiness problem for nondeterministic finite automata.
The intuition is as follows. From a consistent $\DynFOIA{1}{1}$-program $\prog$, we construct a nondeterministic finite automaton $\calA$ that reads encoded modification sequences for $\prog$ in normal form and approximates the state of $\prog$ in its own state.
To this end $\calA$ has a state $q_\calE$ for every equivalence class $\calE$ of $\simeq_k$ for a well-chosen $k \in \N$. The automaton accepts if it reaches a state $q_\calE$ where $\calE$ corresponds to states of $\prog$ with non-empty query relation. 

We make this more precise now. The following facts are exploited in the proof: 
\begin{itemize}
\item As $\prog$ is consistent, if there is a modification sequence that leads to a state with a non-empty query relation, then there is an insertion sequence in normal form that leads to such a state.
\item If two elements $a, a'$ have the same color in some state of the program, then they still have the same color after an element $b \neq a, a'$ has been modified.
\item For knowing how a state $\state$ is updated by $\prog$, it is enough to consider the $\simeq_k$ equivalence class of $\state$ for a suitable $k$.
\end{itemize}

In an insertion sequence in normal form, an element is touched by at most $\ell$ insertions where $\ell$ is the number of unary relation symbols in $\inpSchema$. As the insertions involving a single element occur consecutively in such a sequence, the occurring updates can be specified by ``extended'' update formulas of quantified depth $\ell{}q$, by nesting the original update formulas of quantifier depth $q$. For $k \df \ell{}q + 1$, states $\state$ and $\state'$ with $\state \simeq_k \state'$ then meet the requirements of Lemma \ref{lem:UnaryStateProgression} when those extended update formulas are considered.

The alphabet $\Sigma$ of $\calA$ is the set of \emph{proper} $\inpSchema$-colors ($\neq c_0$). For every equivalence class $\calE$ of $\simeq_k$, for $k$ as chosen above, the automaton $\calA$ has a state $q_{\calE}$. The idea is that the automaton simulates $\prog$ by approximating the state of $\prog$ by its $\simeq_k$-equivalence class. More precisely, whenever $\calA$ is in state $q_\calE$ after reading a word $w$ over $\Sigma$ then $\calE$ is the equivalence class of the state $\state$ reached by $\prog$ after the modification sequence $\alpha$ corresponding to $w$.

There is a small caveat to this. The state reached by $\prog$ after application of $\alpha$ is not solely determined by $\alpha$ but also by the size of the domain. The automaton has to take this into account. 

We now describe the behaviour $\calA$ in detail. At the beginning of a computation the automaton non-deterministically guesses the (approximate) size of the domain, that is, a number $i$ from $\{1,\ldots, k\}$ and assumes state $q_\calE$ where $\calE$ is the equivalence class of $\simeq_k$ that corresponds to an initial state of $\prog$ with $i$ elements if $i < k$ and at least $i$ elements otherwise. Note that if $i = k$ then the automaton does not know the exact size of the domain for which it is simulating $\prog$. Yet, as long as there are at least $k$ $\inpSchema$-uncolored elements, the exact number is not important.

Afterwards $\calA$ simulates $\prog$. When in state $q_\calE$ and reading a symbol $c$, the automaton assumes state $q_{\calE'}$ where $\calE'$ is as follows:

\begin{itemize}
  \item If $\calE$ indicates less than $k$ $\inpSchema$-uncolored elements then $\calE'$ is the $\simeq_k$-equivalence class of any state $\state'$ reached by $\prog$ from a state $\state$ with $\simeq_k$-equivalence class $\calE$. 
  \item	If $\calE$ indicates at least $k$ $\inpSchema$-uncolored elements, then $\calA$ guesses whether this is still the case after coloring one further element. If yes, then $\calE'$ is the $\simeq_k$-equivalence class of any state $\state'$ reached by $\prog$ from a state $\state$ with $\simeq_k$-equivalence class $\calE$ and at least $k+1$ $\inpSchema$-uncolored elements. Otherwise $\calE'$ is the $\simeq_k$-equivalence class of any state $\state'$ reached by $\prog$ from a state $\state$ with $\simeq_k$-equivalence class $\calE$ and at least $k$ $\inpSchema$-uncolored elements. 
\end{itemize}

That $\calE'$ is uniquely determined follows from the second and third fact from above.

}

\makeatletter{}%
The picture of decidability of emptiness for consistent programs for all classes of the form $\DynFOIA{\ell}{m}$ is pretty clear and simple: it is decidable if and only if $\ell= 1$ \emph{and} $m\le 1$. Now  we turn our focus to the corresponding classes of  consistent $\DynProp$-programs. Here we do not have a full picture. We show in the following that it is decidable if $\ell= 1$ \emph{or} $m\le 1$.
\begin{theorem}\label{theorem:emptiness:consistent:unaryDynProp}
  The emptiness problem is decidable for 
  \begin{enumerate}
   \item consistent  $\DynPropI{1}$-programs. 
   \item consistent $\DynPropA{1}$-programs. 

  \end{enumerate}

\end{theorem}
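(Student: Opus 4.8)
The plan is to treat both parts by the strategy already used in the unary cases: abstract the reachable program states into a finite amount of information and reduce \Emptiness to the emptiness problem of a decidable automaton model (finite automata as in the proof of Theorem~\ref{theorem:emptiness:consistentunaryaux}, or transfer multicounter automata as in the proof of Theorem~\ref{theorem:emptiness:unarydynprop}). Three ingredients are available throughout. First, consistency lets us restrict to insertion sequences: since the query relation of a consistent program depends only on the current input database, $\querys$ is non-empty for some reachable state if and only if it is non-empty for some input database reachable by inserting tuples into the initially empty database. Second, the framework is \emph{generic}, i.e.\ permuting the domain and the modification sequence simultaneously permutes the resulting state, because the initialization is \FO-definable on the (permutation-invariant) empty database and the update formulas are first-order; hence elements that are indistinguishable in the input database, and in particular all ``fresh'' elements not yet touched by any modification, carry identical auxiliary information. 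Third, quantifier-freeness gives locality: after a modification $(o,\vec a)$, membership of a tuple $\vec y$ in any auxiliary relation depends only on the atomic type of $(\vec a,\vec y)$ in the previous state.

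For part (a) the input schema is unary, so every insertion sequence can be brought into normal form, and by genericity all elements receiving the same \inpSchema-color may be assumed to behave identically. By consistency and genericity, $\querys$ is non-empty for some input database if and only if, for some orbit of query tuples, determined by the equality type of a representative $\vec b=(b_1,\dots,b_s)$ together with the \inpSchema-colors of its components, and for some choice of the number $n_c$ of further elements of each color $c$, we have $\vec b\in\querys$. As the arity of $\querys$ is fixed there are only finitely many such orbits, so it remains to decide, for a fixed orbit, whether some count vector $(n_c)_c$ realizes $\vec b\in\querys$. I would process the color blocks of a normal-form sequence one after another while tracking only the type of $\vec b$ augmented by a bounded number of representative elements per color; since fresh elements are symmetric, a fresh element has a uniquely determined type relative to $\vec b$, and by locality inserting one further element of a fixed color acts as a fixed transformation on this bounded type information. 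As the type space is finite, iterating such a transformation is eventually periodic, so the membership $\vec b\in\querys$ depends on $(n_c)_c$ only up to $\simeq_K$ for a computable threshold $K$; equivalently, the bounded information can be read by a finite automaton over the color alphabet. Only finitely many count vectors then need to be inspected, each on a concrete finite database, which settles decidability.

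For part (b) the auxiliary schema is unary, so $\querys$ is $0$-ary or unary and, besides the $0$-ary bits, the only relevant feature of a state is the number of elements of each of the finitely many \auxSchema-colors; this is exactly the abstraction underlying the transfer-multicounter reduction of Theorem~\ref{theorem:emptiness:unarydynprop}. The plan is to simulate the program by a transfer multicounter automaton over this color-count vector, where a modification $(o,\vec a)$ induces a decrement/increment for the (boundedly many) components of $\vec a$ together with a transfer describing the reassignment of colors of the remaining elements. The complication is that the input relations now have unbounded arity, so the new color of an element outside $\vec a$ depends on how it is linked to $\vec a$ in the high-arity input and need not follow a single uniform transfer. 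I would use consistency and genericity to reduce to input databases built so that each inserted tuple interacts with the previously inserted elements only through bounded, uniform link types, for which the induced action on the color-count vector is indeed a transfer; emptiness of the resulting transfer multicounter automaton is then decidable.

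In both parts the main obstacle is to justify the finite abstraction against the single unbounded parameter. For (a) this is the bounded-count claim in the presence of arbitrary auxiliary arity: although the auxiliary relations can record information about tuples of unbounded size, one must show that the fixed query tuple $\vec b$ can absorb only boundedly much of it, and this is precisely where quantifier-free locality has to be combined with the symmetry of equally-colored and of fresh elements. For (b) the obstacle is dual, namely controlling the unbounded input arity; the crux is to argue, using consistency, that restricting to uniformly linked input sequences does not change whether $\querys$ is ever non-empty, since a priori the query might only be triggered by structurally dense inputs and an unjustified restriction would wrongly report emptiness.
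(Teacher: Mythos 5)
For part (a) your plan is essentially the paper's own proof. The paper also uses consistency to restrict attention to insertion sequences in normal form (with the elements of a witnessing query tuple modified last), and then exploits exactly the two facts you isolate: all tuples of so-far-untouched elements share the same atomic type after each update, and there are only boundedly many such types (these are the observations the paper imports from \cite[Theorem 3.2]{GeladeMS12}). This gives a pumping bound on a shortest witnessing sequence for query relations of positive arity, and for Boolean query relations the effectively constructible finite automaton over the $\inpSchema$-colors from that theorem; your ``eventually periodic transformation on a finite type space, hence a finite automaton over the color alphabet'' is the same mechanism, so nothing essential is missing there.

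Part (b) is where your proposal has a genuine gap. The transfer-multicounter abstraction of Theorem \ref{theorem:emptiness:unarydynprop} is faithful only because there both the input and the auxiliary relations are unary. For a $\DynPropA{1}$-program the update formulas range over $\inpSchema \cup \auxSchema$, so after a modification $(o,\vec a)$ the new $\auxSchema$-color of an element $b$ outside $\vec a$ depends on which \emph{input} facts hold among the components of $(\vec a,b)$; two elements with the same $\auxSchema$-color but different input links to $\vec a$ are recolored differently. Hence no single transfer function on color counts describes an update step, and the link information that would have to be added to the abstraction grows with the database. Your proposed remedy---restricting to input databases whose tuples interact only through bounded, uniform link types---is precisely the step that needs proof, and, as you yourself note, an unjustified restriction could make the algorithm report emptiness for a program whose query is triggered only by dense inputs. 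The paper closes exactly this gap with a different tool: a small-model property proved via the Sunflower Lemma for tuples (Lemma \ref{lemma:sunflower-tuples}). If a minimal accepted database contained a relation with sufficiently many tuples, that relation would contain a large sunflower; inserting the non-sunflower tuples first, the petals (pairwise disjoint outside the common core, and typed only by unary auxiliary relations) share atomic types, so along the insertion of the petals some type repeats and a block of petals can be excised, producing a smaller accepted database---a contradiction. Decidability then follows by brute-force search over databases with boundedly many tuples, not by an automaton simulation. Without this combinatorial ingredient, or a substitute for it, your argument for part (b) does not go through.
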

\shortOrLong{
\begin{proofideaof}{Theorem \ref{theorem:emptiness:consistent:unaryDynProp} (a)}
The statement follows almost immediately from the fact that every consistent $\DynPropI{1}$-program with $0$-ary query relations maintains a regular language \cite[Theorem 3.2]{GeladeMS12}.
\end{proofideaof}
}{
\begin{proofof}{Theorem \ref{theorem:emptiness:consistent:unaryDynProp} (a)}
In \cite[Theorem 3.2]{GeladeMS12} it is shown that over databases with a linear order and unary relations every $\DynPropI{1}$-program~$\prog$ with a Boolean query relation maintains a regular language over the $\inpSchema$-colors of the $\inpSchema$-colored elements.  This result holds for arbitrary initialization and its proof shows that an automaton for this regular language can be effectively constructed from the dynamic program. Therefore, to test emptiness of a program with a Boolean query relation it suffices to test emptiness of its automaton. 

Suppose that $\prog$ has a query relation with arity $k > 0$ and that there is a modification sequence $\alpha$ that yields a state $\state$ where the query relation contains a tuple $\vec a \df (a_1, \ldots, a_k)$. Without loss of generality we assume that $\alpha$ is an insertion sequence in normal form and that elements of $\vec a$ are modified at last (if they are modified at all). In other words, $\alpha$ is of the form $\alpha_1 \ldots \alpha_M$ where each $\alpha_i$ modifies exactly one element, and there is an $N$ such that $\alpha_j$ with $j \geq N$ only modifies elements of $\vec a$.

We use a pumping argument to argue that if $\alpha$ is a shortest such sequence, then it is not very long. Then emptiness of $\prog$ can be tested by examining all such modification sequences. We use the following observations from \cite[Theorem 3.2]{GeladeMS12}:
\begin{enumerate}
 \item After each update, all tuples of positions that have not been touched so far have the same (atomic) type.
 \item There is only a bounded number (depending only on the number and the maximal arity of the auxiliary relations of $\prog$) of different types of such tuples.
\end{enumerate}
Let $\calS_i$ be the state reached by applying $\alpha_1 \ldots \alpha_i$. If $N$ is larger than the number of (atomic) $k$-ary types then, by the observations (a) and (b), there are $j$, $j'$ with $j < j'$ such that all $l$-tuples whose elements have not been touched so far have the same type in $\state_j$ and~$\state_{j'}$. In particular $\vec a$ has the same type in $\state_j$ and $\state_{j'}$. Hence, since $\prog$ is quantifier-free, it also has the same type in $\state$ (the state reached by applying $\alpha$) and in the state reached by applying the modification sequence $\alpha_1 \ldots \alpha_j \alpha_{j'+1} \ldots \alpha_N \alpha_{N+1} \ldots \alpha_M$. Thus the query relation contains $\vec a$ in the latter state.
\end{proofof}
}

\shortOrLong{To highlight the role of the Sunflower Lemma for the proof of Theorem \ref{theorem:emptiness:consistent:unaryDynProp} (b), we give a full exposition of this proof in the following. At first, we sketch the basic proof idea for consistent $\DynPropA{1}$-programs over graphs, i.e., the input schema contains a single binary relation symbol $E$.}
{Before we prove the general statement of Theorem \ref{theorem:emptiness:consistent:unaryDynProp} (b), we first sketch the basic proof idea for consistent $\DynPropA{1}$-programs over graphs, i.e., the input schema contains a single binary relation symbol $E$.}
For simplicity we also assume a $0$-ary query relation. The general statement requires more machinery and is proved below.
  
  Our goal is to show that if such a program $\prog$ accepts some graph then it also accepts one with ``few'' edges, where ``few'' only depends on the schema of the program. To this end we show that if a graph $G$ accepted by $\prog$ contains many edges then one can find a large ``well-behaved'' edge set in $G$ from which edges can be removed without changing the result of~$\prog$. Emptiness can then be tested in a brute-force manner by trying out insertion sequences for all graphs with few edges (over a canonical domain $\{1,\ldots,n\}$).
  
  More concretely, we consider an edge set ``well-behaved'', if it consists only of self-loops, it is a set of disjoint non-self-loop-edges, or is is a \emph{star}, that is, the edges share the same source node or the same target node. From the Sunflower Lemma \cite{ErdosR60} it follows that for every $p \in \N$ there is an $N_p \in \N$ such that every (directed) graph with $N_p$ edges contains $p$ self-loops, or $p$ disjoint edges, or a star with $p$ edges. 

  Let us now assume, towards a contradiction, that the minimal graph accepted by $\prog$ has $N$ edges with $N>N_{M^2+1}$, where $M$ is the number of binary (atomic) types over the schema $\tau=\inpSchema\cup\auxSchema$ of $\prog$. Then $G$ either contains $M^2+1$ self-loops, or $M^2+1$ disjoint edges, or a $(M^2+1)$-star. 
  
  Let us assume first that $G$ has a set $D \subseteq E$ of $M^2+1$ disjoint edges. We consider the state $\state$ reached by $\prog$ after inserting all edges from $E \setminus D$ into the initially empty graph. Since $D$ contains $M^2+1$ edges, there is a subset $D' \subseteq D$ of size $M+1$ such that all edges in $D'$ have the same atomic type in state $\state$. Let $\state_0$ be the state reached by $\prog$ after inserting all edges in $D \setminus D'$ in  $\state$. All edges in $D'$ still have the same type in $\state_0$ since $\prog$ is a quantifier-free program (though this type can differ from the type in $\state$). Let  $e_1, \ldots, e_{M+1}$ be the edges in $D'$ and denote by $\state_i$ the state reached by $\prog$ after inserting  $e_1, \ldots, e_i$ in $\state_0$. For each $i$, all edges $e_{i+1},\ldots, e_{M+1}$ have the same type $\gamma_i$ in  state $\state_i$, again. As the number of binary atomic types is $M$, there are $i<j$ such that $\gamma_i=\gamma_j$, thus $e_{M+1}$ has the same type in $\state_i$ and~$\state_{j}$. Therefore, inserting the edges  $e_{j+1}, \ldots, e_{M+1}$ in $\state_i$ yields a state with the same query bit as inserting those edges in $\state_j$. As the query bit in the latter case is accepting, it is also accepting in the former case, yet in that case the underlying graph has fewer edges than~$G$, the desired contradiction.
  The case where $G$ contains $M^2+1$ self-loops is completely analogous.

  Now assume that $G$ contains a star with $M^2+1$ edges. The argument is very similar to the argument for disjoint edges. First insert all edges not involved in the star into an initially empty graph. Then there is a set $D$ of many  star  edges of the same type, and they still have the same type after inserting the other edges of the star. A graph with fewer edges that is accepted by~$\prog$ can then be obtained as above. 
  
  The idea generalizes to input schemata with larger arity by applying the Sunflower Lemma in order to obtain a ``well-behaved'' sub-relation within an input relation that contains many tuples. 
  In order to prove this generalization we first recall the Sunflower Lemma, and observe that it has an analogon for tuples. 

The Sunflower Lemma was introduced in \cite{ErdosR60}, here we follow the presentation in \cite{Jukna01}. A \emph{sunflower} with $p$ \emph{petals} and a \emph{core} $Y$ is a collection of $p$ sets $S_1, \ldots, S_p$ such that $S_i \cap S_j = Y$ for all $i \neq j$.

\begin{lemma}[Sunflower Lemma, \cite{ErdosR60}]\label{lemma:sunflower}
 Let $p \in \N$ and let $\calF$ be a family of sets each of cardinality $\ell$. If $\calF$ consists of more than $N_{\ell, p} \df \ell!(p-1)^\ell$ sets then $\calF$ contains a sunflower with $p$ petals. 
\end{lemma}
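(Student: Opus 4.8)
The plan is to prove the Sunflower Lemma by induction on the common cardinality $\ell$ of the sets in $\calF$. For the base case $\ell = 1$, every member of $\calF$ is a singleton and $N_{1,p}=1!(p-1)^1 = p-1$, so a family with more than $p-1$ (distinct) singletons contains at least $p$ pairwise disjoint one-element sets; these constitute a sunflower with $p$ petals and empty core $Y=\emptyset$. The substance of the argument is the inductive step, whose central idea is to inspect a \emph{maximal} subfamily of pairwise disjoint sets and branch on its size.

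For the inductive step I would fix $\ell \geq 2$, assume the claim for $\ell-1$, and take a family $\calF$ of $\ell$-element sets with $|\calF| > N_{\ell,p}$. Choose a maximal collection $S_1, \ldots, S_t \in \calF$ of pairwise disjoint sets. If $t \geq p$, then $S_1, \ldots, S_t$ already form a sunflower with at least $p$ petals and empty core, and we are done. So assume $t \leq p-1$ and set $Y \df S_1 \cup \cdots \cup S_t$, so that $|Y| \leq t\ell \leq (p-1)\ell$. By maximality, every set in $\calF$ must intersect $Y$, since any set disjoint from $Y$ could be adjoined to the collection.

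The counting now proceeds by pigeonhole: since each of the more than $N_{\ell,p}$ sets of $\calF$ meets the set $Y$ of size at most $(p-1)\ell$, some single element $x \in Y$ lies in more than $N_{\ell,p}/((p-1)\ell)$ of them. The crucial identity is that this threshold is exactly $N_{\ell-1,p}$:
\[
 \frac{N_{\ell,p}}{(p-1)\ell} = \frac{\ell!(p-1)^\ell}{(p-1)\ell} = (\ell-1)!(p-1)^{\ell-1} = N_{\ell-1,p}.
\]
Collecting the sets of $\calF$ that contain $x$ and deleting $x$ from each yields a family $\calF'$ of \emph{distinct} $(\ell-1)$-element sets (distinctness is preserved because all original sets shared the element $x$) with $|\calF'| > N_{\ell-1,p}$. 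The induction hypothesis supplies a sunflower $T_1, \ldots, T_p \in \calF'$ with some core $Y'$; re-inserting $x$ gives sets $T_1 \cup \{x\}, \ldots, T_p \cup \{x\}$ in the original family whose pairwise intersections are all $Y' \cup \{x\}$, i.e.\ a sunflower with $p$ petals.

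I expect the only delicate point to be the bookkeeping in the pigeonhole step: one must check that $|Y| \leq (p-1)\ell$ combined with $|\calF| > N_{\ell,p}$ forces the number of sets through a single element of $Y$ to \emph{strictly} exceed $N_{\ell-1,p}$, so that the induction hypothesis truly applies. The displayed identity is precisely what makes the constant $\ell!(p-1)^\ell$ the right choice; any looser counting would break the recursion. Everything else — the dichotomy on $t$, the preservation of distinctness under deletion of $x$, and the reassembly of the core as $Y' \cup \{x\}$ — is routine and carries no real obstacle.
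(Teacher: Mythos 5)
Your proof is correct: it is the classical Erd\H{o}s--Rado induction on $\ell$ (maximal disjoint subfamily, pigeonhole on the core $Y$, and the identity $\ell!(p-1)^\ell/((p-1)\ell)=(\ell-1)!(p-1)^{\ell-1}$), and all the delicate points (strictness in the pigeonhole count, distinctness after deleting $x$, reassembly of the core as $Y'\cup\{x\}$) are handled properly. Note that the paper itself gives no proof of this lemma --- it imports it from \cite{ErdosR60}, following the presentation in \cite{Jukna01} --- and your argument is exactly the standard proof found in those cited sources.
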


We call a set $H$ of tuples of some arity $\ell$ a \emph{sunflower (of tuples)} if it has the following three properties.
\begin{enumerate}[label=(\roman*)]
\item All tuples in $H$ have the same equality type.
\item There is a set $J\subset\{1,\ldots,\ell\}$ such that $t_j=t'_j$ for every $j\in J$ and all tuples $t,t'\in H$.
\item For all tuples $t\not=t'$ in $H$ the sets $\{t_i\mid i\not\in J\}$ and   $\{t'_i\mid i\not\in J\}$ are disjoint.
\end{enumerate}
We say that $H$ has $|H|$ petals.

The following Sunflower Lemma for tuples has been stated in various variants in the literature, e.g., in \cite{Marx05,KratschW10}.

\begin{lemma}[Sunflower Lemma for tuples]\label{lemma:sunflower-tuples}
Let  $\ell, p \in \N$ and let $R$ be a set of $\ell$-tuples. If $R$ contains more than $\bar{N}_{\ell, p}\df \ell^\ell p^\ell(\ell!)^2$ tuples then it contains a sunflower with $p$ petals.
\end{lemma}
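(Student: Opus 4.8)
The plan is to reduce to the set Sunflower Lemma (Lemma~\ref{lemma:sunflower}) by encoding each tuple as the \emph{set of values} it uses, extracting a set-sunflower of these value sets, and reading a tuple-sunflower off it. The reduction runs through a short chain of pigeonhole steps whose multiplicative cost is arranged to match $\bar N_{\ell,p}=\ell^\ell p^\ell(\ell!)^2$.

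First I would partition $R$ according to the \emph{equality type} of its tuples. There are at most $B_\ell\le\ell^\ell$ such types (each partition of $\{1,\dots,\ell\}$ is recovered from the function sending $i$ to the least element of its block), so some type $\tau$ is shared by more than $\bar N_{\ell,p}/\ell^\ell=p^\ell(\ell!)^2$ tuples; call this subfamily $R_\tau$. The type $\tau$ fixes a partition of the coordinates into $m\le\ell$ \emph{blocks}, so every tuple of $R_\tau$ is an injective assignment of $m$ distinct values to these blocks. Hence the map sending a tuple to its value set (a set of size exactly $m$) is at most $m!$-to-one, and $R_\tau$ determines more than $p^\ell(\ell!)^2/m!\ge p^\ell\ell!$ distinct $m$-element sets. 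Since $p^\ell\ell!>\ell!(p-1)^\ell\ge N_{m,p}$, Lemma~\ref{lemma:sunflower} yields a set-sunflower of these value sets with $p$ petals and some core $Y$. Property~(i) then holds because all tuples share $\tau$, and the value-disjointness of the petals is exactly what will (after the normalization described next) deliver property~(iii).

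The step I expect to be the main obstacle is turning this \emph{value-set} sunflower into a genuine \emph{tuple} sunflower. A set-sunflower controls only that the value sets overlap exactly in $Y$; it does not say \emph{where} the shared values of $Y$ sit inside the tuples. For example $(1,2,3)$ and $(2,1,4)$ have value sets meeting in $\{1,2\}$, yet no coordinate set $J$ simultaneously witnesses agreement~(ii) and off-$J$ value-disjointness~(iii), because the core values $1,2$ occupy swapped coordinates. To repair this I would insert a normalization: pigeonhole the petals by the (injective, hence at most $m!\le\ell!$ many) placements of the core values $Y$ into the $m$ blocks, keeping a subfamily that places $Y$ identically; taking $J$ to be the union of the blocks then occupied by $Y$ makes all retained tuples agree on $J$, while off $J$ the values come from pairwise disjoint petals, giving (ii) and (iii). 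The genuinely delicate point — and the crux of the counting — is the interaction of this normalization with the Sunflower bound: since the set-Sunflower threshold grows like $P^m$ in the number $P$ of petals, combining it naively with an $\ell!$-fold normalization loss appears to demand more tuples than $\bar N_{\ell,p}$. To close the argument at the stated bound one must therefore either extract and renormalize the petals more economically, or fold the placement pigeonhole into the equality-type step so that the three factors $\ell^\ell$ (types), $\ell!$ (value-set fiber) and $\ell!\,p^\ell$ (Sunflower threshold) multiply exactly to $\bar N_{\ell,p}$; making this bookkeeping close is where the real work lies.
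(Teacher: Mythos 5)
Your proposal and the paper part ways after the first step, and the divergence is fatal to your argument. The paper's proof also begins by pigeonholing on equality types (fewer than $\ell^\ell$ of them, leaving at least $p^\ell(\ell!)^2$ tuples of a common type), but it then invokes Lemma~2 of \cite{KratschW10}, which is already a sunflower lemma \emph{for tuples}; the only thing left to check is a definitional mismatch (in \cite{KratschW10} core values may also appear in the outer parts of petals, which cannot happen once all tuples share an equality type), and that is handled in a footnote. In other words, the coordinate-placement difficulty you correctly identify --- that a set-sunflower of value sets says nothing about \emph{where} the core values sit inside each tuple, as in your $(1,2,3)$ vs.\ $(2,1,4)$ example --- is precisely the work the paper delegates to the cited lemma rather than re-deriving from the Erd\H{o}s--Rado set version.

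Your attempted reduction to Lemma~\ref{lemma:sunflower} does not close, and your final paragraph concedes as much, which means the proposal is not a proof of the stated bound. Concretely: after the equality-type step and the (at most $m!$-to-one) passage to value sets, you have more than $p^\ell\,\ell!$ sets of size $m\le\ell$ available. Your normalization loses a factor of up to $m!\le\ell!$ in the number of petals, so to finish with $p$ petals you must extract a set-sunflower with at least $p\,\ell!$ petals, and Lemma~\ref{lemma:sunflower} demands more than $m!\,(p\,\ell!-1)^m$ sets for that --- of order $(\ell!)^{\ell+1}p^\ell$ when $m=\ell$, which dwarfs the $p^\ell\,\ell!$ sets you have. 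The gap is a multiplicative factor of roughly $(\ell!)^{\ell}$, and neither of your suggested repairs works as stated: folding the placement of the core into the equality-type pigeonhole is circular (the core is only known \emph{after} the sunflower lemma is applied), and no ``more economical'' extraction is exhibited. To make your route rigorous one would either have to weaken the lemma's constant $\bar{N}_{\ell,p}$ substantially, or prove the tuple version directly (e.g.\ by induction on the coordinates, which is in effect what the result cited by the paper provides) instead of reducing to the set version.
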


\begin{proof}
  Let $R$ be an $\ell$-ary relation that contains $\bar{N}_{\ell, p}$ tuples. As there are less than $\ell^\ell$ equality types of $\ell$-tuples there is a set $R'\subseteq R$ of size at least $p^\ell(\ell!)^2$, in which all tuples have the same equality type. Application of Lemma 2 in \cite{KratschW10} yields\footnote{In \cite{KratschW10}, elements from the ``outer part'' of a petal can also occur in the ``core''. As in $R'$ all tuples have the same equality type, this can not happen in our setting.} a sunflower with $p$ petals.
\end{proof}
It is instructive to see how Lemma \ref{lemma:sunflower-tuples} shows that a graph with sufficiently many edges has many selfloops, disjoint edges or a large star: Selfloops correspond to the equality type of tuples $(t_1,t_2)$ with $t_1=t_2$, many disjoint edges to the case $J=\emptyset$ and the two possible kinds of stars to $J=\{1\}$ and $J=\{2\}$, respectively.

\begin{proofof}{Theorem \ref{theorem:emptiness:consistent:unaryDynProp} (b)}
  Now the proof for binary input schemas easily translates to general input schemas. For the sake of completeness we give a full proof.
  
  Suppose that a consistent $\DynPropA{1}$-program $\prog$ over schema $\schema$ with $0$-ary\footnote{At the end of the proof we discuss how to deal with unary query relations.} query relation accepts an input database $\db$ that contains at least one relation $R$ with many tuples.

  Suppose that $R$ is of arity $\ell$ and contains $\bar{N}_{\ell, M^2+1}$ diverse tuples where $M$ is the number of $\ell$-ary (atomic) types over the schema of $\prog$. We show that $\prog$ already accepts a database with less tuples than~$\db$.
  
  By Lemma \ref{lemma:sunflower-tuples}, $R$ contains a sunflower $R'$ of size $M^2+1$. Consider the state $\state$ reached by $\prog$ after inserting all tuples from $R \setminus R'$ into the initially empty database. Since $R'$ contains $M^2+1$ tuples, there is a subset $R'' \subseteq R'$ of size $M+1$ such that all tuples in $R''$ have the same atomic type in state $\state$. Let $\state_0$ be the state reached by $\prog$ after inserting all tuples in $R' \setminus R''$ in  $\state$. All tuples in $R''$ still have the same type in $\state_0$ since $\prog$ is a quantifier-free program (though this type can differ from the type in $\state$). 

  Let  $\vec a_1, \ldots, \vec a_{M+1}$ be the tuples in $R''$ and denote by $\state_i$ the state reached by $\prog$ after inserting  $a_1, \ldots, a_i$ in $\state_0$. In state $\state_i$ all tuples $a_{i+1},\ldots, a_{M+1}$ have the same type, again. As the number of $\ell$-ary atomic types is $k$, there are $i<j$ such that $a_{M+1}$ has the same type in $\state_i$ and $\state_{j}$. Therefore, inserting the edges  $e_{j+1}, \ldots, e_{M+1}$ in $\state_i$ yields a state with the same query bit as inserting this sequence in $\state_j$. As the query bit in the latter case is accepting, it is also accepting in the former case, yet in that case the underlying database has fewer tuples than~$\db$, the desired contradiction.

  If $\prog$ has a unary query relation, then the proof has to be adapted as follows. For an accepted database $\db$, the unary query relation contains some element $a$. Now $M$ is chosen as the number of $(\ell+1)$-ary atomic types (instead of the number of $\ell$-ary atomic types), and $R''$ is chosen as sub-sunflower where all tuples $(\vec a_1, a), \ldots, (\vec a_{M+1}, a)$ have the same atomic type. The rest of the proof is analogous.
\end{proofof}

\toMainAndAppendix{
  The final result of this subsection gives a characterization of the class of queries maintainable by consistent $\DynPropA{0}$-programs. This characterization is not needed to obtain decidability of the emptiness problem for such queries, since this is included in Theorem \ref{theorem:emptiness:consistent:unaryDynProp}. However, we consider it interesting in its own right.
}

\toLongAndAppendix{
 As $\DynPropA{0}$-programs can only store a constant amount of information, it is not surprising that they can only maintain very simple properties. In fact, they can maintain exactly all modulo-like queries (to be defined precisely below). This characterization immediately yields an alternative emptiness test for  consistent $\DynPropA{0}$-programs. Furthermore it partially answers a question by Dong and Su \cite{DongS97}. They asked whether all queries maintainable by $\DynFOA{0}$-programs can already be maintained by history-independent $\DynFOA{0}$-programs. The characterization shows that this is the case for $\DynProp$-programs, since all modulo-like queries can easily be maintained by history-independent $\DynPropA{0}$-programs.

\newcommand{\modexpr}[3][\gamma]{\ensuremath{\#(#1)\equiv_{#2} #3}}
\newcommand{\ktype}[1][k]{\ensuremath{#1\text{-type}}}
\newcommand{\Stype}{\ensuremath{\text{type}}}
\newcommand{\dom}{\ensuremath{\text{dom}}}
\newcommand{\stup}{\ensuremath{\text{st}}}

We first fix some notation. For a tuple $\vec a = (a_1,\ldots,a_k)$ we write $\dom(\vec a)$ for the set $\{a_1,\ldots,a_k\}$.
The \emph{cardinality} of $\vec a$ is the size of $\dom(\vec a)$.
 The \emph{strict underlying tuple} $\stup(\vec a)$ is the tuple obtained from $\vec a$ by removing all duplicate occurrences of data values (in a left-to-right fashion). 
A tuple $\vec a$ is \emph{duplicate-free} if $\stup(\vec a)=\vec a$. 

A  \emph{strict atomic $k$-atom} is a relation atom $R(y_1,\ldots,y_r)$ for which $\{y_1,\ldots,y_r\}=\{x_1,\ldots,x_k\}$ with $x_i \neq x_j$ for $i \neq j$. 
 A \emph{strict atomic $k$-type} $\gamma(x_1,\ldots,x_k)$ is a set of strict atomic $k$-atoms.
Let, for a tuple $\vec a = (a_1,\ldots,a_k)$,
$\iota$ be the valuation that maps, for each $j\in\{1,\ldots,k\}$, $x_j$ to $a_j$.  
 Then the \emph{strict atomic type} $\gamma$ of tuple $\vec a = (a_1,\ldots,a_k)$ in \state is the set of strict atomic $k$-atoms $R(y_1,\ldots,y_r)$ in $\gamma$, for which $\iota(R(y_1,\ldots,y_r))$ yields a fact in \state. We write $\ktype(\vec a)$ for the strict atomic type of a $k$-tuple $\vec a$.

However, the expressive power of consistent $\DynPropA{0}$-programs can be most easily characterized in terms of types of sets of elements, rather than types of tuples.

The \emph{set type} $\Stype(A)$ of a set $A=\{a_1,\ldots,a_k\}$ of size $k$ in  a structure \state is the set $\{\ktype(\pi(\vec a))\mid \pi\in S_k\}$. Here, $S_k$ denotes the set of permutations on $\{1,\ldots,k\}$ and $\pi(\vec a)$ denotes the tuple $(a_{\pi(1)},\ldots,a_{\pi(k)})$. We note that $\Stype(A)$ does not depend on the chosen enumeration of $A$ and is therefore well-defined. 
It directly follows from this definition that the set types of two sets with $k$ elements are either equal or disjoint (as sets of strict atomic $k$-types). In other words, the strict atomic type of a set is determined by the strict atomic $k$-type of each duplicate-free tuple that can be constructed from elements of the set. %

 For a structure \state and a set type $\gamma$, we denote by $\#_\state(\gamma)$ the number of sets of set type $\gamma$ in $\state$. 

A \emph{simple modulo expression} is an expression of the form $\modexpr{p}{q}$, where $p\ge 2$ and $q<p$ are natural numbers and $\gamma$ is a non-empty set type. A structure \state satisfies such an expression if $\#_\state(\gamma)\equiv_p q$, that is, if the number of sets of type $\gamma$ in \state has remainder $q$ when divided by $p$. A \emph{modulo expression} is a Boolean combination of simple modulo expressions. A \emph{modulo query} is a query that can be defined as the set of all (finite) models of some modulo expression. 

In the proof of the following theorem, we will consider modification sequences of a particular form that extends the normal form for insertion sequences over unary input schemas introduced in Section \ref{section:setting}. 
A general insertion sequence $\alpha$ is in \emph{normal form} if it fulfills the following three conditions.
\begin{enumerate}[label=(M\arabic*)]
\item If $\alpha$ inserts tuples of cardinality $k$ over a set $A$ of $k$ elements, then all such tuples are inserted in a contiguous subsequence $\alpha_A$ of $\alpha$. Furthermore if $\alpha_A$ and $\alpha_{A'}$ are the contiguous sequences for sets $A$ and $A'$  with $|A| > |A'|$ then $\alpha_A$ occurs before $\alpha_{A'}$ in $\alpha$.
\item For all sets $A,B$ with the same set type in \inp, the subsequences $\alpha_A$ and $\alpha_B$ are isomorphic, that is, for some bijection $\pi:A\to B$, $\pi(\alpha_A)=\alpha_B$.
\end{enumerate}

}
\atheorem{theorem:characterization}{
 A Boolean query $\query$ can be maintained by a consistent $\DynPropA{0}$ program if and only if it is a modulo query.
}
\shortOrLong{Intuitively\footnote{The actual formalization uses sets of elements rather than tuples.}, a \emph{modulo query} is a Boolean combination of constraints of the form: the number of tuples that have some atomic type $\gamma$ is $q$ modulo $p$, for some natural numbers $p\ge 2$ and~$q<p$.}{}

\toLongAndAppendix{
\begin{proof} (if) The set of Boolean queries that can be expressed by consistent  $\DynPropA{0}$ programs is closed under all Boolean operators. It therefore suffices to show that each query defined by a simple modulo expression $\modexpr{p}{q}$ can be maintained by a  consistent $\DynPropA{0}$ program \prog. 

The insertion of a tuple $\vec b$ into some relation $R$ changes the set type of exactly one set, $\{b_1,\ldots,b_r\} \df \dom(\vec b)$. It is straightforward but tedious to construct a quantifier-free formula  $\varphi^R_\gamma(y_1,\ldots,y_r)$ that expresses that the new type of the set $\{b_1,\ldots,b_r\}$ after inserting $\vec b$ to $R$ is $\gamma$. Likewise, for the old set type of $\{b_1,\ldots,b_r\}$. For deletions the situation is very similar. 
A $\DynPropA{0}$ program can therefore use $p$ auxiliary bits to maintain the number of occurrences of set type $\gamma$ in \state modulo $p$.

(only-if) Let $\prog$ be a consistent $\DynPropA{0}$-program. As \prog is consistent it yields, for each input database \inp,  the same query answer, for each modification sequence that results in \inp. In this proof we therefore only consider insertion sequences in normal form. 

Condition (M2) ensures that when a tuple $\vec b$ is inserted to a relation $R$, there are no tuples present that involve a strict subset of $\dom(\vec b)$. As, on the other hand, due to the lack of quantifiers, the update formulas for the auxiliary bits can not take any tuples into account that contain elements outside of $\dom(\vec b)$,
the auxiliary bits of \prog after an insertion operation $\ins_R(\vec b)$ of $\alpha$ only depend on the current auxiliary bits of \prog and the strict atomic $k$-type of $\stup(\vec b)$. 
Similarly, by Condition (M3) it follows that the auxiliary bits after a modification subsequence $\alpha_A$ only depend on the current auxiliary bits of \prog and the set type of $A$. The behavior of \prog under a  insertion sequence in normal form is therefore basically the behavior of a finite automaton (with the possible values of the auxiliary bits as states) reading a sequence of set types.\footnote{It should be noted here, that the overall number of set types is finite and only depends on the signature of \prog.} 

Let $m$ be the number of ($0$-ary) auxiliary bits of \prog and let $M=(2^m)!$.  

We next show that, for each non-empty set type $\gamma$ and each two input databases $\inp$ and $\inp'$ that have for each non-empty set type different from $\gamma$ the same number of sets and whose number of sets of type $\gamma$ differs by $M$, either both  $\inp$ and $\inp'$ are accepted by \prog, or both are rejected. As there are only finitely many types and finitely many classes modulo $M$, this yields that the query decided by \prog is a modulo query.

Let $\state=(\domain, \inp,  \aux)$ be some state reached after an insertion sequence $\alpha$ in normal form, let $\gamma$ be some non-empty set type and let $s$ be the number of occurrences of $\gamma$ in \inp. Let $\alpha'$ be the extension of $\alpha$ by $M+2^m$ further sets of type $\gamma$ yielding  $\state'=(\domain, \inp',  \aux')$. 
Let $A_1,\ldots,A_s$ denote the sets of type $\gamma$ in $\inp$ and let $A_1,\ldots,A_{s'}$ denote  the sets of type $\gamma$ in $\inp'$. 
Let $\alpha'$ be decomposed into $\alpha_1\alpha_{A_1}\cdots\alpha_{A_{s'}}\alpha_2$.\footnote{Note that $\alpha$ has the form $\alpha_1\alpha_{A_1}\cdots\alpha_{A_{s}}\alpha_2$.} As there are only $2^m$ different possible values that the auxiliary bits can assume, there are $i<j$, $j\le 2^m$, such that $\alpha_1\alpha_{A_1}\cdots\alpha_{A_i}$ and $\alpha_1\alpha_{A_1}\cdots\alpha_{A_j}$ yield states with identical auxiliary bits.\footnote{Here, $i=0$ corresponds to the sequence $\alpha_1$.} As each set $A_\ell$ has the same set type, it follows that $\alpha_1\alpha_{A_1}\cdots\alpha_{A_{i+cd}}$ yields the same auxiliary bits as  $\alpha_1\alpha_{A_1}\cdots\alpha_{A_i}$, for $d \df j-i$ and every $c$ with $i+cd\le s+M+2^m$.
If $s\ge i$ it follows that $\alpha_1\alpha_{A_1}\cdots\alpha_{A_{i+M}}$ yields the same auxiliary bits as  $\alpha_1\alpha_{A_1}\cdots\alpha_{A_i}$ and that $\alpha_1\alpha_{A_1}\cdots\alpha_{A_{s+M}}$ yields the same auxiliary bits as  $\alpha_1\alpha_{A_1}\cdots\alpha_{A_s}$. Let us now assume that $s<i$. By deleting $i-s$ sets of type $\gamma$ from the state reached after $\alpha_1\alpha_{A_1}\cdots\alpha_{A_i}$ and $\alpha_1\alpha_{A_1}\cdots\alpha_{A_{i+M}}$, we obtain states with identical auxiliary bits and $s$ and $s+M$ sets of type $\gamma$, respectively. The claim then follows by adding back $\alpha_2$ to the sequences $\alpha_1\alpha_{A_1}\cdots\alpha_{A_s}$  and $\alpha_1\alpha_{A_1}\cdots\alpha_{A_{s+M}}$, respectively. This completes the proof.
\end{proof}

}

   \subsection{The impact of built-in orders}\label{section:emptinessbuiltin}
    \toAppendix{\subsection{Proofs for Section \ref{section:emptinessbuiltin}}}  
\makeatletter{}%
A closer inspection of the proof that the emptiness problem is undecidable for consistent $\DynFOIA{1}{2}$-programs (Theorem \ref{theorem:emptiness:consistentfobinary}) reveals that the construction only requires one binary auxiliary relation: a linear order on the ``active'' elements. The proof would also work if a global linear order on all elements of the 
domain would be given. We say that a dynamic program has a \emph{built-in linear order}, if there is one auxiliary relation $R_<$ that is always initialized by a linear order on the domain and never changed. Likewise, for a \emph{built-in successor relation}.

That is, the border of undecidability for consistent $\DynFO$-programs  actually lies between consistent $\DynFOIA{1}{1}$-programs and consistent $\DynFOIA{1}{1}$-programs with a built-in linear order. Similarly, the border of undecidability for (not necessarily consistent) $\DynProp$-programs actually lies between $\DynPropIA{1}{1}$-programs and $\DynPropIA{1}{1}$-programs with a built-in linear order.

\aproposition{prop:builtin-undec}{
The emptiness problem is undecidable for 
  \begin{enumerate}[ref={\thetheorem\ (\alph*)}]
   \item consistent $\DynFOIA{1}{1}$-programs with a built-in linear order or a built-in successor relation,
  \item $\DynPropIA{1}{1}$-programs with a  built-in successor relation.
 \end{enumerate}
 }
\aproof{}{}{
  \begin{proofenum}
  \item The only binary auxiliary relation used in the
    proof of Theorem \ref{theorem:emptiness:consistentfobinary} was to
    simulate a linear order on the domain. This is not necessary any
    more, if the linear order is available. The linear order can easily be replaced by a  built-in successor relation.
 \item We adapt the proof of Theorem
    \ref{theorem:emptiness:undecidables} (b) and use the successor relation
    instead of the list relations, which are the only binary auxiliary
    relations.  The first modification touches an element that is then
    marked as the first and last element of both lists.  We then
    demand that an insertion to $C_i$ inserts the element that is
    marked as last and a deletion from $C_i$ deletes the predecessor
    of the last element. This can be checked and the marking of the
    last element can be updated without the use of quantifiers.  A
    relation $C_i$ is empty after the element that is marked as first
    is deleted from $C_i$.
   \end{proofenum}
}

However, for dynamic programs that only have auxiliary bits, linear
orders or successor relations do not affect decidability.

\aproposition{prop:builtin-dec}{
The emptiness problem is decidable for 
  \begin{enumerate}[ref={\thetheorem\ (\alph*)}]
   \item consistent $\DynFOIA{1}{0}$-programs with a built-in linear order or a built-in successor relation,
  \item $\DynPropIA{1}{0}$-programs with a built-in linear order or a built-in successor relation.
 \end{enumerate}
} 
\aproof{}{}{  
  \begin{proofenum}
  \item 
    Let $\prog$ be a consistent program over unary input relations
    that uses only $0$-ary auxiliary relations and a built-in linear
    order.  In \cite [Theorem 3.1]{DongW97} it is shown\footnote{We note that the setting in that paper assumes a built-in linear order.} how to construct an
    existential monadic second order formula $\varphi$ such that there
    is a modification sequence $\alpha$ with
    $\updateStateI{\alpha}{\emptyDB}$ is accepted by $\prog$ if
    and only if $\updateDB{\alpha}{\emptyDB} \models \varphi$.  By
    \cite{BuechiE58}, the formula $\varphi$ describes a regular
    language over the proper $\inpSchema$-colors ($\neq c_0$).
 Hence an
    equivalent finite state automaton can be constructed.  For finite
    automata the emptiness problem is decidable, so the claim follows.
  \item This statement simply follows from the decidability of the emptiness problem for $\DynPropIA{1}{1}$-programs (Theorem \ref{theorem:emptiness:unarydynprop}) and the fact that the update formulas of $\DynPropIA{1}{0}$-programs only have one variable and therefore can not use a linear order or a successor relation in a non-trivial way. 
  \end{proofenum}
}%

   \section{The Consistency Problem}\label{section:consistency}
    \toAppendix{\section{Proofs for Section \ref{section:consistency}}}  
\makeatletter{}%

In Section \ref{section:emptinessconsistent} we studied \Emptiness for classes of consistent dynamic programs. It turned out that with this restriction the emptiness problem is easier than for general dynamic programs. 
One might thus consider the following approach for testing whether a given general dynamic program is empty: Test whether the program is consistent and if it is, use an algorithm for consistent programs. To understand whether this approach can be helpful, we study the following algorithmic problem, parameterized by a class $\calC$ of dynamic programs. 

\problemdescr{\Consistency[$\calC$]}{A dynamic program $\prog \in \calC$ with $\FO$ initialization}{Is $\prog$ a consistent program with respect to empty initial databases?}

\shortOrLong{}{We will see that the mentioned approach does not give us any advantage, as deciding \Consistency is as hard as deciding \Emptiness for general dynamic programs.}
It is not very surprising that \Consistency is not easier than \Emptiness, since deciding \Emptiness boils down to finding  \emph{one} modification sequence resulting in a state with particular properties and \Consistency is about finding \emph{two} modification sequences resulting in two states with particular properties.
This high level comparison can actually be turned into rather easy reductions from \Emptiness to \Consistency.

On the other hand, \Consistency can also be reduced to \Emptiness. For this direction the key idea is to simulate two modification sequences simultaneously and to integrate their resulting states into one joint state.
This is easy if quantification is available, and requires some work for $\DynProp$-fragments. \toLongAndAppendix{We first give a technical lemma to restrict the kind of modification sequences that have to be considered to decide \Consistency. 

For this, we use the notion of \emph{innocuous transformations}. Intuitively, an innocuous transformation $\theta$ of a modification sequence $\alpha$ is a minimal change of $\alpha$ that results in a modification sequence $\theta(\alpha)$ which leads to the same underlying database as $\alpha$. 
Formally, an innocuous transformation is either (1) a permutation of a subsequence $\delta_1\delta_2$ to $\delta_2\delta_1$ under the condition that if one modification is $\ins_S(\vec a)$ then the other one is not $\del_S(\vec a)$, (2) the removal of a subsequence $\ins_S(\vec a)\del_S(\vec a)$ if $\vec a$ is not contained in $S$ when this subsequence is applied, (3) the removal of a modification $\delta = \ins_S(\vec a)$ or $\delta = \del_S(\vec a)$ if $\vec a$ is already contained in $S$ respectively $\vec a$ is not contained in $S$ when the modification is applied, or (4) the inverse of one of these transformations.
It is clear that under the given conditions, for an innocuous transformation $\theta$ of a modification sequence $\alpha$ it holds that $\updateDB{\alpha}{\emptyDB} = \updateDB{\theta(\alpha)}{\emptyDB}$.

\begin{lemma}\label{lemma:consistency:editdistance}
Let $\prog$ be an inconsistent dynamic program. Then there is a modification sequence $\alpha$, an innocuous transformation $\theta$ of $\alpha$ and an empty database $\emptyDB$ such that the query relations in $\updateStateI{\alpha}{\emptyDB}$ and $\updateStateI{\theta(\alpha)}{\emptyDB}$ differ.
\end{lemma}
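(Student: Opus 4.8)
The plan is to exploit the fact that every innocuous transformation preserves the underlying input database, and then to connect the two sequences witnessing inconsistency by a chain of such transformations. Since \prog is inconsistent, there are a domain \domain, two modification sequences $\alpha$ and $\beta$ over the induced empty database \emptyDB with $\updateDB{\alpha}{\emptyDB} = \updateDB{\beta}{\emptyDB} =: \db$, but such that the query relations in $\updateStateI{\alpha}{\emptyDB}$ and $\updateStateI{\beta}{\emptyDB}$ differ. If I can build a sequence $\alpha = \gamma_0, \gamma_1, \ldots, \gamma_m = \beta$ of modification sequences in which each $\gamma_{i+1}$ arises from $\gamma_i$ by a single innocuous transformation, then a discrete intermediate-value argument finishes the proof: since the query relations at the two ends differ, there is a smallest index $i$ for which the query relations in $\updateStateI{\gamma_i}{\emptyDB}$ and $\updateStateI{\gamma_{i+1}}{\emptyDB}$ differ, and setting the sequence of the lemma to $\gamma_i$ and $\theta$ to the transformation taking $\gamma_i$ to $\gamma_{i+1}$ yields the claim.

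The core of the argument is therefore the \emph{connectivity claim}: any two modification sequences over the same empty database \emptyDB that yield the same input database \db can be transformed into each other by innocuous transformations. I would prove this by reducing every such sequence to a common canonical sequence $\alpha_\db$ that depends only on \db and inserts each tuple of \db exactly once in a fixed (say lexicographic) order. The reduction proceeds in three stages. First, using clause (3), I repeatedly delete every \emph{ineffective} modification, i.e.\ every insertion of a tuple already present and every deletion of a tuple already absent; since deleting an ineffective modification leaves all later input-database states unchanged, this terminates in a sequence all of whose modifications are effective. In such a sequence the modifications affecting a fixed pair $(S,\vec a)$ alternate, and because \emptyDB is empty they start with an insertion. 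Second, I cancel insertion--deletion pairs within each $(S,\vec a)$-subsequence: a deletion that immediately follows an insertion in that subsequence has no further $(S,\vec a)$-modification between it and that insertion, so by clause (1) it may be swapped leftwards past the (pairwise distinct) intervening modifications until it becomes adjacent to its insertion, whereupon clause (2) removes the pair (its precondition holds, as $\vec a \notin S$ just before that insertion). Iterating leaves, for every pair $(S,\vec a)$, a single insertion if $\vec a \in S^\db$ and nothing otherwise. Third, these surviving insertions all concern distinct pairs, hence any two of them may be swapped by clause (1), and I sort them into the canonical order.

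Because clause (4) makes innocuous transformations reversible, $\alpha_\db$ can be reached from $\alpha$ and from $\beta$, and consequently $\beta$ can be reached from $\alpha$ via $\alpha_\db$; concatenating the two chains (the second one run backwards) gives the desired chain $\gamma_0,\ldots,\gamma_m$. All intermediate sequences yield \db, as each individual step preserves the underlying input database by the defining conditions of innocuous transformations, so the intermediate-value argument of the first paragraph applies.

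The step I expect to be the main obstacle is verifying the connectivity claim in full rigour, and in particular checking that the forbidden adjacent-swap case of clause (1) never blocks the reduction: this rests on the observation that between an insertion and the matching deletion that cancels it there is, by construction, no other modification of the same relation--tuple pair, so the swaps performed are always between modifications of \emph{distinct} pairs. A related point to handle carefully is that the applicability conditions of clauses (2) and (3) speak only about the input database and not about the auxiliary relations, so they can be checked purely from \db and the positions of the modifications, independently of how \prog maintains its auxiliary data.
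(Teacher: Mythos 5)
Your proposal is correct and takes essentially the same route as the paper: both arguments reduce the two sequences witnessing inconsistency to a common insertion sequence via transformations of type (1)--(3), connect them into a single chain using the inverse transformations (4), and then locate the first consecutive pair in the chain whose query relations differ. The only difference is one of rigour, not of approach: where the paper asserts the existence of a common insertion sequence as ``easy to see'', you make it explicit by canonicalizing both sequences to a fixed lexicographic insertion order and verifying the applicability conditions of each transformation along the way.
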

\begin{proof}
As $\prog$ is inconsistent, there are two modification sequences $\alpha$ and $\alpha'$ that lead to the same input database $\inp$ but to states with different query relations.
It is easy to see that $\alpha' = \theta_1\cdots\theta_M(\alpha)$ where each $\theta_i$ is an innocuous transformation of $\theta_1\cdots\theta_{i-1}(\alpha)$: From $\alpha$ and $\alpha'$ we can obtain a common insertion sequence $\alpha''$ by applying innocuous transformations of type (1)-(3), the inverses of the latter sequence of transformations then yields $\alpha'$ from $\alpha''$.  
As $\alpha$ and $\alpha'$ lead to states with different query relations there must be an $i$ such that $\alpha^\star \df \theta_1\cdots\theta_{i-1}(\alpha)$ and $\theta_i(\alpha^\star)$ lead to states with different query relations. 
\end{proof}

We now give the reductions between \Consistency and \Emptiness. 
}
\atheorem{theorem:reductions}{
Let $\ell\ge 1, m\ge 0$.
  \begin{enumerate}
    \item For every $\calC\in\{\DynFOIA{\ell}{m},\DynFOI{\ell},\DynFOA{m},\DynFO\}$,\\
    (i) $\Emptiness(\calC) \le \Consistency(\calC)$, and
  (ii)  $\Consistency(\calC) \le \Emptiness(\calC)$.

    \item For every $\calC\in\{\DynPropIA{\ell}{m},\DynPropI{\ell},\DynPropA{m},\DynProp\}$,\\
    (i) $\Emptiness(\calC) \le \Consistency(\calC)$, and
    (ii)  $\Consistency(\calC) \le \Emptiness(\calC)$.
  \end{enumerate}  
}

\toLongAndAppendix{
\begin{proof}
For (a)(i) and (b)(i), we construct dynamic programs whose query relations are inflationary, that is, tuples that are inserted once are never removed afterwards. When an update adds a tuple and the modification that caused that update is undone, the two states that are reached after these updates are witnesses to inconsistency.

For (a)(ii) and (b)(ii), the constructed dynamic programs simulate two independent modification sequences and maintain two states of the original program.
For (a)(ii), the program uses quantification to determine whether the two states represent equal input databases but different query relations. 
For (b)(ii) we use that thanks to Lemma \ref{lemma:consistency:editdistance} it suffices to simulate one modification sequence and at one point one innocuous transformation to find witnesses for inconsistency, so the two maintained states always represent equal input databases.
\begin{proofenum}
  \item[(a)(i)]
  For a given $\DynFOIA{\ell}{m}$-program $\prog$ over schema $(\inpSchema, \auxSchema)$ with query symbol $\querys$ we construct a $\DynFOIA{\ell}{m}$-program $\prog'$ over $(\inpSchema , \auxSchema \cup \{\querys'\})$ with query symbol $\querys'$. 
 The idea is to initialize $\querys'$ as empty and add the tuples in $\querys$ to $\querys'$ with a delay of one modification. No tuple gets removed from $\querys'$.
 The update formulas for $\querys'$ are $\uf{\querys'}{o}{\vec x}{\vec y} \df \querys(\vec y) \vee \querys'(\vec y)$.
 The update formulas for relations from $\auxSchema$ are copied from~$\prog$. 
 
If $\prog$ is empty, then $\querys'^{\state} = \emptyset$ in every reached state $\state$ and $\prog'$ is consistent. If $\prog$ is non-empty, then let $\alpha$ be a shortest modification sequence such that $\querys^{\updateStateI{\alpha}{\emptyDB}}$ is non-empty and let $\alpha^\star = \alpha\alpha'$ be a modification sequence that leads to the same input database as $\alpha$. 
It follows that the query relation $\querys'$ differs in $\updateStateI[\prog']{\alpha}{\emptyDB}$ and $\updateStateI[\prog']{\alpha^\star}{\emptyDB}$ and $\prog'$ is inconsistent.
\item[(a)(ii)] If $\prog$ is a given $\DynFOIA{\ell}{m}$-program, we construct a $\DynFOIA{\ell}{m}$-program $\prog'$ that simulates two modification sequences of $\prog$ in parallel and maintains two states of this program. 
If the input databases of theses states are equal, a tuple is added to the query relation of $\prog'$ if it is included in exactly one of the two maintained query relations of $\prog$. 

If $\prog$ is over schema $(\inpSchema, \auxSchema)$, then $\prog'$ is over schema $(\inpSchema', \auxSchema')$ where $\inpSchema' = \{S, S'\ \mid S \in \inpSchema\}$ and $\auxSchema' = \{R, R'\ \mid R \in \auxSchema\} \cup \{\querys^\star\}$. The query relation of $\prog$ is $\querys^\star$.
The update formulas of relations $R \in \auxSchema$ are the same as in $\prog$, for relations $R' \in \auxSchema$ the update formulas are obtained from the original formulas by replacing every relation symbol $S \in \inpSchema$ or $R \in \auxSchema$ by $S'$ or $R'$, respectively. 
 The update formulas for $\querys^\star$ first check if the two maintained input databases are equal by using conjunctions of formulas $\forall \vec{x} (S(\vec{x}) \leftrightarrow S'(\vec{x}))$ for every $S \in \inpSchema$ and then inserts a tuple if it is in exactly one of the query relation $\querys$ of $\prog$ and its copy $\querys'$. 
 $\prog$ is consistent if and only if $\prog'$ is empty.
\item[(b)(i)] Analogous to (a)(i).
\item[(b)(ii)] We adapt the idea of part (a)(ii) with the help of Lemma \ref{lemma:consistency:editdistance}.
  For a $\DynPropIA{\ell}{m}$-program $\prog$ over schema $(\inpSchema, \auxSchema)$ we sketch the construction of a $\DynPropIA{\ell}{m}$-program $\prog'$ over schema $(\inpSchema', \auxSchema')$. 
 Like in part (a)(ii), this program simulates two modification sequences of $\prog$ and maintains two auxiliary databases over $\auxSchema$, but only one input database over $\inpSchema$. 
 Contrary to (a)(ii), $\prog'$ either simulates the effects of one modification to both auxiliary databases or, exactly once,  a subsequence (of length at most 2) and an innocuous transformation of this subsequence. 
 It follows that the input databases are equal for both simulated modification sequences after every simulated modification and so $\prog'$ only has to check whether there are tuples that are included in exactly one copy of the original query relation.
 
 We now sketch the construction of $\prog'$. Like in part (a)(ii), $\auxSchema'$ contains relation symbols $R, R'$ for every $R \in \auxSchema$. Also all relation symbols from $\inpSchema$ are contained in $\inpSchema'$.
Additionally, $\inpSchema'$ contains relation symbols $U_S$, $I_S$ and $T_{S}, T'_{S}$ for every $S \in \inpSchema$ to simulate subsequences and their innocuous transformations.
$U_S$ is for simulating an unnecessary modification. If a modification $\ins_{U_S}(\vec a)$ is applied to $\prog'$, the update formulas check that $\vec a$ is already contained in $S$. If this check fails, $\prog'$ sets an error bit. Otherwise, $\prog'$ simulates $\prog$ for the modification $\ins_S(\vec a)$ on the second copy of the auxiliary database. Analogously for a modification $\del_{U_S}(\vec a)$.
When a modification $\ins_{I_S}(\vec a)$ occurs, $\prog'$ simulates $\prog$ for the sequence $\ins_S(\vec a)\del_S(\vec a)$ on the second copy, if $\vec a$ is not contained in $S$ before. Otherwise, $\prog'$ sets an error bit.
A sequence $\ins_{T_S}(\vec a)\ins_{T'_{S'}}(\vec b)\del_{T'_{S'}}(\vec b)\del_{T_S}(\vec a)$ is used to simulate the sequence $\ins_S(\vec a)\del_{S'}(\vec b)$ on the first copy of the auxiliary database and the sequence $\del_{S'}(\vec b)\ins_S(\vec a)$ on the second copy, likewise for other combinations of insertions and deletions. Some additional auxiliary bits are used to check that four modifications like this happen in a row and that they do not represent the insertion of a tuple to a relation and the deletion of that tuple from the same relation.
We use additional auxiliary bits to maintain whether exactly one innocuous transformation has been simulated.
For every modification over relation symbols from $\inpSchema$, both copies of the auxiliary database get updated according to the original program $\prog$. 

It follows from Lemma \ref{lemma:consistency:editdistance} that it is possible for $\prog'$ to reach a state where the copies $\querys$ and $\querys'$ of the query relation of $\prog$ differ and no error bit is set if and only if $\prog$ is inconsistent.
A tuple is inserted into the query relation $\querys^\star$ of $\prog'$ when no error bit is set and the tuple is in exactly one of $\querys$ and $\querys'$.
So $\prog'$ is empty if and only if $\prog$ is consistent.
\end{proofenum}
\end{proof}
}

  \section{The History Independence problem}\label{section:hi}
    \toAppendix{\section{Proofs for Section \ref{section:hi}}}  
\makeatletter{}%
\newcommand{\invector}[1]{{\vec n}^\text{in}(#1)}
\newcommand{\invectorcomp}[2]{n^\text{in}_{#2}(#1)}
\newcommand{\invectorterm}{characteristic input vector}
\newcommand{\invectortermplural}{characteristic input vectors}

\shortOrLong{As discussed in Section \ref{section:emptinessconsistent}, it is natural to expect that a dynamic program is consistent, i.e., that the query relation only depends on the current database, but not on the modification sequence by which it has been reached. 
Many dynamic programs in the literature satisfy a stronger property: not only their query relation but \emph{all} their auxiliary relations depend only on the current database. Formally, we call a dynamic program \emph{history independent} if all auxiliary relations in \updateStateI{\alpha}{\db}  only depend on $\alpha(\db)$, for all modification sequences $\alpha$ and initial empty databases~$\db$. 
History independent dynamic programs (also called \emph{memoryless} \cite{PatnaikI94} or \emph{deterministic} \cite{DongS97}) are still expressive enough to maintain interesting queries like undirected reachability \cite{GraedelS12}, but also some lower bounds are known for such programs \cite{DongS97,GraedelS12, ZeumeS15}. 

In this section, we study decidability of the question whether a given dynamic program is history independent.

\problemdescr{\HI[$\calC$]}{A dynamic program $\prog \in \calC$ with $\FO$ initialization}{Is $\prog$ history independent with respect to empty initial databases?}

Not surprisingly, \HI is undecidable in general. This can be shown basically in the same way as the general undecidability of \Emptiness in Theorem \ref{theorem:generalundecidability}. 

\begin{theorem}\label{theorem:hi:generalundecidability}
\HI is undecidable for $\DynFOIA{2}{0}$-programs.
\end{theorem}

However, the precise borders between decidable and undecidable fragments are different for \HI than for \Emptiness and \Emptiness for consistent programs. More precisely, \HI is decidable for $\DynFO$- and \DynProp-programs with unary input databases, and for $\DynProp$-programs with unary auxiliary databases. 

For showing that \HI is decidable for $\DynFO$-programs with unary input databases, we prove that if such a program is not history independent then this is witnessed by some reachable small ``bad state''. A decision algorithm can then simply test whether such a state exists. Bad states satisfy one of two properties: they either locally contradict history independence or they are ``inhomogeneous''. We define both notions in the following. 

A state $\state$ over domain $\domain$ is \emph{\lhi}\footnote{We define this term for arbitrary input arity, since the first part of Lemma \ref{lemma:hicharacterization} holds in general.} for a dynamic program $\prog$ if the following three conditions hold.
\begin{enumerate}[label={(H\arabic*)}]%
          \item $\updateState{\delta_1 \delta_2}{\state} = \updateState{\delta_2 \delta_1}{\state}$ for all insertions $\delta_1$ and $\delta_2$.
          \item $\state = \updateState{\ins_R(\vec a)\del_R(\vec a)}{\state}$ if $\vec a \notin R^{\state}$, for all $R \in \inpSchema$ and $\vec a$ over $\domain$.
          \item $\state = \updateState{\ins_R(\vec a)}{\state}$ if $\vec a \in R^{\state}$ and $\state = \updateState{\del_R(\vec a)}{\state}$ if $\vec a \notin R^{\state}$, for all $R \in \inpSchema$ and $\vec a$ over $\domain$.
        \end{enumerate}

Locally history independence is well-suited to algorithmic analysis. The following lemma shows that for testing history independence it actually suffices to test locally history independence for all states reachable by very restricted modification sequences.

\begin{lemma}\label{lemma:hicharacterization}
  Let $\prog$ be a dynamic program. 
    \begin{enumerate}[ref={\thetheorem\ (\alph*)}]
      \item $\prog$ is history independent if and only if every state reachable by $\prog$ via insertion sequences is \lhi.
      \item  If $\prog$ is a $\DynFOI{1}$-program, then $\prog$ is history independent if and only if every state reachable by $\prog$ via insertion sequences in normal form is \lhi.
    \end{enumerate}
\end{lemma}

A state \state is \emph{homogeneous} if all tuples  $\vec a$ and $\vec b$ with the same (atomic) \inpSchema-type also have the same (atomic) \auxSchema-type.  The following lemma is an immediate consequence of \cite[Lemma 16]{DongS97}.

\begin{lemma}\label{lem:homogeneous}
  For every history independent $\DynFOI{1}$-program, every reachable state is homogeneous. 
\end{lemma}

We call a state of a $\DynFOI{1}$-program that is  not homogeneous or not \lhi a \emph{bad state}. The following lemma is the key ingredient for deciding history independence for $\DynFOI{1}$-programs. It restricts the size of the smallest bad state and therefore allows for testing history independence in a brute-force manner.

\begin{proposition}\label{prop:smallmodel}
   Let \prog be a $\DynFOIA{1}{m}$-program. There is a number $N \in \N$, that can be computed from \prog, such that if \prog is \emph{not} history independent, then there exists an empty database $\emptyDB$ of size at most $N$ and an insertion sequence  $\alpha$ in normal form such that $\updateStateI{\alpha}{\emptyDB}$ is bad.
\end{proposition}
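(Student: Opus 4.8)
The plan is to combine the characterizations already available with a small-model argument driven by Ehrenfeucht--Fra\"iss\'e games on the reached states. First I would reduce the statement to a counting problem over colors. By Lemma~\ref{lemma:hicharacterization}(b), if \prog is not history independent then some state $\state=\updateStateI{\alpha}{\emptyDB}$ reachable by an insertion sequence $\alpha$ in normal form fails to be \lhi; together with Lemma~\ref{lem:homogeneous} this already means that $\state$ is bad. For a normal-form sequence over a unary input schema the entire history is determined by the coloring of the domain: each element is processed by the fixed operation sequence $\alpha_c$ attached to its \inpSchema-color $c$, in the order prescribed by $\alpha$. Hence the reached state is, up to isomorphism, a function of the \emph{characteristic vector} $\vec n=(n_0,\dots,n_L)$ recording the number of elements of each color $c_0,\dots,c_L$ (plus the bounded amount of global information carried by the $0$-ary input bits). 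The goal is then to bound the minimal domain on which a bad state can occur.

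Second I would isolate the witness of badness and express it by a bounded-depth formula. Each defect (H1)--(H3) is witnessed by at most two inserted elements together with the auxiliary tuple on which $\updateState{\delta_1\delta_2}{\state}$ and $\updateState{\delta_2\delta_1}{\state}$ (respectively the two sides of (H2)/(H3)) disagree, while non-homogeneity is witnessed by two tuples of arity at most $m$. In every case the set $W$ of elements involved has size at most $2m+2$, and ``$\state$ is bad as witnessed by $W$'' can be written as a first-order condition of quantifier depth at most $2q$ (where $q$ is the quantifier depth of the update formulas) evaluated on $\state$ with the elements of $W$ marked: at most two updates are nested, and each contributes $q$. Consequently it suffices to preserve the rank-$2q$ type of $(\state,W)$ while shrinking the domain.

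Third, and this is the technical core, I would prove a stability lemma: there is a threshold $T\in\N$, computable from \prog, such that whenever two normal-form sequences induce characteristic vectors agreeing for every color up to $T$ (i.e.\ equal, or both at least $T$) and place the witness elements at corresponding capped positions within their color blocks, the reached states are rank-$2q$ equivalent with $W$ marked. The proof is an Ehrenfeucht--Fra\"iss\'e game on the two final states in which Duplicator pairs the two domains color class by color class; inside a color class, where all elements underwent the identical operation sequence, only the position in the processing order (capped at $T$ from either end and relative to the marked elements of $W$) is relevant, so with $T$ chosen larger than $2^{2q}+|W|$ Duplicator can answer every move. \textbf{The main obstacle} is to justify that the auxiliary relations produced after the \emph{whole} sequence still depend only on such capped, order-local information: a single depth-$q$ update could a priori let this guarantee degrade, so a naive induction over the individual modifications would force the required rank to grow without bound. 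I would control this by a locality argument, showing inductively over the constantly many color blocks that the auxiliary type of any tuple is determined by the colors of its entries and by their order-positions capped at a threshold that grows by only a bounded factor per block; since there are just $L+1$ blocks, a single computable $T$ suffices for all of them simultaneously.

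Finally I would assemble the bound. Set $N\df (L+1)\,T + (2m+2)$, where $L+1=2^{\,\#\{\text{unary symbols in }\inpSchema\}}$ is the number of colors. Suppose a bad state exists but every witnessing domain has size $>N$, and take one of minimal domain size. By the pigeonhole principle some color class then has more than $T$ elements, so we may delete the surplus elements not belonging to $W$, obtaining a strictly smaller domain $\domain'$, a normal-form sequence $\alpha'$ whose characteristic vector agrees with that of $\alpha$ up to $T$, and the same marked witness $W$. By the stability lemma $(\updateStateI{\alpha'}{\emptyDB'},W)\equiv_{2q}(\state,W)$, so the reached state is still bad --- contradicting minimality. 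Hence a bad state already occurs on a domain of size at most $N$, and $N$ is computable from \prog, as required.
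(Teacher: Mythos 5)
There is a genuine gap, and it sits exactly at the step you flagged as ``the main obstacle'': your stability lemma is false for general $\DynFOIA{1}{m}$-programs, and the capped-position locality argument you propose cannot repair it. Already a quantifier-free program that maintains a single auxiliary bit (or, in the bit-free presentation, a unary relation holding either all or no elements) which \emph{flips on every insertion} is a counterexample: two normal-form sequences whose characteristic vectors are, say, $(n_0,T+1)$ and $(n_0,T+2)$ agree up to any cap $T$, and the witness elements can be placed at identical positions, yet the final states differ in that bit and are therefore not $\equiv_k$ for any $k$, with or without $W$ marked. More generally, first-order (even quantifier-free) updates can maintain modular counters of positions within a color block, so the auxiliary type of an element deep inside a block is \emph{not} determined by its position capped from either end; no threshold $T$, however large, and no per-block induction can make that claim true. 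Consequently the final deletion step of your argument (``delete the surplus elements not belonging to $W$'') is unsound: removing an uncontrolled number of elements can shift the phase of such a counter and destroy, or spuriously create, the badness witness.

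The paper's proof avoids this with two ingredients your proposal is missing. First, it takes a counterexample that is minimal in \emph{both} the domain size and the length of the insertion sequence; then every intermediate state (and every state reached on a smaller domain) must be homogeneous and \lhi, since otherwise it would itself be bad, contradicting minimality. Homogeneity is what kills the modular-counter phenomenon at the level of tuples: each intermediate state is compressible to its \emph{type function} together with its characteristic vector capped at $K$, and $\sim_K$-similar homogeneous states are $\equiv_K$ (Lemma~\ref{lem:simequiv}). Second, instead of deleting an arbitrary surplus, the paper reorders the sequence so the overpopulated color $c_i$ comes last, and applies a \emph{pigeonhole argument over the finitely many type functions}: among the states $\state_K,\dots,\state_{n_i-K-1}$ there are $j_1<j_2$ with $f_{j_1}=f_{j_2}$, hence $\state_{j_1}\sim_K\state_{j_2}$, and exactly the elements colored between positions $j_1$ and $j_2$ are deleted. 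Because the cut is made between two occurrences of the \emph{same} type function, the evolution of the type function after the cut is reproduced step by step via the forecast formulas $\varphi_F$ (Lemma~\ref{lem:formulas}, Lemma~\ref{lem:typefunction}), so the state just before the end still satisfies $\phibad$ and the shortened instance is still bad --- which also explains why the global bit/counter examples cause no harm there: the number of deleted insertions automatically respects whatever bounded information the type function carries. Your proposal has neither the homogeneity-based compression (your stability lemma quantifies over arbitrary sequences, where homogeneity may fail) nor the repetition-based cut, and both are needed; with them added, your outline essentially becomes the paper's proof.
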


\begin{theorem}\label{theorem:hidecidable}
\HI is decidable for $\DynFOI{1}$-programs.
\end{theorem}

Using the same technique as in the proof of Theorem \ref{theorem:emptiness:consistent:unaryDynProp}(b), history independence can be shown to be decidable for $\DynPropA{1}$-programs.
\begin{theorem}\label{theorem:hidynpropdecidable}
  \HI is decidable for $\DynPropA{1}$-programs.  
\end{theorem}
}
{

As discussed in Section \ref{section:emptinessconsistent}, it is natural to expect that a dynamic program is consistent, i.e., that the query relation only depends on the current database, but not on the modification sequence by which it has been reached. 
Many dynamic programs in the literature satisfy a stronger property: not only their query relation but \emph{all} their auxiliary relations depend only on the current database. Formally, we call a dynamic program \emph{history independent} if all auxiliary relations in \updateStateI{\alpha}{\db}  only depend on $\alpha(\db)$, for all modification sequences $\alpha$ and initial empty databases~$\db$. 
History independent dynamic programs (also called \emph{memoryless} \cite{PatnaikI94} or \emph{deterministic} \cite{DongS97}) are still expressive enough to maintain interesting queries like undirected reachability \cite{GraedelS12}, but also some lower bounds are known for such programs \cite{DongS97,GraedelS12, ZeumeS15}. 

In this section, we study decidability of the question whether a given dynamic program is history independent.

\problemdescr{\HI[$\calC$]}{A dynamic program $\prog \in \calC$ with $\FO$ initialization}{Is $\prog$ history independent with respect to empty initial databases?}

Note that contrary to the emptiness problem, \HI is not easier for classes of consistent dynamic programs than for classes of general dynamic programs, so we will not study this restriction. This is because for every dynamic program $\prog$ we can construct a consistent dynamic program $\prog'$ that is history independent if and only if $\prog$ is, by introducing a new query bit that is not changed by any update formula.

Not surprisingly, \HI is undecidable in general. This can be shown basically in the same way as the general undecidability of \Emptiness in Theorem \ref{theorem:generalundecidability}. 

\begin{theorem}\label{theorem:hi:generalundecidability}
\HI is undecidable for $\DynFOIA{2}{0}$-programs.
\end{theorem}

\begin{proof}
  Again we reduce from the satisfiability problem for first-order logic over schemas with at least one binary relation symbol. 
  For a given $\FO$-formula $\varphi$, at first we construct the dynamic program $\prog$ from the proof of Theorem \ref{theorem:generalundecidability}. Additionally we add a second auxiliary bit $B$ which is initialized as false and set to true when $\Acc$ is first set to true by an update, and never set to false again. 
  If $\varphi$ is not satisfiable, then all bits remain false and $\prog$ is history independent. If $\varphi$ is satisfiable, then let $\alpha\delta$ be a shortest modification sequence applied to an empty database $\emptyDB$ such that $\Acc$ and $B$ are set to true in $\updateStateI{\alpha\delta}{\emptyDB}$.
  Let $\delta^{-1}$ be the modification that undoes $\delta$.
  Then $B$ is false in $\updateStateI{\alpha}{\emptyDB}$ and true in $\updateStateI{\alpha\delta\delta^{-1}}{\emptyDB}$, but the respective input databases are equal. So $\prog$ is not history independent.
\end{proof}

However, in the following we will see that the precise borders between decidable and undecidable fragments are different for \HI than for \Emptiness and \Emptiness for consistent programs.
More precisely, we will show that \HI is decidable for $\DynFO$- and \DynProp-programs with unary input databases, and for $\DynProp$-programs with unary auxiliary databases. %

We recall the normal form for insertion sequences introduced in Section \ref{section:setting}. For dynamic programs with unary input databases, insertion sequences in normal form (1) color each element contiguously and (2) apply the insertions for each $\inpSchema$-color in the same order. Here we require further that they first color all elements with designated $\inpSchema$-color $c_1$, then all elements with $c_2$ and so on. 

We will first show that to judge \HI of $\DynFOI{1}$-program only modification sequences in normal form (Lemma \ref{lemma:hicharacterization}) and states with a particular property (Lemma \ref{lem:homogeneous}) need to be considered. Finally, we show that if a dynamic program is not history independent, this can be observed already for domains of a bounded size in the size of the program (Proposition \ref{prop:smallmodel-app}).
The decision algorithm then tests all states over such ``small'' domains reached by insertion sequences in normal form  in a brute-force manner.  

Let $\prog$ be a $\DynFOI{1}$-program over schema $\schema = \inpSchema \cup \auxSchema$.
Throughout this section we assume that $\schema$ contains only at least unary relation symbols and no input or auxiliary bits to ease presentation. This is no real restriction, as these bits can easily be simulated by unary relations when quantification is allowed. We usually denote the maximum quantifier depth of (initialization and update) formulas by $q$, the maximum arity of aux-relations by $m$,  and the number of input relations by $\ell$. 
Further we write $L$ for $2^{\ell}-1$ and let $c_0, \ldots, c_L$ be the $\inpSchema$-colors, where $c_0$ is the color of the $\inpSchema$-uncolored elements.
In the following we write ``colors'' and ``uncolored'' instead of $\inpSchema$-colors and $\inpSchema$-uncolored.

We next present a characterization of history independence which is well-suited to algorithmic analysis. We call a state $\state$ over domain $\domain$  \emph{\lhi}\footnote{We define this term for arbitrary input arity, since the first part of Lemma \ref{lemma:hicharacterization} holds in general.} for a dynamic program $\prog$ if the following three conditions hold.
\begin{enumerate}[label={(H\arabic*)}]%
          \item $\updateState{\delta_1 \delta_2}{\state} = \updateState{\delta_2 \delta_1}{\state}$ for all insertions $\delta_1$ and $\delta_2$.
          \item $\state = \updateState{\ins_R(\vec a)\del_R(\vec a)}{\state}$ if $\vec a \notin R^{\state}$, for all $R \in \inpSchema$ and $\vec a$ over $\domain$.
          \item $\state = \updateState{\ins_R(\vec a)}{\state}$ if $\vec a \in R^{\state}$ and $\state = \updateState{\del_R(\vec a)}{\state}$ if $\vec a \notin R^{\state}$, for all $R \in \inpSchema$ and $\vec a$ over $\domain$.
        \end{enumerate}

\begin{lemma}\label{lemma:hicharacterization}
  Let $\prog$ be a dynamic program. 
    \begin{enumerate}[ref={\thetheorem\ (\alph*)}]
      \item $\prog$ is history independent if and only if every state reachable by $\prog$ via insertion sequences is \lhi.
      \item  If $\prog$ is a $\DynFOI{1}$-program, then $\prog$ is history independent if and only if every state reachable by $\prog$ via insertion sequences in normal form is \lhi.
    \end{enumerate}
\end{lemma}

\begin{proof}
\begin{proofenum}
\item 
(only-if) It is easy to see that local history independence for all reachable states is necessary for history independence. 

(if) Assume, towards a contradiction, that there is a dynamic program $\prog$, for which every state reachable by an insertion sequence is \lhi, but $\prog$ is not history independent. Then there are two modification sequences $\alpha_1$ and $\alpha_2$ to an empty database $\emptyDB$ with $\updateDB{\alpha_1}{\emptyDB} = \updateDB{\alpha_2}{\emptyDB}$ but $\updateStateI{\alpha_1}{\emptyDB} \neq \updateStateI{\alpha_2}{\emptyDB}$.
We construct insertion sequences $\alpha_1'$ and $\alpha_2'$ that lead to the same state as  $\alpha_1$ and $\alpha_2$, respectively.  Repeated application of (H1) then yields $\updateStateI{\alpha'_1}{\emptyDB} = \updateStateI{\alpha'_2}{\emptyDB}$ and altogether $\updateStateI{\alpha_1}{\emptyDB}=\updateStateI{\alpha'_1}{\emptyDB} = \updateStateI{\alpha'_2}{\emptyDB} = \updateStateI{\alpha_2}{\emptyDB}$, the desired contradiction.

We only describe how to construct the insertion sequence $\alpha'_1$ from $\alpha_1$; the construction of $\alpha'_2$ from $\alpha_2$ is completely analogous. Let thus $\alpha_1 = \delta_1 \cdots \delta_N$ and, for every $i$, we denote by $\state_i\df\updateStateI{\delta_1 \cdots \delta_i}{\emptyDB}$.

A modification is \emph{bad} if it is a deletion or the repeated insertion of a fact. 
The insertion sequence $\alpha'_1$ is constructed by successively eliminating all bad modifications from $\alpha_1$. If $\alpha_1$ does not contain any bad modification, we are done. Otherwise, let $\delta_k$ be the first bad modification in~$\alpha_1$. Since $\delta_1\cdots \delta_{k-1}$ is an insertion sequence, by our assumption $\state_{k-1}$ is \lhi.  Therefore, $\delta_k$ can be eliminated from $\alpha_1$ as follows. If $\delta_k = \del_R(\vec a)$ such that $\vec a \notin R^{\state_{k-1}}$ or $\delta_k = \ins_R(\vec a)$ such that $\vec a \in R^{\state_{k-1}}$ then $\state_k=\state_{k-1} $ thanks to (H3) and $\delta_k$ can be removed from $\alpha_1$ without affecting the resulting state.
If $\delta_k = \del_R(\vec a)$ such that $\vec a \in R^{\state_{k-1}}$, then there must be an insertion $\ins_R(\vec a)$ in~$\delta_1 \cdots \delta_{k-1}$. By (H1) the insertions $\delta_1 \cdots \delta_{k-1}$ can be rearranged into a sequence $\beta \ins_R(\vec a)$, such that $\beta$ consists of all modifications from $\delta_1 \cdots \delta_{k-1}$ besides $\ins_R(\vec a)$ and the resulting state is $\state_{k-1}$.  By (H2), the modification sequences $\beta $ and $\beta \ins_R(\vec a) \del_R(\vec a)$ yield the same state, but $\beta$ has fewer deletions than $\delta_1 \cdots \delta_k$. The modification sequence $\alpha'_1$ is obtained by repeating this procedure.
\item
(only-if) Again, local history independence for all reachable states is necessary for history independence. 

(if) Let $\prog$ be a dynamic program for which every state reachable via a insertion sequence in normal form is \lhi. We show that every state reachable by an insertion sequence is also reachable by a normal form sequence. That \prog is history independent then follows from (a).

We thus assume, towards a contradiction, that there is an insertion sequence $\alpha = \delta_1 \cdots \delta_N$ and an empty database $\emptyDB$ such that $\state=\updateState{\alpha}{\emptyDB}$ is not reachable by any insertion sequence in normal form. Let $\alpha$ and $\emptyDB$ be chosen such that $N$ is minimal. Therefore, $\state'=\updateState{\delta_1\cdots\delta_{N-1}}{\emptyDB}$ can be reached by a normal form modification\footnote{Of course, insertion sequences yielding the same state have the same length.} sequence $\alpha'=\delta'_1\cdots\delta'_{N-1}$ and, by our assumption, $\state'$ and all prior states reached by prefixes of $\alpha'$ are \lhi. By inductive application of (H1), $\delta_N$ can now be moved to its appropriate place inside $\alpha'$ to yield a normal form sequence $\alpha''$ equivalent to $\alpha$. Therefore, $\state$ is reachable by a normal form sequence, the desired contradiction.
\end{proofenum}
\end{proof}

We next define another property that reachable states of history independent programs share. A state \state is \emph{homogeneous} if all tuples  $\vec a$ and $\vec b$ with the same (atomic) \inpSchema-type also have the same (atomic) \auxSchema-type. For every homogeneous state \state we denote by $f_\state$ the \emph{(atomic) type function} that maps every (atomic) \inpSchema-type of arity $m$ (the maximal arity of $\schema$) to the corresponding (atomic) \auxSchema-type\footnote{If there is no tuple $\vec a$ of an $\inpSchema$-type $c$ in $\state$, then $f_\state(c) = \bot$}. The following lemma is an immediate consequence of \cite[Lemma 16]{DongS97}.

\begin{lemma}\label{lem:homogeneous}
  For every history independent $\DynFOI{1}$-program, every reachable state is homogeneous. 
\end{lemma}

We call a state of a $\DynFOI{1}$-program that is  not homogeneous or not \lhi a \emph{bad state}.
That a state is bad can be expressed in first-order logic. Likewise the possible effects of coloring a single uncolored element on the type function of a state can be expressed by first-order formulas. To state this more precisely, we use \emph{type forecast functions} $F:\{1,\ldots,L\}\to \calF$, where $\calF$ is the set of possible type functions for \prog.

\begin{lemma}\label{lem:formulas}
 Let $\prog$ be a $\DynFOIA{1}{m}$-program with maximum quantifier-depth $q$ and $\ell$ input relations.
  \begin{enumerate}
  \item There is a formula $\phibad$ of quantifier-depth at most $3+2m+(\ell+1)q$ that is true in a state $\state$ if and only if $\updateState{\alpha}{\state}$ is bad for at least one modification sequence $\alpha$ that colors   a single uncolored  element of \state.
  \item For every type forecast function $F$ there is a formula $\varphi_F$ of quantifier depth $1+m+\ell{}q$ that is true in a homogeneous state $\state$ if and only if, for every $i\le L$, $\updateState{\alpha}{\state}$ has type function $F(i)$ if $\alpha$ colors some uncolored element with $c_i$.
  \end{enumerate}
\end{lemma}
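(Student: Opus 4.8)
The plan is to reduce both statements to one mechanism: to express the state obtained after coloring a single element directly as a family of first-order formulas over the starting state \state, by symbolically composing the update formulas of \prog. The key quantitative observation is that if $\delta_1\cdots\delta_k$ are insertions, then for every $R\in\schema$ the property ``$\vec b\in R$ holds in $\updateState{\delta_1\cdots\delta_k}{\state}$'' is expressible over \state by a formula of quantifier depth at most $kq$: composing the update formula for the last insertion with the already-composed formulas for the relations it mentions adds at most $q$ per step. Coloring a single uncolored element with a fixed color $c_i$ consists of at most $\ell$ insertions, so the resulting ``coloring formulas'' $\rho^{(i)}_R(a,\vec b)$ have quantifier depth at most $\ell q$. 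Since the input schema is unary and only the chosen element $a$ changes color, the new \inpSchema-type of any tuple is quantifier-free in $a$, $\vec b$, and the \inpSchema-colors of \state.

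Statement (b) then follows directly. For each color $c_i$ and each \auxSchema-type $\gamma$, applying the coloring formulas to the relevant subtuples yields a formula $\mathrm{aux}^{(i)}_\gamma(a,\vec b)$ of quantifier depth at most $\ell q$ asserting that $\vec b$ has \auxSchema-type $\gamma$ after coloring $a$ with $c_i$, together with a quantifier-free $\mathrm{in}^{(i)}_{c'}(a,\vec b)$ for its new \inpSchema-type; let $\mathrm{uncol}(a)$ be the quantifier-free predicate stating that $a$ is uncolored. I then set
\[ \varphi_F \df \bigwedge_{i=1}^{L}\forall a\Big(\mathrm{uncol}(a)\mthen\bigwedge_{c'}\forall\vec b\big(\mathrm{in}^{(i)}_{c'}(a,\vec b)\mthen\mathrm{aux}^{(i)}_{F(i)(c')}(a,\vec b)\big)\Big). \]
In a homogeneous \state this expresses precisely that coloring any uncolored element with $c_i$ produces a homogeneous state whose type function is $F(i)$. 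The quantifier depth is $1$ for $a$, $m$ for the $m$-tuple $\vec b$, and $\ell q$ inside, i.e.\ $1+m+\ell q$, as required.

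For statement (a), $\phibad$ is a disjunction, guarded by $\exists a$ choosing the uncolored element and by a finite disjunction over the target color $c_i$, of the ways the colored state can be bad. The non-homogeneity disjunct is $\exists\vec b\,\exists\vec c$ (two $m$-tuples) whose new \inpSchema-types agree while their new \auxSchema-types, read off from the coloring formulas, differ; this has depth $1+2m+\ell q$. The remaining disjuncts express failure of (H1)--(H3) for the colored state, each of the shape ``a short test sequence $\beta$ applied after the coloring flips some relation $R$ at some $m$-tuple $\vec b$'', obtained by composing the coloring formulas with the update formulas of $\beta$ and quantifying the (unary) elements touched by $\beta$ together with $\vec b$. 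For (H3), $\beta$ is one spurious insertion or deletion, so the composition nests $\ell+1$ update formulas and the disjunct has depth $2+m+(\ell+1)q$, already within the target.

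The main obstacle is to bring the two-step conditions (H1) and (H2) within the same bound $3+2m+(\ell+1)q$: a naive composition of the coloring with two further insertions (for (H1)) or with an insert--delete pair (for (H2)) would nest $\ell+2$ update formulas, yielding a $q$-term of $(\ell+2)q$ instead of $(\ell+1)q$. The plan is to spend the second block of $m$ quantifiers in the bound precisely here, trading the final composition step for quantification. When testing (H1) one may assume the colored state is homogeneous, since otherwise the non-homogeneity disjunct already fires; and in the unary-input setting the effect of one further insertion on a homogeneous state is governed by the \inpSchema-colors of the elements involved, so the commutation of two insertions can be witnessed by guessing the two insertion points $x_1,x_2$ and an additional $m$-tuple that records the relevant subtuple after the first insertion, leaving only $\ell+1$ update formulas nested; (H2) is treated analogously with an insert--delete pair of a non-present fact. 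Counting $a,x_1,x_2$, two $m$-tuples, and $\ell+1$ nested updates gives exactly $3+2m+(\ell+1)q$, and the full $\phibad$, being the disjunction of these parts, has depth the maximum of the individual depths, hence at most $3+2m+(\ell+1)q$. Verifying that this homogeneity-based reduction genuinely removes one composition level, uniformly for (H1) and (H2), is the delicate point of the argument; the equivalence itself is then immediate from the definitions of homogeneity and of (H1)--(H3), using that in the unary-input setting a coloring sequence and each test modification touch only a single element.
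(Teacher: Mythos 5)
Your overall strategy --- expressing the relations of successor states by symbolically composing update formulas, then quantifying witnesses for non-homogeneity and for failures of (H1)--(H3) --- is exactly the paper's. Your part (b) coincides with the paper's construction, with the same depth count $1+m+\ell q$ (your universal quantification over the recolored element is in fact the more faithful choice: the paper quantifies it existentially, and on homogeneous states the two versions agree because any two uncolored elements realize the same first-order type there). Your non-homogeneity and (H3) disjuncts of $\phibad$ likewise match the paper's $\varphi^i_1$ and the easy part of $\varphi^i_2$. You have also put your finger on the one genuinely delicate point: for (H1)/(H2) the naive construction nests $\ell+2$ update formulas (coloring plus two test modifications), giving depth $3+m+(\ell+2)q$, which exceeds the stated bound $3+2m+(\ell+1)q$ precisely when $q>m$. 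The paper's own sketch glosses over exactly this: it asserts depth $2+m+2q$ for the (H1) disjunct without explaining how the $\ell q$ cost of the initial coloring is absorbed into the nesting. (Note also that for the downstream use in Proposition \ref{prop:smallmodel-app} any computable depth bound would suffice, so one could alternatively just enlarge the constant in the lemma.)

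However, the mechanism you propose for closing this gap does not work as literally described. Guessing ``an additional $m$-tuple that records the relevant subtuple after the first insertion'' cannot replace a composition level: the update formula for the second test insertion quantifies over the whole domain, so it needs atoms of the intermediate state $\updateState{\delta_1}{\state'}$ at arbitrary quantified positions, not only at one guessed tuple. The correct way to exploit your (sound) observation that the colored state $\state'$ may be assumed homogeneous is the following: disjoin over the finitely many candidate type functions $f$, add a conjunct of depth $m+\ell q$ verifying that $\state'$ is homogeneous with type function $f$, and then, inside the (H1)/(H2) test, replace every auxiliary atom over $\state'$ by the quantifier-free condition on $\inpSchema$-colors and equalities that $f$ dictates --- this replacement is valid pointwise, also under quantifiers, because homogeneity constrains \emph{all} tuples, and it is this step (not any property of $\updateState{\delta_1}{\state'}$, which need not be homogeneous) that removes the coloring level from the nesting. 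The two test insertions then cost only $2q$, and the whole disjunct has depth $1+\max\big(m+\ell q,\;2+m+2q\big)$, which lies within $3+2m+(\ell+1)q$ since $\ell\ge 1$; in particular your count of ``$\ell+1$ nested update formulas'' is not what comes out (one gets two nested updates plus a separate depth-$(m+\ell q)$ verification branch). With this repair your argument is complete, and indeed more careful than the paper's own proof at the very step where it is vague.
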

\begin{proof}
  \begin{proofenum}
    \item The formula is of the form 
\[
\exists x \bigvee_{i=1}^L (\varphi^i_1\lor \varphi^i_2),
\]
where $\varphi^i_1$ expresses that the state that results from coloring an uncolored element by $c_i$ is not homogeneous and $\varphi^i_2$ expresses that it is not \lhi. 

To this end, $\varphi^i_1$ existentially quantifies two $m$-tuples  (depth: $2m$) and expresses that they have the same $\inpSchema$-type but different $\auxSchema$-types in the state after the coloring (depth: $\ell{}q$). 

The formula $\varphi^i_2$ is a three-fold disjunction for the conditions (H1-3). As an example, the formula for (H1) quantifies two elements $a,a'$ (depth: 2), an $m$-tuple (depth: $m$) and tests that for some color $c_i$ the $\auxSchema$-types of the two databases resulting from the two possible orders in which $a$ and $a'$ can be colored by $c_i$ (depth: $2q$) differ in the $m$-tuple.

Altogether, $\phibad$ has quantifier-depth $1+\max(2m+\ell{}q,2+m+2q)\le 3+2m+(\ell+1)q$.

    \item Similarly, each formula $\varphi_F$ existentially quantifies an element $a$ to be colored, has a disjunct for all possible colors, and universally quantifies an $m$-tuple and tests that the $\auxSchema$-type of it is consistent with its $\inpSchema$-type and $F$.   Overall this yields quantifier depth $1+m+\ell{}q$.
\end{proofenum}%
\end{proof}

We next formalize the observation that for a homogeneous state, the truth of first-order formulas of quantifier depth $k$ only depends on its color frequencies up to $k$\footnote{Note the similarities to Lemma \ref{lem:PropCons11}}.  To this end, we associate with every state $\state$ its \emph{\invectorterm} $\invector{\state} = (n_0, \ldots, n_L)$ over $\N$ where $n_i\df \invectorcomp{\state}{i}$ is the number of elements with $\inpSchema$-color $c_i$ in $\state$. 

We write $n\simeq_k m$, for numbers $k,n,m$, if $n=m$ or both $n\ge k$ and $m\ge k$. We write $(n_0,\ldots,n_L)\simeq_k (n'_0,\ldots,n'_L)$, if for every $i\le L$, $n_i\simeq_k n'_i$.

For a given $k$, we say that two homogeneous states $\state$ and $\state'$ are \emph{$k$-similar} (denoted by $\state\sim_k\state'$) if
\begin{itemize}\item $\invector{\state}\simeq_k \invector{\state'}$ and
\item $\state$ and $\state'$ have the same type function. 
\end{itemize}

Now we can make the relationship between \invectortermplural{} and first-order types more precise.\footnote{We note that for homogeneous states it actually holds:  $\state \sim_k \state'$ if and only if $\state\equiv_k \state'$.}

\begin{lemma}\label{lem:simequiv}
Let $\prog$ be a $\DynFOIA{1}{m}$-program and let $\state$ and $\state'$ be two homogeneous states for $\prog$. For every $k \in \N$, if $\state \sim_k \state'$ then $\state\equiv_k \state'$.
\end{lemma}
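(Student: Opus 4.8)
The plan is to establish $\state\equiv_k\state'$ by exhibiting a winning strategy for Duplicator in the $k$-round Ehrenfeucht-Fra\"isse game on $\state$ and $\state'$; by the Ehrenfeucht-Fra\"isse theorem this is equivalent to $\state\equiv_k\state'$. This mirrors the argument for Lemma \ref{lem:PropCons11}(a), but since the auxiliary relations can now have arity up to $m$, the extra ingredient will be to exploit homogeneity together with the shared type function, rather than just the color counts.

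First I would describe Duplicator's strategy. Throughout the game Duplicator maintains the invariant that, after $r$ rounds, the chosen pairs $(a_1,b_1),\ldots,(a_r,b_r)$ with $a_i\in\state$ and $b_i\in\state'$ satisfy: (i) $a_i=a_{i'}$ if and only if $b_i=b_{i'}$, and (ii) $a_i$ and $b_i$ have the same $\inpSchema$-color. Whenever Spoiler picks an element equal to a previously chosen one, Duplicator answers with its partner; otherwise, if Spoiler picks a fresh element of some color $c_s$ in one structure, Duplicator answers with a fresh element of color $c_s$ in the other. The feasibility of this is exactly where the hypothesis $\invector{\state}\simeq_k\invector{\state'}$ enters. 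Consider a color $c_s$. If $\invectorcomp{\state}{s}=\invectorcomp{\state'}{s}$, then by invariants (i) and (ii) the same number of color-$c_s$ elements has been used on each side, so the numbers of unused color-$c_s$ elements agree; if one side still has a fresh one, so does the other. If instead both counts are $\ge k$, then after $r<k$ rounds at most $k-1$ color-$c_s$ elements are used on each side, and since $k>k-1$ a fresh partner always remains. Hence Duplicator can maintain the invariant through all $k$ rounds.

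It remains to argue that the final correspondence $a_i\mapsto b_i$ is a partial isomorphism, i.e.\ preserves every atomic fact; this is the heart of the proof. Unary input relations are preserved by invariant (ii). For the auxiliary relations, I would take any tuple $\vec a$ formed from the pebbles $a_1,\ldots,a_k$ and the corresponding tuple $\vec b$ formed from $b_1,\ldots,b_k$. By invariants (i) and (ii) these tuples have the same equality pattern and componentwise the same color, hence the same atomic $\inpSchema$-type, because input relations are unary and so the $\inpSchema$-type is completely determined by the colors of the components and their equalities. Since $\state$ and $\state'$ are homogeneous and, by $\state\sim_k\state'$, carry the \emph{same} type function, the $\auxSchema$-type of $\vec a$ in $\state$ equals the $\auxSchema$-type of $\vec b$ in $\state'$. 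Thus every auxiliary atom is preserved, the correspondence is a partial isomorphism, and Duplicator wins.

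The main obstacle is precisely this last step: forcing matching $\auxSchema$-types from matching $\inpSchema$-colors and equality patterns. This is exactly what homogeneity plus the shared type function delivers, and it explains why the definition of $\sim_k$ bundles the type-function condition together with $\invector{\state}\simeq_k\invector{\state'}$. A small technical point to handle carefully is the case $k<m$, where fewer than $m$ distinct pebbles are placed: I would treat an auxiliary atom on such a short tuple as an atom occurring inside the atomic type of an $m$-tuple obtained by padding with repeated pebbles, so that the type function (defined on $m$-ary types) still applies uniformly.
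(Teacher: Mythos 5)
Your proof is correct and follows essentially the same route as the paper's (which is only a two-sentence sketch): Duplicator wins the $k$-round game on the $\inpSchema$-reducts because the color counts agree up to threshold $k$, and the resulting partial isomorphism extends to the auxiliary relations because homogeneity plus the shared type function forces matching $\auxSchema$-types from matching equality patterns and colors. Your explicit handling of the padding to $m$-tuples is a detail the paper leaves implicit, but it is the same argument.
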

We recall that $\state\equiv_k \state'$ means that the two states satisfy exactly the same first-order formulas of quantifier depth (up to) $k$.
\begin{proof}
If $\state \sim_k \state'$ then the duplicator has a straightforward winning strategy for the $k$-round Ehrenfeucht-Fra\"isse game on the $\inpSchema$-reducts of $\state$ and $\state'$. Since both states are homogeneous and have the same type function, this winning strategy extends to $\auxSchema$  and the strategy of duplicator is a winning strategy for $\state$ and $\state'$.
\end{proof}

By combining Lemmas \ref{lem:formulas} and \ref{lem:simequiv} we get the following lemma, which will be the most important technical tool in the proof of a small counterexample property for programs that are not history independent.

\begin{lemma}\label{lem:typefunction}
   Let $\prog$ be a $\DynFOIA{1}{m}$-program with maximum quantifier-depth $q$ and $\ell$ input relations, let $K\ge 1+m+\ell{}q$ and let 
$\state$ and $\state'$ be two homogeneous states for $\prog$ with $\state \sim_K \state'$. Let $a$ and $a'$ be uncolored elements in $\state$ and $\state'$, respectively. Let $\beta$ and $\beta'$ be insertion sequences that color $a$ and $a'$, respectively with the same color $c_i$. Then $\updateState{\beta}{\state}$ and  $\updateState{\beta'}{\state'}$ have the same type function, in case they are both homogeneous. 
\end{lemma}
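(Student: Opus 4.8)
The plan is to reduce the claim to the fact, established in Lemma~\ref{lem:simequiv}, that $\state \equiv_K \state'$, and to observe that the type function produced by coloring a single uncolored element with a fixed color is captured by first-order sentences of quantifier depth at most $K = 1+m+\ell q$ -- essentially the content of Lemma~\ref{lem:formulas}(b). The work then consists in setting up the right sentences, in handling the fact that a coloring changes the $\inpSchema$-type of tuples containing the colored element, and in transferring information from \emph{some} uncolored element (which is all that $\state \equiv_K \state'$ directly provides) to the prescribed elements $a$ and $a'$. We may assume that $\beta$ and $\beta'$ apply their insertions in a common order (as holds for normal-form sequences), so that they differ only in the element being colored.

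First I would fix an $\inpSchema$-type $c$ of arity $m$ together with a candidate value $d$ (an $\auxSchema$-type or $\bot$) and build a sentence $\Phi_{c,d}$ expressing: \emph{there is an uncolored element $z$ such that, after coloring $z$ with $c_i$, the type-function value at $c$ equals $d$}. Since the color $c_i$ is determined by at most $\ell$ unary input relations, coloring $z$ is realized by at most $\ell$ insertions, and the effect on an auxiliary relation is obtained by composing at most $\ell$ update formulas, each of quantifier depth $q$; as the insertions merely add $z$ to unary relations, the changing input relations can be absorbed without extra quantifiers, so these composed ``coloring'' formulas have quantifier depth at most $\ell q$. Expressing that an $m$-tuple $\vec y$ has $\inpSchema$-type $c$ and $\auxSchema$-type $d$ in the colored state is then a quantifier-free combination of these formulas; quantifying $\vec y$ (universally to check that all tuples of $\inpSchema$-type $c$ carry $\auxSchema$-type $d$, and existentially to witness such a tuple when $d \neq \bot$) adds depth $m$, and the outer quantifier for $z$ adds one. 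Hence $\Phi_{c,d}$ has quantifier depth at most $1+m+\ell q \le K$.

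Next I would transfer. Since $\updateState{\beta}{\state}$ is homogeneous it has a well-defined type function; set $d := f_{\updateState{\beta}{\state}}(c)$. As $\beta$ colors the uncolored element $a$ with $c_i$, we have $\state \models \Phi_{c,d}$, and therefore $\state' \models \Phi_{c,d}$ by $\state \equiv_K \state'$. This yields \emph{some} uncolored element $z'$ of $\state'$ whose coloring with $c_i$, say $\updateState{\beta''}{\state'}$, produces type-function value $d$ at $c$; it remains to replace $z'$ by the prescribed $a'$. Here I would invoke the interchangeability of uncolored elements in a homogeneous state: the transposition $\pi$ swapping $z'$ and $a'$ fixes every unary input relation (both elements being uncolored), hence is an $\inpSchema$-automorphism, and by homogeneity of $\state'$ (tuples of equal $\inpSchema$-type have equal $\auxSchema$-type) it is an automorphism of $\state'$. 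Since the update formulas are isomorphism-invariant, $\pi$ maps $\updateState{\beta''}{\state'}$ isomorphically onto $\updateState{\beta'}{\state'}$, so these states have the same type function; in particular the value of $\updateState{\beta'}{\state'}$ at $c$ is also $d$. Letting $c$ range over all $\inpSchema$-types of arity $m$ shows that $\updateState{\beta}{\state}$ and $\updateState{\beta'}{\state'}$ agree on every coordinate, so, both being homogeneous, they have the same type function.

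The step I expect to be most delicate is the quantifier-depth bookkeeping that keeps $\Phi_{c,d}$ within the budget $K = 1+m+\ell q$: one must verify that composing the at most $\ell$ single-insertion update formulas really stays at depth $\ell q$ and that reading off the $\inpSchema$- and $\auxSchema$-type of the $m$-tuple costs only the depth-$m$ quantifier block plus one outer existential. The interchangeability argument is the second key point, but it is short once homogeneity is used. I note that $\Phi_{c,d}$ is just the single-color, single-type restriction of the sentences $\varphi_F$ from Lemma~\ref{lem:formulas}(b); restricting to the color $c_i$, rather than asserting a full forecast function for every color, is precisely what lets the argument proceed under the weaker hypothesis that only the $c_i$-colorings of $\state$ and $\state'$ are homogeneous.
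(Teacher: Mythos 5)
Your proposal is correct, and its core strategy is the same as the paper's three-line proof: convert $\state \sim_K \state'$ into $\state \equiv_K \state'$ via Lemma~\ref{lem:simequiv}, then transfer first-order sentences of quantifier depth at most $1+m+\ell q \le K$ that describe the effect of coloring one uncolored element, with exactly the depth accounting of Lemma~\ref{lem:formulas} ($\ell q$ for the composed update formulas, since the modified unary input relations are quantifier-free definable, plus the $m$-tuple block and one outer quantifier). However, your execution differs in two respects, and both make the argument more faithful to the lemma as stated. First, the paper appeals to the full forecast sentences $\varphi_F$ and claims ``there is a unique type forecast function $F$ such that $\varphi_F$ holds in $\state$ and $\state'$''; since $\varphi_F$ constrains the colorings by \emph{every} color $c_j$, its truth requires all of those colorings to be homogeneous, which the lemma's hypothesis does not provide (only the $c_i$-colorings are assumed homogeneous; the stronger property is available in the lemma's sole application, Proposition~\ref{prop:smallmodel-app}, by minimality). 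Your single-color, single-type sentences $\Phi_{c,d}$ sidestep this, so your proof establishes the lemma under its literal hypotheses where the paper's proof quietly uses more. Second, you make explicit the passage from the existential witness $z'$ (all that a sentence over $\state'$ can deliver) to the prescribed element $a'$, via the observation that in a homogeneous state the transposition of two uncolored elements is an automorphism and that update semantics commute with isomorphisms; the paper leaves this interchangeability buried in the phrase ``colors some uncolored element'' of Lemma~\ref{lem:formulas}(b). One caveat is shared by both proofs: the sentences must commit to a fixed order of the at most $\ell$ insertions, so the lemma really concerns coloring sequences $\beta$ and $\beta'$ that insert in a common order; you state this as an explicit assumption, and it is precisely what condition (N2) of the normal form guarantees wherever the lemma is used.
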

\begin{proof}
  By Lemma \ref{lem:simequiv}, we know that $\state\equiv_K \state'$. In particular, thanks to Lemma \ref{lem:formulas} and the homogeneity of $\updateState{\beta}{\state}$ and  $\updateState{\beta'}{\state'}$, there is a unique type forecast function $F$ such that $\varphi_F$ holds in \state and $\state'$. Therefore, after coloring $a$ and $a'$ with $c_i$ the resulting states both have type function $F(i)$. 
\end{proof}

Now we can show a small counterexample property for programs that are not history independent. %

\begin{proposition}\label{prop:smallmodel-app}
   Let \prog be a $\DynFOIA{1}{m}$-program with quantifier depth $q$ and $\ell$ input relations, and let $K\df 3+2m+(\ell+1)q$ and $T$ be the number of type functions. If \prog is \emph{not} history independent, then there exists a database $\emptyDB$ of size at most $N\df (2K+T)(L+1)$  and a insertion sequence in normal form $\alpha$ such that $\updateStateI{\alpha}{\emptyDB}$ is bad.
\end{proposition}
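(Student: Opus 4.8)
The plan is to establish a small-counterexample property: if $\prog$ is not history independent, then a bad state is already produced by a normal-form insertion sequence over a domain of size at most $N$. The argument combines the characterizations of Lemmas~\ref{lemma:hicharacterization}--\ref{lem:typefunction} with a pigeonhole bound on the number of elements of each colour.

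First I would reduce to finding a small homogeneous \lhi{} state satisfying $\phibad$. By Lemma~\ref{lemma:hicharacterization}(b), if $\prog$ is not history independent then some state reachable by a normal-form insertion sequence is not \lhi{}, hence bad; fix such a sequence $\alpha$ of minimal length and let $\state_0$ be the state reached immediately before its last element is coloured, say with colour $c_i$. Minimality forces every proper prefix of $\alpha$ to reach a good (homogeneous and \lhi{}) state, so $\state_0$ is homogeneous and, by Lemma~\ref{lem:formulas}(a), $\state_0 \models \phibad$; in fact the $c_i$-disjunct of $\phibad$ holds, since colouring one more element with $c_i$ makes the state bad. As $\state_0$ uses only colours $c_1,\dots,c_i$, appending a single $c_i$-colouring keeps the sequence in normal form. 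By Lemma~\ref{lem:simequiv} any homogeneous $\state_0^{\ast}$ with $\state_0^{\ast}\sim_K\state_0$ satisfies $\state_0^{\ast}\equiv_K\state_0$, and since $\phibad$ and its $c_i$-disjunct have quantifier depth at most $K$, such a $\state_0^{\ast}$, if reachable in normal form using only colours up to $c_i$, again becomes bad after one $c_i$-colouring. Thus it suffices to produce such a $\state_0^{\ast}$ over a domain of size at most $N$.

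Second, I would obtain $\state_0^{\ast}$ by replaying the normal-form process reaching $\state_0$ while colouring far fewer elements. The key observation is that along a normal-form sequence the $\sim_K$-class of a homogeneous state, namely its type function together with its \invectorterm{} $\invector{\state}$ capped at $K$, evolves deterministically: by Lemma~\ref{lem:typefunction} the type function after one further $c_j$-colouring depends only on the current $\sim_K$-class, while the capped vector updates in the obvious way. As long as the number of uncoloured elements stays at least $K$, the uncoloured coordinate is frozen at the cap, so within a single colour phase, once the count of that colour also reaches $K$, the capped vector is constant and the type function is iterated by a fixed map on the set of $T$ type functions. Pigeonhole on any $T+1$ consecutive colourings then gives a repeated type function, i.e.\ a loop in the $\sim_K$-trajectory that can be excised without altering the final $\sim_K$-class; iterating shortens every phase to at most $K+T$ colourings. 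Choosing the final number of uncoloured elements to equal that of $\state_0$ when it is below $K$ and $K$ otherwise, and (in the former case) reserving a tail of at most $K$ colourings so that the reduced prefix runs entirely with at least $K$ uncoloured elements, the $\le L$ colour phases contribute at most $K+T$ each and the uncoloured part together with the tail at most $2K$, so the total domain size is at most $(2K+T)(L+1)=N$.

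The main obstacle is that Lemma~\ref{lem:typefunction} only transports the type function when \emph{both} successor states are already homogeneous, so the induction must certify homogeneity of every reduced state. I would resolve this by noting that homogeneity is a depth-$2m$ sentence and, crucially, that ``colouring an uncoloured element with $c$ yields a homogeneous state'' is first-order of depth $1+2m+\ell q\le K$; both are therefore invariant under $\equiv_K$. Hence whenever a reduced state is $\sim_K$-equal to an original homogeneous state $\state^{(t)}$, the homogeneity of the $c_j$-successor of $\state^{(t)}$ transfers through $\equiv_K$ to the reduced state, making its $c_j$-successor homogeneous, so Lemma~\ref{lem:typefunction} applies and $\sim_K$-equivalence is maintained step by step. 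The remaining work is bookkeeping: verifying that loop excision never drops the uncoloured count or the relevant colour count below $K$ (so that capped coordinates are genuinely unchanged), that the reserved tail runs with exactly matching uncoloured counts in the reduced and original runs, and that the final count stays within $N$. Colouring one further $c_i$-element of $\state_0^{\ast}$ then yields the required bad state, reachable in normal form over a domain of size at most $N$.
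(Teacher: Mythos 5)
Your proposal is correct and shares its core with the paper's proof: both restrict attention to normal-form insertion sequences via Lemma \ref{lemma:hicharacterization}(b), consider the state one colouring before the first bad state (which satisfies \phibad{} by Lemma \ref{lem:formulas}), and shrink the run by pigeonholing on the $T$ type functions inside a single-colour stretch where all capped coordinates of the characteristic vector are frozen, transferring $\sim_K$-equivalence step by step with Lemmas \ref{lem:simequiv} and \ref{lem:typefunction}; the accounting behind the bound $(2K+T)(L+1)$ is the same. The structural differences are two. First, the paper argues by contradiction from a witness that is minimal in \emph{both} domain size and number of colourings, reorders the sequence (using local history independence of all prefix states, condition (H1)) so that the offending colour phase comes last, and performs a \emph{single} loop excision to contradict domain minimality; you instead compress a minimal-length witness directly, iterating excisions phase by phase. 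Second — the genuine divergence — is how homogeneity of the shrunken run's states is certified: the paper gets it for free from domain minimality (a non-homogeneous state reachable over a smaller domain would itself be bad), whereas you prove that ``colouring an uncoloured element with $c_j$ yields a homogeneous state'' is first-order of depth $1+2m+\ell q\le K$ and therefore transfers under $\equiv_K$ from the original run, whose prefix states are good by minimality. Your transfer argument is self-contained and makes explicit what the paper compresses into the phrase ``homogeneous by minimality of $n$ and $N$''; the price is the heavier iteration bookkeeping you acknowledge, which does go through because every excised loop lies in the region where the uncoloured count stays at least $K$ in both the original and the compressed run.

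Two small points to repair when writing this up. Your minimal sequence must be chosen minimal among sequences reaching a \emph{bad} state, not merely a non-\lhi{} one; otherwise proper prefixes need not be homogeneous, which your induction and your use of \phibad{} both require. And the homogeneity-transfer sentence must be taken in its universal form (``for \emph{every} uncoloured $x$, colouring $x$ with $c_j$ yields a homogeneous state''), since the compressed run colours a different element than the original; this universal version does hold along the original run, because colouring any uncoloured element extends a proper prefix to a normal-form sequence that is still shorter than your minimal witness.
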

\begin{proof}
Let \prog be a dynamic $\DynFOIA{1}{m}$-program that is not history independent and let $\emptyDB$ be an empty database of minimal size $n$ for which there exists a insertion sequence in normal form $\alpha_1\cdots\alpha_N$, such that $\updateState{\alpha}{\emptyDB}$ is bad, each subsequence  $\alpha_i$ colors one element, and $N$ is minimal.

We consider the state $\state\df \updateStateI{\alpha_1\cdots\alpha_{N-1}}{\emptyDB}$ just before the bad state. Thus $\state$ satisfies the formula $\phibad$ from Lemma \ref{lem:formulas}.

Let $ (n_0, \ldots, n_L)\df \invector{\state}$.
We show first that, for every $i\ge 1$,   $n_i\le 2K+T$.
Towards a contradiction, let us assume that for some $i\ge 1$, $n_i>2K+T$. 

Let $\alpha'=\beta \alpha'_1\cdots \alpha'_{n_i}$ be a reordering of $\alpha_1\cdots\alpha_{N-1}$ such that $\alpha'_1,\ldots,\alpha'_{n_i}$ are insertion subsequences that color the $n_i$ elements with color $c_i$ and $\beta$  contains all other insertions. By minimality of $N$, all involved states are \lhi and therefore the reordering does not affect the resulting state, i.e.,  $\updateStateI{\beta\alpha'_1\cdots\alpha'_{n_i}}{\emptyDB}=\state$.

We denote, for every $j\le n_i$, the state $\updateStateI{\beta\alpha'_1\cdots\alpha'_{j}}{\emptyDB}$ by $\state_j$ and its type function by $f_j$. 
We can conclude that $\state_{j}\simeq_K \state_{j'}$, for all $K\leq j<j'\leq n_i-K-1$, since
\begin{itemize}\item 
in $\state_K$, there are more than $K+T$ uncolored elements and $K$
  elements of color $c_i$, 
\item $\alpha'_{K+1}\cdots\alpha'_{n_i-K-1}$ only
  colors uncolored elements with color $c_i$, and
\item  in $\state_{n_i-K-1}$  there are still more then $K$ uncolored elements.
\end{itemize}

Since there are more than $T$ states between $\state_{K}$ and $\state_{n_i-K-1}$, two of them must have the same type function. That is,  there must be $j_1,j_2$ with $K\leq j_1<j_2 \leq n_i-K-1$ and $f_{j_1}=f_{j_2}$ and therefore $\state_{j_1}\sim_K\state_{j_2}$. 

Let $\emptyDB'$ be the empty database  resulting from $\emptyDB$ by deleting all elements that are colored by the sequence  $\alpha'_{j_1+1}\cdots\alpha'_{j_2}$. Since  $\emptyDB'$ has more than $j_1+K>K>q$ elements,  $\state_\init(\emptyDB') \sim_K \state_\init(\emptyDB)$, in particular these two states have the same type functions. By inductive application of Lemma \ref{lem:typefunction} it is easy to show that 
$\updateState{\beta\alpha'_1\cdots\alpha'_{j_1}}{\emptyDB'}\sim_K \updateState{\beta\alpha'_1\cdots\alpha'_{j_1}}{\emptyDB}$.

 In the inductive step, we start from two corresponding states whose $\sim_K$-equivalence has already been established. In particular, they  agree on all formulas $\varphi_F$ and therefore the application of the same one element coloring sequence yields for both the same type function, thanks to Lemma \ref{lem:typefunction} and because the reached states are homogeneous by minimality of $n$ and $N$.
Since the number of elements for each (proper) color is the same in both new states and both have more than $K$ uncolored elements, they are also equivalent with respect to $\simeq_K$.   

For each $j$ with $j_2\le j\le N-1$ let $\state'_j\df \updateState{\beta\alpha'_1\cdots\alpha'_{j_1}\alpha'_{j_2+1}\cdots\alpha'_{j}}{\emptyDB'}$. 

We emphasize that, for every $j$, $\invector{\state'_j}$ and $\invector{\state_j}$ only differ in their entry for color $c_i$ (which for both is at least $K$). In particular, they have the same number of uncolored elements.

Thus, $\state'_{j_2}\sim_K \state_{j_1}\sim_K \state_{j_2}$ and therefore, as before, $\state'_{j_2}$ and $\state_{j_2}$ agree on all formulas $\varphi_F$.  It follows that the two states $\state'_{j_2+1}$ and $\state_{j_2+1}$ obtained by the sequence $\alpha'_{j_2+1}$  again have the same type function.  As they both have at least $K$ uncolored elements and at least $K$ elements with color $c_i$ (and agree on all other color frequencies), we get $\state'_{j_2+1}\sim_K\state_{j_2+1}$. An inductive application of the same argument yields $\state'_{N-1}\sim_K\state_{N-1}=\state$. Since $\state\models\phibad$ we conclude $\state'_{N-1}\models\phibad$ and thus  $\state'_{N-1}$ is a bad state. As  $\state'_{N-1}$ can be reached by fewer insertions than $\state$ we get the desired contradiction and thus $n_i\le 2K+T$, for all $i\ge 1$.\\

We finally show that $n_0 \le K$.  Otherwise, if $n_0>K$, we could replace $\emptyDB$ by the empty database $\emptyDB'$ in which one element that is uncolored in $\state$ is removed. 
Similarly as before it would follow that $\updateState{\alpha_1\cdots\alpha_{N-1}}{\emptyDB'}\sim_K \updateState{\alpha_1\cdots\alpha_{N-1}}{\emptyDB}$ and therefore that $\updateState{\alpha_1\cdots\alpha_{N-1}}{\emptyDB'}$ satisfies $\phibad$ and is therefore bad, contradicting the choice of $\emptyDB$. This completes the proof of the proposition. 
\end{proof}

We can now conclude the main result of this section.

\begin{theorem}\label{theorem:hidecidable}
\HI is decidable for $\DynFOI{1}$-programs.
\end{theorem}

\begin{proof}
It follows immediately from Proposition \ref{prop:smallmodel-app} that Algorithm \ref{algorithm:hiunary} is a correct decision algorithm for \HI of $\DynFOI{1}$-programs.

\begin{algorithm}
    \caption{Deciding \HI for $\DynFOI{1}$-programs}\label{algorithm:hiunary}
    \begin{algorithmic}[1]
        \INPUT A $\DynFOIA{1}{m}$-program $\prog$ with $\ell$ input relations and quantifier depth $q$.
        \State Let $K$, $L$ and $T$ be as in Proposition \ref{prop:smallmodel-app}.
       \For{all empty databases $\emptyDB$ over domains $\{1,\ldots,n\}$ with $n\le (2K+T)(L+1)$}
        \For{all normal form insertion sequences $\alpha$ over $\{1,\ldots,n\}$}
          \LineIf {$\updateStateI{\alpha}{\emptyDB}$ is not homogeneous or not \lhi} {Reject.}
        \EndFor
       \EndFor
        \State Accept.
    \end{algorithmic}
  \end{algorithm}
\end{proof}

Using the same technique as used in the proof of Theorem \ref{theorem:emptiness:consistent:unaryDynProp}(b), history independence can be shown to be decidable for $\DynPropA{1}$-programs.
\begin{theorem}\label{theorem:hidynpropdecidable}
  \HI is decidable for $\DynPropA{1}$-programs.  
\end{theorem}
\begin{proof}
  Let $\prog$ be a $\DynPropIA{\ell}{1}$-program for some $\ell \in \N$. Recall that, according to Lemma \ref{lemma:hicharacterization}, for testing history independence it suffices to check that no non-locally history independent state can be reached by an insertion sequence in normal form. We argue that if a non-locally history independent state can be reached by $\prog$, then such a state with few tuples in the input relations can be reached as well. History independence can then be tested in a brute force manner by trying out insertion sequences for all input databases with few tuples.
  
  Suppose that $\state$ is a non-locally history independent state reachable by $\prog$ such that the number $N$ of tuples in input databases of $\state$ is minimal. In particular, $\prog$ is history independent for input databases with less than $N$ tuples, that is, all modification sequences $\alpha$ and $\alpha'$ yielding an input database with less than $N$ tuples also yield the same state. Let $\vec a$ be an $2\ell$-ary tuple that witnesses that $\state$ is not locally history independent, i.e.~there are two modifications on $\vec a$ that contradict (H1), (H2) or (H3). Further let $\gamma$ be the atomic type of $\vec a$. Now, using the same argument as in the proof of Theorem \ref{theorem:emptiness:consistent:unaryDynProp} as well as the history independence of $\prog$ for databases with less than $N$ tuples, one can show that for exhibiting a tuple of type $\gamma$ the number $N$ of input tuples does not have to be large. %
\end{proof}
}

\section{Conclusion}\label{section:conclusion}
\makeatletter{}%
In this work we studied the algorithmic properties of static analysis problems for (restrictions of) dynamic programs. Most of the results are summarized in Table~\ref{tab:results}. %
In general only very strong restrictions yield decidability.

The only cases left open are about $\DynProp$-programs when both the arity of the input and the arity of the auxiliary relations is at least~2. For such programs the status of history independence and emptiness of consistent remains open. We conjecture that for history independence the decidable fragment of $\DynProp$ is larger than exhibited here.

Our results will hopefully contribute to a better understanding of the power of dynamic programs. On the one hand the undecidability proofs show that very restricted dynamic programs can already simulate powerful machine models. It is natural to ask whether this power can be used to maintain other, more common queries. On the other hand the decidability results utilize limitations of the state space and the transition between states for classes of restricted programs. Such limitations can be a good starting point for the development of techniques for proving lower bounds for the respective fragments.

 \bibliography{bibliography}

 \shortVersion{
  }
\end{document}